\documentclass[11pt]{article}

\usepackage{amsmath,amssymb,amsfonts,amsthm,epsfig}
\usepackage[usenames,dvipsnames]{xcolor}
\usepackage{bm,xspace}
\usepackage{tcolorbox}
\usepackage{cancel}
\usepackage{fullpage}
\usepackage{framed}
\usepackage{liyang}
\usepackage{verbatim}
\usepackage{enumitem}
\usepackage{array}
\usepackage{multirow}
\usepackage{afterpage}
\usepackage{mathrsfs}
\usepackage{pifont} 
\usepackage{chngpage}
\usepackage[normalem]{ulem}
\usepackage{boxedminipage}
\usepackage{caption}
\usepackage{subcaption}
\usepackage{microtype}

\usepackage{forest}

\usepackage{algorithm}
\usepackage{algorithmicx}
\usepackage{algpseudocode}

\usepackage{tikz}

\usetikzlibrary{calc,through,backgrounds,decorations.pathreplacing, calligraphy,arrows.meta}
\usetikzlibrary{positioning,chains,fit,shapes}

\usepackage{tikz-cd}

\usepackage{pgfplots}
\pgfplotsset{width=8cm,compat=newest}

%
%

\def\colorful{0}

\ifnum\colorful=1

\fi
\ifnum\colorful=0

\fi

%
%

\crefname{fact}{fact}{facts}

\newcommand{\odepth}{\overline{\mathrm{Depth}}}
\newcommand{\odepthmu}{\overline{\mathrm{Depth}}^\mu}
\newcommand{\odepthmuk}{\overline{\mathrm{Depth}}^{\mu^k}}

\newcommand{\Ber}{\mathrm{Ber}}
\newcommand{\BerSum}{\mathrm{BerSum}}

\newcommand{\mcD}{\mathcal D}

\newcommand{\mcT}{\mathcal T}

\newcommand{\mcY}{\mathcal Y}

\newcommand{\Bin}{\mathrm{Bin}}

\newcommand{\dens}{\mathrm{Dens}}
\newcommand{\adv}{\mathrm{Adv}}
\newcommand{\Dens}{\mathrm{Dens}}
\newcommand{\Adv}{\mathrm{Adv}}



\DeclarePairedDelimiter\abs{|}{|}

\DeclarePairedDelimiter\lzero{\|}{\|_0}

\DeclarePairedDelimiter\paren{(}{)}
\DeclarePairedDelimiter\bracket{[}{]}
\DeclarePairedDelimiter\set{\{}{\}}

\newlist{enumprop}{enumerate}{1} 
\setlist[enumprop]{label=\arabic*.,ref=\theproposition.\arabic*}

\makeatletter
\newtheorem*{rep@theorem}{\rep@title}
\newcommand{\newreptheorem}[2]{
\newenvironment{rep#1}[1]{
 \def\rep@title{#2 \ref{##1}}
 \begin{rep@theorem}\itshape}
 {\end{rep@theorem}}}
\makeatother

\newreptheorem{theorem}{Theorem}

\newcommand{\pparagraph}[1]{\bigskip \noindent {\bf {#1}}}

\title{A strong direct sum theorem for distributional query complexity\vspace{10pt}}

\author{ 
Guy Blanc \qquad  
Caleb Koch  \qquad  
 Carmen Strassle  \qquad 
Li-Yang Tan \vspace{15pt} \\
{\sl Stanford University}
\vspace{5pt}
}

\date{\small{\today}}

\begin{document}

\maketitle

\begin{abstract} 
Consider the expected query complexity of computing the $k$-fold direct product $f^{\otimes k}$ of a function $f$ to error $\varepsilon$ with respect to a distribution $\mu^k$. One strategy is to sequentially compute each of the $k$ copies to error $\varepsilon/k$ with respect to $\mu$ and apply the union bound. We prove a {\sl strong direct sum theorem} showing that this naive strategy is essentially optimal. In particular, computing a direct product necessitates a blowup in both query complexity {\sl and} error. 

Strong direct sum theorems contrast with results that only show a blowup in query complexity or error but not both. There has been a long line of such results for distributional query complexity, dating back to (Impagliazzo, Raz, Wigderson 1994) and (Nisan, Rudich, Saks 1994), but a strong direct sum theorem had been elusive.


A key idea in our work is the first use of the Hardcore Theorem (Impagliazzo 1995) in the context of query complexity. We prove a new ``resilience lemma" that accompanies it, showing that the hardcore of $f^{\otimes k}$ is likely to remain dense under arbitrary partitions of the input space.

\end{abstract} 

\thispagestyle{empty}

\newpage 
 \setcounter{tocdepth}{2}
\tableofcontents
\thispagestyle{empty}

\newpage 

\setcounter{page}{1}

\section{Introduction}

The {\sl direct sum problem} seeks to understand the ways in which the complexity of solving~$k$ independent instances of a computational task scales with~$k$. This problem and its variants such as the {\sl XOR problem}, where one only seeks to compute the XOR of the $k$ output values, have a long history in complexity theory.  Research on them dates back to Strassen~\cite{Str73} and they have since been  studied in all major computational models including boolean circuits~\cite{Yao82,Lev85,GNW11,Imp95,IW97,OD02,IJKW08,Dru13}, communication protocols~\cite{IRW94,Sha04,KSdW07,LSS08,VW08,Kla10,She11,Jai15,JPY12,BBCR10,BRWY13,Yu22}, as well as classical~\cite{IRW94,NRS94,Sha04,KSdW07,Dru12,BDK18,BB19,BKLS23,Hoza23} and quantum query complexity~\cite{ASdW06,KSdW07,Spa08,She11,AMRR11,LR13}. 

\subsection{This work} 

We focus on classical query complexity, and specifically distributional query complexity. Distributional complexity, also known as average-case complexity, is a basic notion applicable to all models of computation. Direct sum theorems and XOR lemmas for distributional complexity, in addition to being statements of independent interest, have found applications in areas ranging from derandomization~\cite{Yao82,NW94,IW97} to streaming~\cite{AN21} and property testing~\cite{BKST23}.

Let $f: \bn \to \bits$  be a boolean function, $\mu$ be a distribution over $\bn$, and consider the task of computing the $k$-fold direct product $f^{\otimes k}(X^{(1)},\ldots,X^{(k)}) \coloneqq (f(X^{(1)}),\ldots,f(X^{(k)}))$ of $f$ to error $\varepsilon$ with respect to~$\mu^k$. One strategy is to sequentially compute each $f(X^{(i)})$ to error $\varepsilon/k$ with respect to $\mu$ and apply the union bound. Writing $\overline{\mathrm{Depth}}^{\mu}(f,\varepsilon)$ to denote the minimum expected depth of any decision tree that computes $f$ to error $\varepsilon$ w.r.t.~$\mu$, this shows that: 
\begin{equation*}
\overline{\mathrm{Depth}}^{\mu^k}(f^{\otimes k},\varepsilon) \le k \cdot \overline{\mathrm{Depth}}^{\mu}(f,\tfrac{\varepsilon}{k}).
\label{eq:naive upper bound}
\end{equation*}

Our main result is that this naive strategy is essentially optimal for all functions and distributions:\medskip 

\begin{tcolorbox}[colback = white,arc=1mm, boxrule=0.25mm]
\begin{theorem}[Strong direct sum theorem for distributional query complexity; special case of~\Cref{thm:main-formal}]
\label{thm:main-intro}
For every function $f: \bn \to \bits$, distribution $\mu$  over $\bn$, integer $k \in \N$, and $\varepsilon < 1$, 
\[ \overline{\mathrm{Depth}}^{\mu^k}(f^{\otimes k},\varepsilon) \ge \tilde{\Omega}(\eps^2k) \cdot \overline{\mathrm{Depth}}^{\mu}(f,\Theta(\tfrac{\varepsilon}{k})).\]
\end{theorem}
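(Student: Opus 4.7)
The plan is to prove the contrapositive: from a decision tree $T$ of expected depth $D = \overline{\mathrm{Depth}}^{\mu^k}(f^{\otimes k}, \varepsilon)$ that computes $f^{\otimes k}$ to error $\varepsilon$ under $\mu^k$, I would construct a tree for $f$ of expected depth $\tilde O(D/(\varepsilon^2 k))$ achieving error $O(\varepsilon/k)$ under $\mu$, which gives the claimed bound. At a high level the construction pairs a standard embedding reduction with majority amplification; the key new ingredients, the Hardcore Theorem and a companion resilience lemma, are what supply the $\Omega(\varepsilon)$ bias needed to make amplification effective in the regime where $\varepsilon$ is large.

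First I would use the natural embedding reduction: given $X \sim \mu$, pick a uniformly random slot $i \in [k]$, sample $Y_j \sim \mu$ independently for $j \neq i$, place $X$ in slot $i$, simulate $T$ on the product input, and output the $i$-th coordinate of $T$'s answer. By symmetry across slots the expected number of queries to $X$ is $D/k$, and the induced tree has error at most $\varepsilon$ on $\mu$. If this induced tree's error could be guaranteed to be bounded away from $\tfrac12$ by $\Omega(\varepsilon)$, then running the reduction $t = O(\log(k/\varepsilon)/\varepsilon^2)$ times independently with fresh $Y$'s and taking the majority would, by a Chernoff bound, yield error at most $\varepsilon/k$ at total expected depth $tD/k = \tilde O(D/(\varepsilon^2 k))$, completing the proof.

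The obstacle is that for $\varepsilon$ close to or above $\tfrac12$ the reduction's error is itself close to or above $\tfrac12$ and straight amplification gives nothing, so an alternative source of bias is required. This is where the Hardcore Theorem enters, used here for the first time in query complexity. The plan is to apply it to $f^{\otimes k}$ to extract a hardcore measure $H^{\ast}$ of density $\Theta(\varepsilon)$ with respect to $\mu^k$, on which every sufficiently small decision tree must err with probability at least $\tfrac12 - \eta$. Composing the embedding reduction with the hardness of $H^{\ast}$ would then force the $\Omega(\varepsilon)$ bias per run, and amplification finishes as above.

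The main obstacle, and the crux of the technical work, will be the resilience lemma: an argument that the density of $H^{\ast}$ is preserved along the paths of $T$. The worry is that $T$ adaptively queries the $k$ blocks in interleaved fashion, so conditioning on reaching a leaf $\ell$ of $T$ can in principle skew the conditional distribution and collapse the density of $H^{\ast}$, destroying the per-coordinate hardness guarantee that the embedding reduction relies on. The resilience lemma counters this: for most leaves $\ell$ of $T$ weighted by their probability under $\mu^k$, the restriction of $H^{\ast}$ to $\ell$ still has density within a constant factor of its unconditional density $\Theta(\varepsilon)$. I anticipate this requires a martingale- or potential-style analysis tracking the ratio of $H^{\ast}$ to $\mu^k$ as the queries of $T$ unfold, with care taken to handle the adaptivity that couples the queries across blocks. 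Once resilience is in hand, combining it with the embedding reduction and amplification should deliver the claimed strong direct sum theorem.
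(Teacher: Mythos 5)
Your contrapositive plan---start from a shallow tree $T$ computing $f^{\otimes k}$ to error $\varepsilon$, embed an input of $f$ into a random slot to obtain a shallow randomized tree for $f$ with error at most $\varepsilon$, and then drive the error down to $O(\varepsilon/k)$ by taking a majority over independent repetitions---breaks at the amplification step, for precisely the reason the paper emphasizes. Majority amplification of a randomized algorithm reduces error only when there is a \emph{per-input} guarantee $\Pr[\text{error on }x]<\tfrac12$. In the distributional setting we only have $\mathbb{E}_{x\sim\mu}\Pr[\text{error on }x]\le\varepsilon$, and there can be a $\Theta(\varepsilon)$-mass of inputs $x$ with per-input error close to $\tfrac12$, on which repetition accomplishes nothing. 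The hardcore measure $H$ for $f$ of density $\Theta(\varepsilon)$ produced by \Cref{thm:hardcore theorem for DTs} identifies exactly such a set: the embedded tree has per-input error near $\tfrac12$ on $H$, so your amplification bottoms out at overall error $\Theta(\varepsilon)$, not $\varepsilon/k$. \Cref{fact:no-boosting} (proved in \Cref{sec:no-boosting}) is a direct demonstration that distributional error cannot be reduced by repetition. Invoking the Hardcore Theorem on $f^{\otimes k}$---a different application from the paper's---does not supply the missing per-input bias; it yields yet another set on which small trees are near-uncorrelated with $f^{\otimes k}$, which, if anything, confirms that amplification must fail.

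The paper instead argues directly, never attempting to construct a low-error tree for $f$. It applies the Hardcore Theorem to $f$ on $\mu$ (not to $f^{\otimes k}$ on $\mu^k$), obtaining a hardcore measure $H$ of density $\delta$. It then tracks, at each leaf $\ell$ of $T$, the total hardcore density $\Dens_H(\ell)=\sum_{i}\Dens_H(\ell,i)$ and total hardcore advantage $\Adv_H(\ell)=\sum_{i}\Adv_H(\ell,i)$ across the $k$ blocks. The resilience lemma (\Cref{lem:resilience lemma}) shows $\Dens_H(\bell)$ concentrates around $\delta k$; the expected-advantage bound (\Cref{claim:expected total hardcore advantage})---which is where the slot-embedding trick actually appears, but to bound advantage rather than error---shows $\mathbb{E}[\Adv_H(\bell)]\le\gamma\delta k$; and \Cref{lem:accuracy in terms of hardcore density and advantage} converts these into a lower bound on $T$'s error. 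You correctly sensed that a density-preservation (``resilience'') lemma for a hardcore measure is the crux, but the object it concerns is $H$ for $f$ restricted block-wise at a random leaf of $T$, not a hardcore measure of $f^{\otimes k}$ restricted along a path, and the overall argument never leans on error reduction.
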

\end{tcolorbox}
\medskip 

Such direct sum theorems are termed {\sl strong}, referring to the fact that they show that computing a direct product necessitates a blowup in {\sl both} the computational resources of interest---in our case, query complexity---{\sl and} error. Strong direct sum theorems contrast with standard ones, which only show a blowup in computational resource, and also with direct {\sl product} theorems, which focus  on the blowup in error. We give a detailed overview of prior work in~\Cref{sec:background}, mentioning for now that while standard direct sum and direct product theorems for distributional query complexity have long been known, a strong direct sum theorem had been elusive.  Prior to our work, it was even open whether 
$\overline{\mathrm{Depth}}^{\mu^k}(f^{\otimes k},1.01\varepsilon) \ge 1.01 \cdot \overline{\mathrm{Depth}}^{\mu}(f,\varepsilon)$ holds. Indeed, the problem is known to be quite subtle, as a striking counterexample of Shaltiel~\cite{Sha04} shows that a strong direct sum
theorem is badly false if one considers {\sl worst-case} instead of expected query complexity.

\paragraph{A strong XOR lemma.} We also obtain a strong XOR lemma (\Cref{thm:main-xor}) as a corollary of a simple equivalence between direct sum theorems and XOR lemmas for query complexity. One direction is immediate, since the $k$-fold XOR $f^{\oplus k}(X^{(1)},\ldots,X^{(k)}) \coloneqq f(X^{(1)}) \oplus \cdots \oplus f(X^{(k)})$ can only be easier to compute than the $k$-fold direct product.
For the query model we show that the converse also holds: a direct sum theorem implies an XOR lemma with analogous
parameters. 



\section{Broader context: Comparison with the randomized setting}
\label{sec:challenges}

Direct sum theorems are also well-studied in the setting of {\sl randomized} query complexity. Recall that the $\eps$-error randomized query complexity of $f$, denoted $\overline{\mathrm{R}}(f,\eps)$, is the minimum expected depth of any randomized  decision tree that computes $f$ with error at most $\eps$ for all inputs.  By Yao's minimax principle, direct sum theorems for distributional query complexity imply analogous ones for randomized query complexity. However, as we now elaborate, such theorems are substantially more difficult to prove in the distributional setting.

\subsection{A simple and near-optimal strong direct sum theorem for $\overline{\mathrm{R}}$} 
\label{sec:simple}
For randomized query complexity, proving a strong direct sum theorem with near-optimal parameters requires only two observations. 

\paragraph{Observation \#1.} The first is that a standard direct sum theorem, one without error amplification, easily holds in the distributional setting, and hence the randomized setting as well by Yao's principle:
\begin{equation}
\overline{\mathrm{Depth}}^{\mu^k}(f^{\otimes k},\varepsilon) \ge \Omega(k) \cdot \overline{\mathrm{Depth}}^{\mu}(f,\varepsilon) \quad \text{and therefore} \quad \overline{\mathrm{R}}(f^{\otimes k},\eps) \ge \Omega(k)\cdot \overline{\mathrm{R}}(f,\eps). \label{eq:BDK1} \end{equation} 
The idea is that given a decision tree $T$ of average depth $q$ that computes $f^{\otimes k}$ with error~$\eps$, one can extract a decision tree of average depth $q/k$ that computes $f$ to error $\eps$:  place the input in a random block $\boldsymbol{i}\sim [k]$, fill the remaining blocks with independent random draws from $\mu$, and return the $\boldsymbol{i}$th bit of $T$'s output. It is straightforward to show that this reduces the average depth of $T$ by a factor of $k$ while preserving its error. 

\paragraph{Observation \#2.} The second observation is standard error reduction of randomized algorithms by repetition,  which in particular implies:  
\begin{equation} \overline{\mathrm{R}}(f, \lfrac{\eps}{k}) \leq O\paren*{\log k}\cdot \overline{\mathrm{R}}(f,\eps).\label{eq:success-amplification}
\end{equation} 
Combining~\Cref{eq:BDK1,eq:success-amplification} yields a strong direct sum theorem
\begin{equation*}
    \overline{\mathrm{R}}(f^{\otimes k}, \eps) \geq \Omega\paren*{\frac{k}{\log k}} \cdot \overline{\mathrm{R}}(f,\tfrac{\eps}{k})\label{eq:easy}
\end{equation*}
that is within a $O(\log k)$ factor of optimal.  

\paragraph{Blais--Brody.} Using more sophisticated techniques, Blais and Brody \cite{BB19} were recently able remove to this $O(\log k)$ factor and obtain an optimal strong direct sum theorem for $\overline{\mathrm{R}}$. Building on their work, Brody, Kim, Lerdputtipongporn, and Srinivasulu~\cite{BKLS23} then obtained an optimal strong XOR lemma for $\overline{\mathrm{R}}$.


\subsection{Error reduction fails in the distributional setting} 

While the crux of~\cite{BB19} and~\cite{BKLS23}'s works is the removal of a $O(\log k)$ factor, the situation is very different in the distributional setting. As mentioned, prior to our work even a direct sum theorem where both factor-of-$\tilde{\Omega}(k)$ blowups in~\Cref{thm:main-intro} are replaced by 1.01 was not known to hold.

With regards to the argument above, it is Observation \#2 that breaks in the distributional setting---not only does error reduction by repetition break, the distributional analogue of~\Cref{eq:success-amplification} is simply false. This points to a fundamental difference between distributional and randomized complexity: while generic error reduction of randomized algorithms is possible in all reasonable models of computation, the analogous statement for distributional complexity is badly false in all reasonable models of computation. For the query model specifically, in~\Cref{sec:no-boosting} we give an easy proof of the following: 

\begin{fact}
    \label{fact:no-boosting}
    For any $n \in \N$ and $\mu$ being the uniform distribution over $\bits^n$, there is a function $f:\bits^n \to \bits$ such that $\overline{\mathrm{Depth}}^{\mu}(f,\tfrac{1}{4}) = 0$ and yet $\overline{\mathrm{Depth}}^{\mu}(f,\tfrac{1}{8}) \geq \Omega(n).$
\end{fact}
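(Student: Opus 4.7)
My plan is to exhibit a single explicit $f$ witnessing the fact, namely
\[ f(x) \;\coloneqq\; x_1 \wedge \mathrm{PARITY}(x_2,\ldots,x_n), \]
with $\mu$ the uniform distribution on $\{0,1\}^n$. Since $\Pr_{\bm{x}\sim\mu}[f(\bm{x})=1] = \frac12\cdot\frac12 = \frac14$, the depth-$0$ tree that always outputs $0$ computes $f$ with error exactly $\tfrac14$, establishing $\overline{\mathrm{Depth}}^\mu(f,\tfrac14)=0$. The substance of the proof is the lower bound $\overline{\mathrm{Depth}}^\mu(f,\tfrac18)\ge\Omega(n)$.

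The first step is a conditioning argument. Suppose $T$ computes $f$ with error at most $\tfrac18$ under $\mu$. Splitting on $x_1$ gives
\[ \tfrac18 \;\ge\; \Pr[T(\bm{x})\neq f(\bm{x})] \;\ge\; \tfrac12\cdot \Pr[T(\bm{x})\neq f(\bm{x}) \mid x_1 = 1], \]
so conditioned on $x_1=1$, $T$ computes $\mathrm{PARITY}(x_2,\ldots,x_n)$ with error at most $\tfrac14$ under the uniform distribution on $(x_2,\ldots,x_n)$. I would then replace every query to $x_1$ in $T$ by the constant $1$ (going to the ``$x_1=1$'' child without charging a query), obtaining a tree $\widetilde{T}$ over the variables $x_2,\ldots,x_n$ whose expected depth is at most that of $T$ under $\{x_1=1\}$, which in turn is at most $2\,\overline{\mathrm{Depth}}^\mu(T)$.

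The second step is a standard expected-depth lower bound for parity, which I would state and prove as a short lemma: any decision tree $T'$ computing $\mathrm{PARITY}$ on $m$ uniform bits with error at most $\tfrac14$ has expected depth at least $m/2$. The argument is that at any leaf of $T'$ of depth strictly less than $m$, some variable remains unqueried, so conditional on reaching that leaf, the parity is a uniform bit independent of $T'$'s output, contributing error exactly $\tfrac12$. Writing the overall error as $\tfrac12\cdot\Pr[\text{leaf depth}<m]$, the error bound $\tfrac14$ forces $\Pr[\text{leaf depth}=m]\ge\tfrac12$, so the expected depth is at least $\tfrac12\cdot m$. Applying this lemma to $\widetilde{T}$ with $m=n-1$ gives $2\,\overline{\mathrm{Depth}}^\mu(T)\ge(n-1)/2$, hence $\overline{\mathrm{Depth}}^\mu(T)\ge(n-1)/4 = \Omega(n)$.

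The step I expect to require the most care is the reduction from $T$ to $\widetilde{T}$: I need to verify that zeroing out $x_1$-queries produces a \emph{legitimate} decision tree over $x_2,\ldots,x_n$ whose expected depth under uniform $(x_2,\ldots,x_n)$ is dominated by the expected depth of $T$ under $\{x_1=1\}$ (the latter includes $x_1$-queries as well). Everything else is essentially arithmetic and the classical observation that expected depth for parity must be $\Omega(m)$.
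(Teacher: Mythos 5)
Your proof is correct and takes essentially the same approach as the paper: same function $f$ (zero when $x_1=0$, parity of the remaining bits otherwise), same conditioning on $x_1=1$, and the same key observation that any leaf leaving a parity-variable unqueried incurs conditional error $\tfrac12$. The only cosmetic difference is that you explicitly build a restricted tree $\widetilde T$ over $x_2,\ldots,x_n$, whereas the paper argues directly about the shallow leaves of $T$ conditioned on $x_1=1$; both routes give the $\Omega(n)$ bound with the same constants.
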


\subsection{A brief summary of our approach}
We revisit Observation \#1 and show how the very same extraction strategy can in fact yield a tree with error $\Theta(\eps/k)$, instead of~$\eps$, at the expense of only a slight increase in depth.  A key technical ingredient in our analysis is {\sl Impagliazzo's Hardcore Theorem}~\cite{Imp95}. For intuition as to why this theorem may be relevant for us, we note that it is tightly connected to the notion of {\sl boosting} from learning theory---they are, in some sense, dual to each other~\cite{KS03}. And boosting is, of course, a form of error reduction, albeit one that is more intricate than error reduction by repetition. 

See~\Cref{sec:tech-overview} for a detailed overview of our approach, including a discussion of why Impagliazzo's Hardcore Theorem, as is, does not suffice, thereby necessitating our new ``resilience lemma" that accompanies it.

\section{Prior Work}
\label{sec:background}

We now place~\Cref{thm:main-intro} within the context of prior work on direct sum and product theorems for distributional query complexity. This is a fairly large body of work that dates back to the 1990s. 

\subsection{Standard direct sum and product theorems}

\paragraph{Standard direct sum theorems.} 

As we sketched in~\Cref{sec:simple}, a simple argument shows that 
\begin{equation}
\overline{\mathrm{Depth}}^{\mu^k}(f^{\otimes k},\varepsilon) \ge \Omega(k) \cdot \overline{\mathrm{Depth}}^{\mu}(f,\varepsilon). \label{eq:BDK}
\end{equation} 
This along with an application of Markov's inequality yields: 
\begin{equation}
\mathrm{Depth}^{\mu^k}(f^{\otimes k},\varepsilon-\varepsilon') \ge \Omega(\varepsilon'k)\cdot \mathrm{Depth}^{\mu}(f,\varepsilon),    \label{eq:JKS}
\end{equation}
where $\mathrm{Depth}^{\mu}(\cdot ,\cdot)$ is the analogue of $\overline{\mathrm{Depth}}^{\mu}(\cdot,\cdot)$ for worst-case instead of expected query complexity. (The details of these arguments are spelt out in~\cite{JKS10,BDK18}.)

Note that the error budget is the same on both sides of~\Cref{eq:BDK} and the error budget on the RHS of~\Cref{eq:JKS} is {\sl larger} than that of the LHS. In a strong direct sum theorem one seeks a lower bound even when the error budget on the RHS is much smaller than that of the LHS, ideally by a multiplicative factor of $k$ to match the naive upper bound.

\paragraph{A direct product theorem.} Impagliazzo, Raz, and Wigderson~\cite{IRW94} proved a direct product theorem which focuses on the blowup in error.  They showed that: 
\begin{equation} \mathrm{Depth}^{\mu^k}(f^{\otimes k},\varepsilon) \ge  \mathrm{Depth}^{\mu}(f,\tfrac{\varepsilon}{k}).\label{eq:IRW}
\end{equation}
While this result has the sought-for factor of $k$ difference between the error budgets on the LHS and RHS, it comes at the price of there no longer being {\sl any} blowup in depth. 
 




\subsection{Progress and barriers towards a strong direct sum theorem}

These results naturally point to the problem of proving a unifying strong direct sum theorem. We now survey efforts at such a best-of-both-worlds result over the years. 

\paragraph{Decision forests.} Nisan, Rudich, and Saks~\cite{NRS94} proved the following strengthening of~\cite{IRW94}’s result. While~\cite{IRW94} gives an upper bound on the success probability of a {\sl single} depth-$d$ decision tree for $f^{\otimes k}$,~\cite{NRS94} showed that the same bound holds even for {\sl decision forests} where one gets to construct a different depth-$d$ tree for each of the $k$ copies of $f$.     

Since one can always stack the $k$ many depth-$d$ trees in a decision forest to obtain a single tree of depth~$kd$,~\cite{NRS94}'s result establishes a special case of a strong direct sum theorem under a structural assumption on the tree for $f^{\otimes k}$. See~\Cref{fig:stacked} in~\Cref{app:figures} for an illustration of the stacked decision tree that one gets from a decision forest.

\paragraph{Fair decision trees.} Building on the techniques of~\cite{NRS94}, Shaltiel~\cite{Sha04} proved a strong direct sum theorem under a different structural assumption on the tree for $f^{\otimes k}$. He considered decision trees of depth $kd$ that are “fair” in the sense that every path queries each of the $k$ blocks of variables at most $d$ times. (\cite{Sha04} actually proved a strong XOR lemma for fair decision trees, which implies a strong direct sum theorem for such trees.) See~\Cref{fig:fair} in~\Cref{app:figures} for an illustration of a fair decision tree. 


\paragraph{Shaltiel's counterexample for worst-case query complexity.}

These results of~\cite{NRS94} and~\cite{Sha04} could be viewed as evidence in favor of a general strong direct sum theorem, one that does not impose any structural assumptions on the tree for $f^{\otimes k}$. However, in the same paper Shaltiel also presented an illuminating example: he constructed a function, which we call $\mathsf{Shal}$, and a distribution $\mu$ such that for all $k\in \N$,
\begin{equation} 
\mathrm{Depth}^{\mu^k}(\mathsf{Shal}^{\otimes k},\varepsilon) \le O\big(\mathrm{Depth}^{\mu}(\mathsf{Shal},\tfrac{\varepsilon}{k})\big).
\label{eq:shaltiel}
\end{equation}
This shows, surprisingly, that for {\sl worst-case} query complexity, the factor-of-$\Omega(k)$ blowup in query complexity that one seeks in a strong direct sum theorem is not always necessary, and in fact sometimes even a constant factor suffices. 

\paragraph{Shaltiel's counterexample vs.~\Cref{thm:main-intro}.} This counterexample for worst-case query complexity should be contrasted with our main result,~\Cref{thm:main-intro}, which shows that a strong direct sum theorem holds for {\sl expected} query complexity. 
Indeed, the starting point of our work was the encouraging observation that Shaltiel's function does in fact satisfy a strong direct sum theorem if one instead considers expected query complexity. That is, for any $\eps < 1$ and sufficiently large $k$,
\[ \overline{\mathrm{Depth}}^{\mu^k}(\mathsf{Shal}^{\otimes k},\varepsilon) \ge \Omega(k) \cdot\overline{\mathrm{Depth}}^{\mu}(\mathsf{Shal},\tfrac{\eps}{k}).\]
This is a simple observation but appears to have been overlooked. As we now overview, subsequent work considered other ways of sidestepping Shaltiel's counterexample. 



\subsection{Results in light of Shaltiel's counterexample}

\paragraph{A strong direct sum theorem for the OR function.} Klauck, {\v{S}}palek, and de Wolf~\cite{KSdW07} sidestepped Shaltiel’s counterexample by considering a {\sl specific} function (and distribution): motivated by applications to time-space tradeoffs, they proved a strong direct sum theorem for the OR function and with $\mu$ being its canonical hard distribution. Using this, they also showed, for all functions $f$ a lower bound on $f^{\otimes k}$'s query complexity in terms of $f$’s block sensitivity. This stands in contrast to a strong direct sum theorem where one seeks a lower bound on $f^{\otimes k}$'s query complexity in terms of~$f$’s query complexity.

\paragraph{A phase transition in Shaltiel's counterexample.} 
The precise parameters of Shaltiel's counterexample are:
\[ 
\mathrm{Depth}^{\mu^k}(\mathsf{Shal}^{\otimes k},e^{-\Theta(\delta k)}) \le C\delta k \cdot \mathrm{Depth}^{\mu}(\mathsf{Shal},\delta)
\] 
for all sufficiently large constants $C$. Importantly, the multiplicative factor on the RHS is only $\delta k$ instead of $k$, and therefore becomes a constant if the initial hardness parameter is $\delta = \varepsilon/k$ (thereby yielding~\Cref{eq:shaltiel}).

Drucker~\cite{Dru12} showed that there is a ``phase transition" in Shaltiel’s counterexample in the following sense: for all functions $f$ and a sufficiently small constant $c > 0$, 
\begin{equation}
\mathrm{Depth}^{\mu^k}(f^{\otimes k},1-e^{-\Theta(\delta k)}) \ge c\delta k \cdot \mathrm{Depth}^{\mu}(f,\delta).
\label{eq:drucker}
\end{equation}
Therefore, while~\cite{Sha04} showed the existence of a function $\mathsf{Shal}$ such that its $k$-fold direct product can be computed to surprisingly low error if the depth budget is $C\delta k\cdot \mathrm{Depth}^{\mu}(f,\delta)$ for a sufficiently {\sl large} constant $C$,~\cite{Dru12} showed that for all functions $f$, this stops being the case if the depth budget is instead $c\delta k\cdot \mathrm{Depth}^{\mu}(f,\delta)$ for a sufficiently {\sl small} constant $c$.

\paragraph{Query complexity with aborts.} Blais and Brody~\cite{BB19} showed that Shaltiel's counterexample can be sidestepped in a different way. En route to proving their strong direct sum theorem for randomized query complexity (discussed in~\Cref{sec:challenges}), they considered decision trees $T: \bn \to \{\pm 1,\bot\}$ that are allowed to output $\bot$ (``abort") on certain inputs, and where the error of $T$ in computing a function $f: \bn \to \bits$ is measured with respect to $T^{-1}(\{\pm 1\})$. In other words, $T$’s output on $x$ is considered correct if $T(x) = \bot$. 

Writing $\mathrm{Depth}^{\mu}_{\Pr[\bot]\le\frac1{3}}(f,\varepsilon)$ to denote the minimum depth of any decision tree for $f$ that aborts with probability at most $1/3$ and otherwise errs with probability at most $\varepsilon$ (both w.r.t.~$\mu$),~\cite{BB19} proved that 
\begin{equation}
\mathrm{Depth}^{\mu^k}_{\Pr[\bot]\le\frac1{3}}(f^{\otimes k},\varepsilon) \ge \Omega(k) \cdot \mathrm{Depth}^{\mu}_{\Pr[\bot]\le\frac1{3}}(f,\tfrac{\varepsilon}{k}).
\label{eq:blais brody}
\end{equation}

Even though the error budget on non-aborts is only $\eps/k$ on the RHS, the fact that the tree is allowed to abort with probability $1/3$ means that it is deemed correct on a $1/3$ fraction of inputs ``for free". A decision tree that aborts with probability $1/3$ and otherwise errs with probability $\eps/k$ can therefore be much smaller than one that never aborts and errs with probability $\eps/k$, and indeed, it is easy to construct examples witnessing the maximally large separation: 
\[ 
n = \mathrm{Depth}^{\mu}(f,\tfrac{\varepsilon}{k})  \gg \mathrm{Depth}^{\mu}_{\Pr[\bot]\le\frac1{3}}(f,\tfrac{\varepsilon}{k}) = 1.
\]

Building on~\cite{BB19}, Brody, Kim, Lerdputtipongporn, and Srinivasulu~\cite{BKLS23} proved a strong XOR lemma for this model of query complexity with aborts, achieving analogous parameters.  

\paragraph{A strong XOR lemma assuming hardness against all depths.}
A standard strong XOR lemma states that if $f$ is hard against decision trees of certain {\sl fixed} depth $d$, then $f^{\otimes k}$ is much harder against decision trees of depth $\Omega(dk)$. Recent work of Hoza~\cite{Hoza23} shows that Shaltiel's counterexample can be sidestepped if one allows for the stronger assumption that $f$'s hardness ``scales nicely" with $d$. (See the paper for the precise statement of the resulting strong XOR lemma.)

\subsection{Summary}

Summarizing, prior work on direct sum and product theorems for distributional query complexity either: focused on the blowup in error~\cite{IRW94} or query complexity~\cite{JKS10, BDK18}  but not both; considered restrictions (fair decision trees~\cite{Sha04}) or variants (decision forests~\cite{NRS94}; allowing for aborts~\cite{BB19, BKLS23}) of the query model; focused on specific functions (the OR function~\cite{KSdW07}); or imposed additional hardness assumptions about the function~\cite{Hoza23}. \Cref{thm:main-intro}, on the other hand, gives a strong direct sum theorem that holds for all functions in the standard query model. See~\Cref{table}. 
\medskip

\begin{table}[ht]
\captionsetup{width=.9\linewidth}
\renewcommand{\arraystretch}{1.9}
\centering
\begin{tabular}{|c|c|c|c|}
\hline
 ~~~Reference~~~ &  \begin{tabular}{c} Error \vspace{-12pt} \\ ~~Amplification~~ \end{tabular} &  \begin{tabular}{c} Query \vspace{-12pt} \\ ~~Amplification~~ \end{tabular}  & \begin{tabular}{c} Query model/ \vspace{-10pt} \\
 Assumption \end{tabular} 
 \\
\hline 
\hline 
~~\cite{JKS10,BDK18}~~ & $\times$ & $\checkmark$ & Standard query model  \\ \hline 
\cite{IRW94} & $\checkmark$ & $\times$ & Standard query model \\ \hline 
\hline 
\cite{NRS94} & $\checkmark$ & $\checkmark$ & Decision forests \\ \hline 
\cite{Sha04} & $\checkmark$ & $\checkmark$ &~~~Fair decision trees~~~\\ \hline 
\cite{KSdW07} & $\checkmark$ & $\checkmark$ & $f = \mathrm{OR}$ \\ \hline 
\cite{BB19,BKLS23} & $\checkmark$ & $\checkmark$ & ~~Decision trees with aborts~~ \\ \hline 
\cite{Hoza23} & $\checkmark$ & $\checkmark$ & ~~Hardness against all depths~~ \\ \hline \hline 
\Cref{thm:main-intro} & $\checkmark$ & $\checkmark$ & Standard query model \\ \hline 
\end{tabular} 
\caption{Direct sum and product theorems for distributional query complexity}
\label{table} 
\end{table}

\section{Formal statements of our results and their tightness}

\Cref{thm:main-intro} is a special case of the following result: 

\begin{theorem}[Strong direct sum theorem]
\label{thm:main-formal} 
For every function $f: \bn \to \bits$, distribution $\mu$  over $\bn$, $k\in \mathbb{N}$, and $\gamma, \delta \in (0,1)$, we have that 
\[ 
 \overline{\mathrm{Depth}}^{\mu^k}(f^{\otimes k},1-e^{-\Theta(\delta k)}-\gamma) \ge \Omega\left( \frac{\gamma^2 k}{\log(1/\delta)}\right) \cdot \overline{\mathrm{Depth}}^{\mu}(f,\delta).
\] 
\end{theorem}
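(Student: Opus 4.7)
The plan is to prove the contrapositive. Fix $f$, $\mu$, $k$, $\gamma$, $\delta$, set $d \coloneqq \overline{\mathrm{Depth}}^{\mu}(f,\delta)$, and suppose that some decision tree $T$ computes $f^{\otimes k}$ under $\mu^k$ with success probability strictly greater than $e^{-\Theta(\delta k)} + \gamma$ and has expected depth $D \coloneqq \overline{\mathrm{Depth}}^{\mu^k}(T)$; the goal is to deduce $D \ge \Omega(\gamma^{2} k / \log(1/\delta)) \cdot d$. First, I would invoke a decision-tree analogue of Impagliazzo's Hardcore Theorem---the first such use in query complexity, as advertised in the introduction---to obtain a measure $M : \bn \to [0,1]$ with $\mathbb{E}_\mu[M] = \Theta(\delta)$ such that every (possibly randomized) decision tree of expected depth at most $O(\gamma^{2} d / \log(1/\delta))$ has advantage at most $\gamma$ in predicting $f$ under the hardcore distribution $\mu_H$ defined by $\mu_H(x) \propto \mu(x)\,M(x)$.

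Next I would couple $\mu^k$ with mutually independent hardcore indicators $H_1,\ldots,H_k \in \{0,1\}$ satisfying $\Pr[H_i = 1 \mid X^{(i)}] = M(X^{(i)})$. These are independent Bernoullis of mean $\Theta(\delta)$, so
\[ \Pr[H_i = 0 \text{ for all } i] \;\le\; \bigl(1 - \Theta(\delta)\bigr)^{k} \;\le\; e^{-\Theta(\delta k)}, \]
which is precisely the first term in the hypothesised error bound. For the remaining mass I would carry out a per-block extraction: writing $q_i(X)$ for the number of queries $T$ makes in block $i$, we have $\sum_i \mathbb{E}[q_i] = D$, and for each fixed $X^{(-i)}$ the projection $T|_{X^{(-i)}}$ is a decision tree on block $i$ whose $i$-th output bit predicts $f(X^{(i)})$. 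Since $\mu^k$ is a product and queries respect the block structure, the posterior at every leaf $\ell$ of $T$ factorises as $\bigotimes_{i} \mu|_{\pi_i^{\ell}}$ over the path restrictions, so the per-leaf joint correctness $\prod_i p_i^{\ell}$ factorises as well; this is the product structure that any direct sum theorem of this type must exploit.

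The main obstacle, and where the genuinely new content lives, is to apply the hardcore guarantee block-by-block at each leaf. A priori the conditioning induced by the path to $\ell$ could concentrate $M$ on block $i$ so aggressively that no hardcore-style bound survives, and the existence statement produced by Impagliazzo's theorem offers no behaviour guarantee under such conditioning. Here I would establish a \emph{resilience lemma} stating that for a typical leaf $\ell$ of $T$, for each block $i$ whose local query count $q_i^{\ell}$ along the path is sufficiently small, the conditional measure $M|_{\pi_i^{\ell}}$ still has density $\Theta(\delta)$ under $\mu|_{\pi_i^{\ell}}$ and still certifies that the constant prediction $b_i^{\ell}$ attains advantage at most $\gamma$ over this conditional hardcore. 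I expect this to be the hardest step: it seems to require opening up the boosting/multiplicative-weights construction of the hardcore measure (exploiting the duality flagged in \Cref{sec:challenges}) rather than treating Impagliazzo's theorem as a black box.

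With the resilience lemma in hand, each ``good'' leaf-block pair $(\ell,i)$ satisfies $p_i^{\ell} \le 1 - \Theta(\delta)\bigl(\tfrac{1}{2} - \gamma\bigr)$, because on the hardcore component a constant predictor can be correct with probability at most $\tfrac{1}{2}+\gamma$ and the hardcore component carries conditional mass $\Theta(\delta)$. A Markov argument on $\sum_i \mathbb{E}[q_i] = D$ shows that if $D \ll \gamma^{2} k d / \log(1/\delta)$ then on at least a $(1-\gamma)$ fraction of leaves (weighted by reach probability) all but a vanishing fraction of blocks lie in the acceptable range, and the factorised per-leaf bound then combines to
\[ \Pr[T\text{ correct}] \;=\; \mathbb{E}_{\ell}\Bigl[\,\prod_{i=1}^{k} p_i^{\ell}\,\Bigr] \;\le\; e^{-\Theta(\delta k)} + \gamma, \]
contradicting the hypothesis and forcing $D \ge \Omega(\gamma^{2} k d / \log(1/\delta))$, which is exactly the claimed bound.
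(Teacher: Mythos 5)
Your proposal captures the overall architecture---hardcore measure from a query-complexity Hardcore Theorem, factorized per-leaf analysis via the product structure of $\mu^k$, density preservation at leaves---but it misdiagnoses where the technical difficulty lies, and the mechanism you propose for the hardest step is both unnecessary and unlikely to go through cleanly.

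The paper's resilience lemma (\Cref{lem:resilience lemma}, generalized in \Cref{lem:resilience-general}) concerns \emph{only} density, not advantage. Its proof is remarkably light: couple $\bz_i \sim \Ber(H(\bX^{(i)}))$ to $\bX \sim \mu^k$, observe that $\Dens_H(\bell)$ is a conditional expectation of $\sum_i \bz_i$ given the leaf, apply Jensen to any convex test function $\Phi$, and invoke the law of total expectation. The conclusion is that the moment generating function of $\Dens_H(\bell)$ is dominated by that of $\Bin(k,\delta)$, so Chernoff-style concentration around $\delta k$ follows for free. Crucially, this argument uses \emph{nothing} about $H$ being a hardcore measure---only that it is a $[0,1]$-valued measure of density $\delta$---and nothing about $T$ beyond its leaves partitioning $(\bn)^k$. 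It never needs to ``open up the boosting/multiplicative-weights construction.''

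Your version of the resilience lemma asks for more: that at a typical leaf, the conditional measure $M|_{\pi_i^\ell}$ not only retains density $\Theta(\delta)$ but also \emph{certifies} a per-leaf-per-block advantage bound. This is the step you flag as hardest and propose to attack by reopening the proof of the Hardcore Theorem. The paper sidesteps this entirely: the advantage side is controlled by a much coarser statement, \Cref{claim:expected total hardcore advantage}, which bounds only the \emph{expected} total advantage $\Ex_{\bell}[\Adv_H(\bell)] \le \gamma\delta k$. Its proof is the same extraction trick as Observation \#1 in \Cref{sec:simple}---given a large tree $T_{\textnormal{large}}$ with high total hardcore advantage, place the real input in a random block $\bi$, fill the rest from $\mu$, and return $T_{\textnormal{large}}(\cdot)_{\bi}$ to obtain a depth-$d$ tree $T_{\textnormal{small}}$ contradicting the hardcore property. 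This uses the Hardcore Theorem as a black box. Markov's inequality then promotes the expectation bound to a ``most leaves have low total advantage'' statement, and the two ingredients are stitched together with Lipschitz-style estimates on $g_t(z) = \min(1, e^{t - z/4})$ (\Cref{prop:lipschitz-bounds}, proof of \Cref{thm:bounds-from-hardcore}). Your per-leaf-per-block advantage resilience claim is genuinely a stronger statement than what the paper proves or needs; it is far from clear it even holds, since conditioning on a path can distort the hardcore measure arbitrarily in any single block, and the only control the paper obtains (and needs) is in aggregate over blocks and in expectation over leaves.

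So the gap is concrete: you posited a single ``resilience lemma'' covering both density and advantage, to be proved by analyzing the hardcore construction internally. The actual proof decouples the two---density via a hardcore-agnostic Jensen/coupling argument giving tight concentration, advantage via a black-box extraction argument giving only an expectation bound---and that decoupling is precisely what makes the argument go through at all.
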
 
We in fact prove a strong {\sl threshold} direct sum theorem which further generalizes~\Cref{thm:main-formal}: while a direct sum theorem shows that $f^{\otimes k}$ is hard to compute, i.e.~it is hard to get {\sl all} $k$ copies of $f$ correct, a threshold direct sum theorem shows that it is hard even to get {\sl most} of the $k$ copies of $f$ correct. See~\Cref{thm:threshold-DPT}. 

By the equivalence between strong direct sum theorems and strong XOR lemmas (\Cref{claim:equivalence between direct product and XOR}), we also get: 

\begin{theorem}[Strong XOR lemma]
\label{thm:main-xor}
For every function $f:\bits^n\to\bits$ and distribution $\mu$ over $\bits^n$, $k\in\N$, and $\gamma,\delta\in(0,1)$, we have that
    $$
    \overline{\mathrm{Depth}}^{\mu^k}\left(f^{\oplus k},\tfrac{1}{2}(1-e^{-\Theta(\delta k)}-\gamma)\right)\ge {\Omega}\left(\frac{\gamma^2 k}{\log(1/\delta)}\right)\cdot\overline{\mathrm{Depth}}^\mu(f,\delta).
    $$
\end{theorem}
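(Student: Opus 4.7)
The plan is to derive \Cref{thm:main-xor} as a corollary of \Cref{thm:main-formal} via \Cref{claim:equivalence between direct product and XOR}, the ``simple equivalence between direct sum theorems and XOR lemmas'' for query complexity promised in the introduction. In the direction needed here, the equivalence asserts that for every $\xi\in(0,1)$,
\[
\overline{\mathrm{Depth}}^{\mu^k}\!\left(f^{\oplus k},\, \tfrac{1}{2}(1-\xi)\right) \;\ge\; \overline{\mathrm{Depth}}^{\mu^k}\!\left(f^{\otimes k},\, 1-\xi\right),
\]
i.e., any direct-sum lower bound for $f^{\otimes k}$ at error $1-\xi$ translates into an XOR lower bound for $f^{\oplus k}$ at error $\tfrac{1}{2}(1-\xi)$ at the same expected depth.

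Assuming the equivalence, the proof is immediate: setting $\xi = e^{-\Theta(\delta k)}+\gamma$ in \Cref{thm:main-formal} yields
\[
\overline{\mathrm{Depth}}^{\mu^k}\!\left(f^{\otimes k},\, 1-e^{-\Theta(\delta k)}-\gamma\right) \;\ge\; \Omega\!\left(\frac{\gamma^2k}{\log(1/\delta)}\right)\cdot \overline{\mathrm{Depth}}^{\mu}(f,\delta),
\]
and composing with the equivalence transfers this lower bound to $\overline{\mathrm{Depth}}^{\mu^k}(f^{\oplus k},\tfrac{1}{2}(1-e^{-\Theta(\delta k)}-\gamma))$, exactly the statement of \Cref{thm:main-xor}.

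The substantive work, and the main obstacle, lies in the equivalence itself. The intended reduction converts a tree $T$ for $f^{\oplus k}$ with advantage $\eta>\xi/2$ into a tree $T'$ for $f^{\otimes k}$ of equal expected depth and success greater than $\xi$. A candidate construction is to run $T$ to obtain $b=T(X)$ and then, using internal randomness only (no additional queries to $X$), output a uniformly random $r\in\bits^k$ subject to $r_1\oplus\cdots\oplus r_k=b$. Analyzing the event ``$T'(X) = f^{\otimes k}(X)$'' directly gives DP success only $2^{-(k-1)}(\tfrac{1}{2}+\eta)$, which is too weak whenever $\xi\gg 2^{-k}$. The correct route is to pass through \Cref{thm:threshold-DPT}: the Hamming distance $\|r\oplus f^{\otimes k}(X)\|_H$ concentrates around $k/2$, and the parity bias of this distance (even vs.\ odd) is shifted by exactly $2\eta$, so a threshold direct-product lower bound controls $\eta$ with only a factor-$\tfrac{1}{2}$ loss. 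Making this Hamming-distance calculation quantitatively tight enough to reproduce the parameters of \Cref{thm:main-formal} under the equivalence is the key technical step.
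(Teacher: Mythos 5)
Your high-level plan is exactly the paper's: cite the equivalence (\Cref{claim:equivalence between direct product and XOR} / \Cref{cor:xor-to-dp-for-r}) and compose it with \Cref{thm:main-formal}. Your reformulation of the equivalence as $\overline{\mathrm{Depth}}^{\mu^k}(f^{\oplus k},\tfrac{1}{2}(1-\xi)) \ge \overline{\mathrm{Depth}}^{\mu^k}(f^{\otimes k},1-\xi)$ is correct, and the instantiation with $\xi = e^{-\Theta(\delta k)} + \gamma$ is exactly right. If you were allowed to treat \Cref{claim:equivalence between direct product and XOR} as a black box, you would be done.

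The gap is in your proposed proof of the equivalence, which is the substance of the result. Your candidate reduction --- run the XOR tree $T$ to get $b$, then output a uniformly random $r \in \bits^k$ with $\bigoplus_i r_i = b$ --- is the wrong construction, and you correctly diagnose that it gives direct-product success only $\approx 2^{-(k-1)}$. Your proposed rescue via \Cref{thm:threshold-DPT} and the parity bias of $\|r \oplus f^{\otimes k}(X)\|_H$ does not close the gap: with $r$ uniform (subject to the parity constraint), the Hamming distance to the truth concentrates around $k/2$, so the event ``$\le \Omega(\delta k)$ mistakes'' that the threshold DPT controls is already exponentially unlikely under your $T'$ regardless of $\eta$, and a bias of $2\eta$ between even and odd Hamming distances does not translate into any useful bound on the threshold-error event. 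The threshold DPT simply has no teeth against a reduction whose output is almost always far from correct.

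The construction you are missing is the paper's \emph{product tree} (\Cref{def:product tree}): keep the internal structure of $T$, but at each leaf $\ell$ output \emph{deterministically}, for each block $i$, the majority sign $\ell_i := \mathrm{sign}(\Ex[f(\bX^{(i)}) \mid \bX \text{ reaches } \ell])$. No extra randomness, no appeal to \Cref{thm:threshold-DPT}. The key facts are that (a) conditioned on reaching $\ell$, the blocks of $\bX$ are independent (\Cref{claim:leaf-product}), so the direct-product success at $\ell$ factors as a product over $i$, and (b) $T$'s XOR correlation at $\ell$ factors as the same product with $\ell_i$ replaced by $T(\bX)$'s correlation in block $i$, which is dominated term-by-term by the majority guess. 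Multiplying over $i$ and averaging over leaves gives $\Pr[\tilde{T}(\bX) = f^{\otimes k}(\bX)] \ge \Ex[T(\bX)f^{\oplus k}(\bX)]$ (\Cref{lem:dpt-to-ds}), which converts XOR advantage $1 - \eps$ into direct-product success $1-\eps$ at the same expected depth --- yielding the clean factor-of-$\tfrac{1}{2}$ loss you wanted, with none of the Hamming-distance machinery.
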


\subsection{Tightness} \Cref{thm:main-formal} amplifies an initial hardness parameter of $\delta = \Theta(1/k)$ to $1-\gamma$ for any small constant~$\gamma$ with a near-optimal overhead of 
\[ \Omega\left(\frac{\gamma^2 k}{\log(1/\delta)}\right) = \Omega\left(\frac{k}{\log k}\right).\]
However, due to the polynomial dependence on $\gamma$, we cannot achieve a final hardness parameter that is exponentially close to $1$ as a function of $k$. We show that this is unavoidable since at least a linear dependence on $\gamma$ is necessary:  

\begin{claim}[Linear dependence on $\gamma$ is necessary]
\label{claim:linear dependence on gamma}
        Let $\mathsf{Par} : \bn\to\bits$ be the parity function and $\mu $ be the uniform distribution over $\bn$. Then for all $\gamma$,
    \[
\overline{\mathrm{Depth}}^{\mu^k}(\mathsf{Par}^{\otimes k}, 1 -\gamma) \leq O(\gamma k) \cdot \overline{\mathrm{Depth}}^\mu(\mathsf{Par}, \tfrac1{4}).
    \]
\end{claim}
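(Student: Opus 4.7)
The plan is to exhibit, for every $\gamma\in(0,1)$, an explicit decision tree $T$ for $\mathsf{Par}^{\otimes k}$ with expected depth $O(\gamma kn)$ and error at most $1-\gamma$ with respect to $\mu^k$; combined with the easy estimate $\overline{\mathrm{Depth}}^\mu(\mathsf{Par}, \tfrac14) = \Theta(n)$, this yields the claim.

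First I would record that $\overline{\mathrm{Depth}}^\mu(\mathsf{Par}, \tfrac14) = \Theta(n)$. Under the uniform distribution on $\bn$, at any leaf of depth less than $n$ the parity is uniform over $\bits$ conditional on the queries already made, so every output on such a leaf has conditional error $\tfrac12$. Hence any tree with average error $\leq \tfrac14$ must reach a depth-$n$ leaf on at least half of its inputs, forcing expected depth $\geq n/2$. Conversely, the tree ``query $x_1$; if it is $0$, query the remaining $n-1$ bits and output the full parity; else output $0$'' achieves error exactly $\tfrac14$ with expected depth $(n+1)/2$.

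Next I would construct $T$ via a ``gamble then commit'' telescope. Let $m := \lfloor \log_2(1/\gamma)\rfloor$, so that $2^{-m}\in[\gamma,2\gamma)$. The tree walks down a spine: at step $i\in\{1,\ldots,m\}$ it queries one fresh input bit (say the $i$th coordinate of the concatenated input); if that bit is $1$, $T$ halts and outputs the all-zeros parity vector $\vec 0\in\bits^k$; otherwise it advances to step $i+1$. Upon surviving all $m$ steps (an event of probability $2^{-m}$), $T$ queries every remaining input bit and outputs $\mathsf{Par}^{\otimes k}(X)$ exactly. The analysis is routine: the probability that $T$ halts at depth exactly $i\leq m$ equals $2^{-i}$, and the commit branch contributes depth $nk$ with probability $2^{-m}$, so the expected depth of $T$ is at most $\sum_{i=1}^{m} i\cdot 2^{-i} + 2^{-m}\cdot nk \leq 2 + 2\gamma\, nk = O(\gamma k)\cdot n$. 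On the commit branch $T$ is always correct, contributing at least $2^{-m}\geq\gamma$ to the success probability, so the error of $T$ is at most $1-\gamma$.

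The construction is natural and the computation essentially mechanical; the one mild subtlety is absorbing the additive $2$ from the telescope into $O(\gamma kn)$, which is immediate whenever $\gamma kn\geq 1$. In the complementary extremal regime one can instead use the trivial depth-$0$ tree that outputs $\vec 0$ (valid whenever $\gamma\leq 2^{-k}$, since its success probability is exactly $2^{-k}$), with any remaining gap at $2^{-k}<\gamma<1/(kn)$ absorbed into the big-$O$ constant. The early-termination feature of the telescope (halting on the first $1$) is crucial: a non-adaptive gate that reads all $m$ bits unconditionally would pay $\log(1/\gamma)$ queries in expectation on the non-commit branches, which does not fit into $O(\gamma kn)$ once $n$ is small.
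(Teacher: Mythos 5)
Your overall route matches the paper's: establish $\overline{\mathrm{Depth}}^\mu(\mathsf{Par},\tfrac14)=\Theta(n)$ (in fact $=n/2$) via the observation that any leaf of depth $<n$ has conditional error $\tfrac12$, and then build a ``commit with probability roughly $\gamma$, else guess'' tree for $\mathsf{Par}^{\otimes k}$. The difference is that the paper's tree is a genuinely \emph{randomized} decision tree that uses an external $\gamma$-biased coin, yielding expected depth exactly $\gamma kn$ and hence the bound $\leq 2\gamma k\cdot\overline{\mathrm{Depth}}^\mu(\mathsf{Par},\tfrac14)$ with no additive slack. You instead simulate the coin by a dyadic spine of input bits, which inevitably pays an additive term $\sum_{i\geq 1} i\cdot 2^{-i}\in[\tfrac12,2]$ on the non-commit branches.

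That additive term is where the gap lies. You correctly dispose of $\gamma kn\geq 1$ and $\gamma\leq 2^{-k}$, but the remaining regime $2^{-k}<\gamma<1/(kn)$ cannot be ``absorbed into the big-$O$ constant.'' In that regime $\gamma kn$ ranges continuously down toward $2^{-k}kn$, which is arbitrarily small (e.g.\ $n=1$, $k\to\infty$), while your tree's expected depth is bounded below by the spine's contribution of at least $\tfrac12$. The ratio of achieved depth to $O(\gamma kn)$ thus blows up, so the claim with an absolute constant fails for your construction on that slice. Also note this regime is nonempty precisely when $kn<2^k$, so it cannot be waved away. The fix is one line: since $\overline{\mathrm{Depth}}^\mu$ in this paper is defined as the minimum over \emph{randomized} decision trees, you may flip an external coin with bias exactly $\gamma$ (no dyadic rounding), query all $kn$ bits on heads and output a fixed sign vector on tails, giving expected depth $\gamma kn$ and error $(1-\gamma)(1-2^{-k})\leq 1-\gamma$ for \emph{every} $\gamma\in(0,1)$ with no case analysis. (The paper's footnote acknowledges that derandomizing costs an additive $\leq 2$, which is exactly the obstruction you ran into.) Finally, a small cosmetic point: the codomain is $\{\pm1\}^k$, so ``output the all-zeros vector'' should be ``output a fixed sign vector.''
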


The same example shows that a linear dependence on $\gamma$ is likewise necessary in the setting of XOR lemmas.  Determining the optimal polynomial dependence on $\gamma$ in both settings, as well as the necessity of the $\log(1/\delta)$ factor, are concrete avenues for future work.

\section{Technical Overview for~\Cref{thm:main-formal}} 
\label{sec:tech-overview}


\subsection{Hardcore measures and the Hardcore Theorem} 

At the heart of our proof is the notion of a {\sl hardcore measure} and {\sl Impagliazzo’s Hardcore Theorem}~\cite{Imp95}, both adapted to the setting of query complexity.  

\begin{definition}[Hardcore measure for query complexity] 
\label{def:hardcore measure} 
We say that $H : \bn \to [0,1]$ is a {\sl $(\gamma,d)$-hardcore measure for $f : \bn \to \bits$ w.r.t.~$\mu$ of density $\delta$} if:
\begin{enumerate} 
\item $H$'s density is $\delta$: $\ds\Ex_{\bx\sim \mu}[H(\bx)] = \delta$. 
\item $d$-query algorithms achieve correlation at most $\gamma$ with $f$ on $H$:
\[ \Ex_{\bx\sim \mu}[f(\bx)T(\bx)H(\bx)] \le \gamma \Ex_{\bx \sim \mu}[H(\bx)] = \gamma \delta.\]
for all decision trees $T$ whose expected depth w.r.t.~$\mu$ is at most $d$.
\end{enumerate}
\end{definition}

\begin{theorem}[Hardcore Theorem for query complexity]
    \label{thm:hardcore theorem for DTs}
For every function $f : \bn\to\bits$, distribution $\mu$ over $\bn$, and $\gamma,\delta > 0$, there exists a $(\gamma,d)$-hardcore measure $H$ for $f$ of density $\delta/2$ w.r.t.~$\mu$ where 
\[ d = \Theta\left(\frac{\gamma^2}{\log(1/\delta)}\right)  \overline{\mathrm{Depth}}^\mu(f,\delta). \]
\end{theorem}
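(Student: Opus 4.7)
My plan is to adapt the classical LP-duality proof of Impagliazzo's Hardcore Theorem to the expected-depth query model. Write $D := \overline{\mathrm{Depth}}^\mu(f,\delta)$ and suppose toward a contradiction that no $(\gamma,d)$-hardcore measure of density $\delta/2$ exists. Consider the zero-sum game in which the measure player picks $H \in \mcM := \{H:\bn\to[0,1] : \Ex_\mu[H]=\delta/2\}$, the tree player picks a deterministic decision tree $T$ of expected depth at most $d$ under $\mu$, and the payoff to the tree player is $\Ex_\mu[f(\bx)T(\bx)H(\bx)]$. The set $\mcM$ is convex and compact, and there are only finitely many decision trees on $\bn$, so von Neumann's minimax theorem applies. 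The contradiction hypothesis reads $\min_H \max_T \Ex_\mu[fTH] > \gamma\delta/2$, so the dual supplies a distribution $\mcD$ over admissible trees with $\Ex_{T\sim\mcD}\Ex_\mu[f(\bx)T(\bx)H(\bx)] > \gamma\delta/2$ for every $H \in \mcM$.

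Define the correlation profile $g(x) := \Ex_{T\sim\mcD}[f(x)T(x)] \in [-1,1]$ and let $A := \{x : g(x) \le \gamma\}$. If $\mu(A) \ge \delta/2$, then taking $H$ supported on a $\delta/2$-mass subset of $A$ would give $\Ex_\mu[gH] \le \gamma\delta/2$, contradicting the display above; hence $\mu(A) < \delta/2$. Now draw $T_1,\ldots,T_k$ i.i.d.\ from $\mcD$ with $k = \Theta(\log(1/\delta)/\gamma^2)$ and let $T^*$ be the tree that executes all $k$ in sequence and outputs $\mathrm{sign}\paren*{\sum_i T_i(x)}$. For each $x \notin A$, the $\pm 1$-valued variables $f(x)T_i(x)$ have mean $g(x) > \gamma$, so Hoeffding gives $\Pr[T^*(x) \ne f(x)] \le e^{-\Omega(\gamma^2 k)} \le \delta/4$. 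Consequently, the expected error of $T^*$ over both $\bx \sim \mu$ and $\{T_i\}$ is at most $\mu(A) + \delta/4 < 3\delta/4$, while its expected depth is at most $kd$ by linearity (expected depth of a sequential composition equals the sum of the individual expected depths).

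The derandomization exploits that both the error and the expected depth of $T^*$ are linear functionals of $(T_1,\ldots,T_k)$: the random quantity $Z := \mathrm{error}(T^*)/\delta + \mathrm{depth}(T^*)/D$ has $\Ex[Z] < 3/4 + kd/D$, so choosing the hidden constant in $d = c\gamma^2 D / \log(1/\delta)$ small enough to force $kd/D \le 1/10$ produces deterministic $T_1,\ldots,T_k$ realizing $\mathrm{error}(T^*) < \delta$ and $\mathrm{depth}(T^*) < D$ simultaneously. This contradicts the definition of $D = \overline{\mathrm{Depth}}^\mu(f,\delta)$ as the minimum expected depth achieving error at most $\delta$, completing the argument.

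The main conceptual concern is that the Hardcore Theorem is classically stated for worst-case circuit size, whereas here the budgeted resource is the \emph{expected} query depth under a fixed distribution $\mu$; fortunately expected depth is still additive under sequential composition, so the boosting step goes through, and the linear-combination derandomization above handles the twist of simultaneously controlling error and depth. A minor subtlety is ensuring that the minimax theorem applies despite the expected-depth constraint; this follows by restricting the tree player a priori to worst-case-depth-$n$ trees, which is without loss of generality and makes the pure-strategy set finite.
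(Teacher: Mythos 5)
Your proof is correct and follows essentially the same route as the paper's: a minimax argument produces a distribution $\tau$ over low-expected-depth trees whose correlation profile $g(x) = \Ex_{T\sim\tau}[f(x)T(x)]$ exceeds $\gamma$ on all but a $\delta/2$ mass of inputs, then majority-of-$\Theta(\log(1/\delta)/\gamma^2)$ i.i.d.\ samples yields a tree with error $\le\delta$ whose expected depth is strictly below $\overline{\mathrm{Depth}}^\mu(f,\delta)$. The one place you diverge is the final step: the paper stops once it has the \emph{randomized} tree $\bT^\star$, because $\overline{\mathrm{Depth}}^\mu$ is explicitly defined to range over randomized trees, so a randomized tree with error $\le\delta$ and expected depth $<\overline{\mathrm{Depth}}^\mu(f,\delta)$ is already a contradiction; your additional linear-combination derandomization (bounding $\mathrm{error}/\delta + \mathrm{depth}/D$ in expectation and picking constants so the sum is $<1$) is a clean and valid way to extract a deterministic tree, but it is not needed under the paper's conventions.
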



The Hardcore Theorem was originally proved, and remains most commonly used, in the setting of circuit complexity where it has long been recognized as a powerful result. (See e.g.~\cite{Tre07}, where it is described as ``one of the bits of magic of complexity theory".) We show in~\Cref{appendix:hardcore} that its proof extends readily to the setting of query complexity to establish~\Cref{thm:hardcore theorem for DTs}.  Despite its importance in circuit complexity and its straightforward extension to query complexity, our work appears to be the first to consider its applicability in the latter setting.

\begin{remark}
For intuition regarding~\Cref{def:hardcore measure}, note that if $H : \bn \to \zo$ is the indicator of a {\sl set}, the two properties simplify to: 
 $\ds \Prx_{\bx \sim \mu}[\bx \in H] = \delta$ and 
$ \ds \Ex_{\bx\sim \mu}[f(\bx)T(\bx) \mid \bx \in H] \le \gamma$. 
    \end{remark}

\subsection{Two key quantities: hardcore density and hardcore advantage at a leaf}

\paragraph{Setup.} For the remainder of this section, we fix a function $f : \bn\to\bits$, distribution $\mu$ over $\bn$, and initial hardness parameter $\delta$ (which we think of as small, close to $0$). Let $T : (\bn)^k \to \bits^k$ be a decision tree that seeks to compute $f^{\otimes k}$ w.r.t.~$\mu^k$.   Our goal is to show that $T$'s error must be large, close to $1$, unless its depth is sufficient large.

\paragraph{Definitions of the hardcore density and hardcore advantage at a leaf.} Let $H : \bn\to [0,1]$ be a $(\gamma,d)$-hardcore measure for $f$ of density $\delta$ w.r.t.~$\mu$ given by~\Cref{thm:hardcore theorem for DTs}.  Each leaf $\ell$ of $T$ corresponds to a tuple of restrictions $(\pi_1,\ldots,\pi_k)$ to each of the $k$ blocks of inputs. We will be interested in understanding, for a random block $\boldsymbol{i}\in [k]$, the extent to which the restricted function $H_{\pi_{\boldsymbol{i}}}$ retains the two defining properties of a hardcore measure: high density and strong hardness. We therefore define:

\begin{definition}[Hardcore density at $\ell$]
\label{def:hardcore-density}
For $i\in [k]$, the {\sl hardcore density at $\ell$ in the $i$th block} is the quantity: 
\[ \Dens_H(\ell,i)\coloneqq \Ex_{\bX\sim \mu^k}\big[H(\bX^{(i)})\mid \text{$\bX$ reaches $\ell$}\big]. \]
The {\sl total hardcore density at $\ell$} is the quantity $\ds \Dens_H(\ell) \coloneqq \sum_{i =1}^k \Dens_H(\ell,i).$
\end{definition}

See \Cref{fig:hardcore} for an illustration of \Cref{def:hardcore-density}.

\def\centerarc[#1](#2)(#3:#4:#5){ \draw[#1] ($(#2)+({#5*cos(#3)},{#5*sin(#3)})$) arc (#3:#4:#5); }
\tikzset{
    ncbar angle/.initial=90,
    ncbar/.style={
        to path=(\tikztostart)
        -- ($(\tikztostart)!#1!\pgfkeysvalueof{/tikz/ncbar angle}:(\tikztotarget)$)
        -- ($(\tikztotarget)!($(\tikztostart)!#1!\pgfkeysvalueof{/tikz/ncbar angle}:(\tikztotarget)$)!\pgfkeysvalueof{/tikz/ncbar angle}:(\tikztostart)$)
        -- (\tikztotarget)
    },
    ncbar/.default=0.5cm,
}

\tikzset{round left paren/.style={ncbar=0.5cm,out=120,in=-120}}
\tikzset{round right paren/.style={ncbar=0.5cm,out=60,in=-60}}

\newcommand{\quartercirc}{%
\begin{tikzpicture}
    \centerarc[black,fill=gray!30](0,0)(0:90:0.24);
    \filldraw[fill=gray!30,opacity=0.0,fill opacity=1] (0,0) -- (0,0.24) -- (0.24,0) -- cycle;
    \draw[-] (0,0) to (0,0.24);
    \draw[-] (0,0) to (0.24,0);
\end{tikzpicture}%
}

\begin{figure}[h!]
    \centering
    \begin{tikzpicture}[tips=proper]
        \def\inc{1.5}
        \def\sinc{1.2}
        \def\s{0.75}
    
        \node[isosceles triangle,
            draw,
            thick,
            isosceles triangle apex angle=60,
            rotate=90,
            minimum size=6cm] (T1) at (0,0){};
        
        \draw[black,dashed] (T1.east) .. controls ([xshift=-0.1cm]T1.358) .. ([yshift=-0.5cm]T1.east) node[] (N1) {{}};
        \draw[black,dashed] ([yshift=-0.5cm]T1.east) .. controls ([xshift=0.4cm]T1.40) and (T1.350) .. (T1.center) node[fill=white,pos=0.9] (N1) {$\pi$};
        \draw[black,dashed] (T1.center) .. controls ([xshift=2.4cm]T1.110) .. (T1.west) node[] (N2) {{}};

        \coordinate (top) at (5,4);
        \coordinate (top2) at ($(top)+(\inc,0)$); 
        \coordinate (top3) at ($(top2)+(\inc,0)$);
        
        \node[rectangle,
    	draw = black,
    	minimum width = 1cm, 
    	minimum height = 1cm] (r) at (top) {};
        \node[ellipse,
            draw = black,
            fill=gray!30,
            minimum width = 0.4cm,
            minimum height = 0.4cm] (e) at (top) {\footnotesize $H$};
        \node[rectangle,
    	draw = black,
    	minimum width = 1cm, 
    	minimum height = 1cm] (r2) at (top2) {};
        \node[ellipse,
            draw = black,
            fill=gray!30,
            minimum width = 0.5cm,
            minimum height = 0.5cm] (e2) at (top2) {\footnotesize $H$};
        \node[rectangle,
    	draw = black,
    	minimum width = 1cm, 
    	minimum height = 1cm] (r3) at (top3) {};
        \node[ellipse,
            draw = black,
            fill=gray!30,
            minimum width = 0.5cm,
            minimum height = 0.5cm] (e3) at (top3) {\footnotesize $H$};

        \draw [black] (4.3,3.4) to [round left paren ] (4.3,4.6);
        \draw [black] (8.7,3.4) to [round right paren ] (8.7,4.6);

        \draw [black] (5.75,3.5) node [] {\LARGE ,};
        \draw [black] (7.25,3.5) node [] {\LARGE ,};

        \draw [black] (4.8,-2.5) to [round left paren ] (4.8,-1.5);
        \draw [black] (8.4,-2.5) to [round right paren ] (8.4,-1.5);

        \draw [black] (6,-2.4) node [] {\LARGE ,};
        \draw [black] (7.2,-2.4) node [] {\LARGE ,};

        \coordinate (bot) at (5.4,-2); 
        \coordinate (bot2) at ($(bot)+(\sinc,0)$);
        \coordinate (bot3) at ($(bot2)+(\sinc,0)$);
        
        \node[rectangle,
    	draw = black,
    	minimum width = \s cm, 
    	minimum height = \s cm] (r2) at (bot) {};
        \node[rectangle,
    	draw = black,
    	minimum width = \s cm, 
    	minimum height = \s cm] (r3) at (bot2) {};
        \node[rectangle,
    	draw = black,
    	minimum width = \s cm, 
    	minimum height = \s cm] (r4) at (bot3) {};
       
        \centerarc[black,fill=gray!30]([xshift=-0.38cm,yshift=-0.38cm]bot)(0:90:0.5);
        \centerarc[black,fill=gray!30]([xshift=0.38cm,yshift=0.38cm]bot2)(180:270:0.5);
        \centerarc[black,fill=gray!30]([xshift=0.38cm,yshift=-0.38cm]bot3)(90:180:0.5);

        \filldraw[fill=gray!30,opacity=0.0,fill opacity=1] (r2.225) -- (r2.162) -- (r2.288) -- cycle;
        \filldraw[fill=gray!30,opacity=0.0,fill opacity=1] (r3.342) -- (r3.45) -- (r3.108) -- cycle;
        \filldraw[fill=gray!30,opacity=0.0,fill opacity=1] (r4.251) -- (r4.315) -- (r4.18) -- cycle;

        \draw[color=black] (T1.west) node [below,fill=white] {\footnotesize Leaf $\ell$};
        \node[draw,circle,fill=black,inner sep=1pt] (x) at (T1.west) {};


        \draw[-{Stealth[scale=1]}] (4.6,-3.3) to ([xshift=-.2 cm,yshift=-.2cm]bot);
        \draw [black] (4.6,-3.3) node [fill=white] {Hardcore measure $H_{\pi_1}$ with $\mathrm{Dens}_H(\ell, 1)={\mathrm{area}(\quartercirc)}/{\mathrm{area}(\square})$};

        \draw[-{Stealth[scale=1]}] (6.5,3) to node[midway,fill=white!30,scale=1] {Queries along path $\pi$} (6.5,-1);

        \node[rectangle,
    	draw = black,
    	minimum width = \s cm, 
    	minimum height = \s cm] (aaaa) at (bot) {};
        \node[rectangle,
    	draw = black,
    	minimum width = \s cm, 
    	minimum height = \s cm] (bbbbb) at (bot2) {};
        \node[rectangle,
    	draw = black,
    	minimum width = \s cm, 
    	minimum height = \s cm] (ccccc) at (bot3) {};
        
    \end{tikzpicture}
    \captionsetup{width=.9\linewidth}
    \caption{Illustration of a hardcore density. The tree $T:(\bn)^3\to\bits^3$ seeks to compute a function $f^{\otimes 3}$. The tuple of squares at the top of the figure illustrates the set of all inputs to the function while the strings in the support of the hardcore measure are shaded gray. The tuple at the bottom of the figure illustrates the set of inputs reaching the leaf $\ell$. Each block is the subcube consistent with the path $\pi$ and the shaded region denotes the fragment of $H$ which is contained in the corresponding subcube. }
    \label{fig:hardcore}
\end{figure}
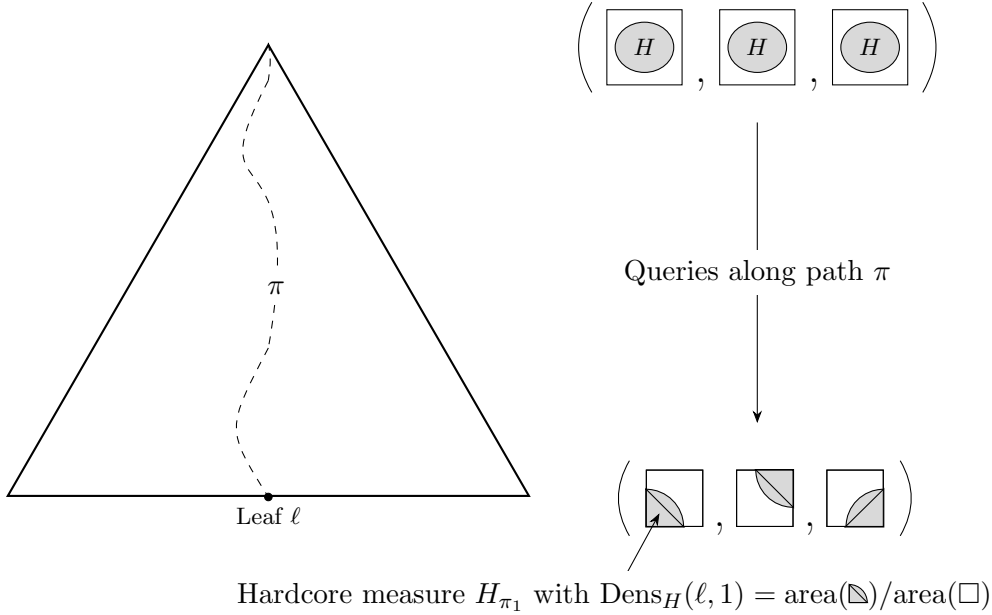

\begin{definition}[Hardcore advantage at $\ell$]
\label{def:hardcore advantage}
For $i\in [k]$, the {\sl hardcore advantage at $\ell$ in the $i$th block} is the quantity:
\[ \Adv_H(\ell,i) \coloneqq \abs[\Big]{\Ex_{\bX\sim \mu^k}\big[ f(\bX^{(i)})T(\bX)_i H(\bX^{(i)})\mid \text{$\bX$ reaches $\ell$}\big]}.  \]
The {\sl total hardcore advantage at $\ell$} is the quantity  
$ \ds \Adv_H(\ell) \coloneqq \sum_{i=1}^k \Adv_H(\ell, i). $
\end{definition}




Intuitively, leaves for which $\Dens_H(\ell)$ is large and $\Adv_H(\ell)$ is small contribute significantly to error of $T$.~\Cref{lem:accuracy in terms of hardcore density and advantage} below  formalizes this:

\paragraph{Notation} \hspace{-10pt} (Canonical distribution over leaves){\bf .}   We write $\mu^k(T)$ to denote the distribution over leaves of $T$ where:  
\[ \Prx_{\bell\sim \mu^k(T)}[\bell = \ell] = \Prx_{\bX\sim \mu^k}[\text{$\bX$ reaches $\ell$}].\]

\begin{tcolorbox}[colback = white,arc=1mm, boxrule=0.25mm]
\begin{lemma}[Accuracy in terms of hardcore density and advantage at leaves]
\label{lem:accuracy in terms of hardcore density and advantage}
\[ 
         \Prx_{\bX \sim \mu^k}[T(\bX) = f^{\otimes k}(\bX)] \leq \Ex_{\bell \sim \mu^k(T)}\bracket*{\exp\paren*{-\frac{\Dens_H(\bell) - \Adv_H(\bell)}{4}}}.
  \] 
\end{lemma}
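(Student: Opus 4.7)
The plan is to work one leaf at a time and exploit the fact that, conditioned on reaching a leaf $\ell$, (i) the blocks $\bX^{(1)},\ldots,\bX^{(k)}$ remain independent (because $\mu^k$ is a product distribution and the path to $\ell$ restricts each block separately), and (ii) the tree's output collapses to a fixed string $T(\bX) = (b_1,\ldots,b_k) \in \bits^k$. Under this conditioning, each $\bX^{(i)}$ is distributed as $\mu$ restricted to the subcube determined by $\pi_i$, which I will denote $\mu_{\pi_i}$.

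The core step is to upper bound $\Prx_{\bX\sim\mu^k}[f(\bX^{(i)}) = b_i \mid \ell]$ for a single block $i$. Working in the $\pm 1$ convention so that $\mathbf{1}[f=b_i] = \tfrac{1+b_i f}{2}$, I would write
\[
\Prx[f(\bX^{(i)}) = b_i \mid \ell] \;=\; \tfrac{1}{2}\paren*{1 + b_i \cdot \Ex_{\bY\sim \mu_{\pi_i}}[f(\bY)]},
\]
and then split the inner expectation via the hardcore measure as $\Ex[f] = \Ex[fH] + \Ex[f(1-H)]$. The first summand has magnitude at most $\Adv_H(\ell,i)$ by definition (since $b_i = T(\bX)_i$ at $\ell$ and $|b_i|=1$), and the second has magnitude at most $\Ex[1-H] = 1 - \Dens_H(\ell,i)$. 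Combining gives the clean per-block bound
\[
\Prx[f(\bX^{(i)}) = b_i \mid \ell] \;\le\; 1 - \tfrac{1}{2}\paren*{\Dens_H(\ell,i) - \Adv_H(\ell,i)}.
\]
Note that $\Dens_H(\ell,i) - \Adv_H(\ell,i) \ge 0$ always, since $|\Ex[fTH]| \le \Ex[H]$, so the bound is nontrivial.

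Next I would multiply across the $k$ blocks using conditional independence, then apply $1-x \le e^{-x}$ to each factor, obtaining
\[
\Prx_{\bX\sim \mu^k}[T(\bX) = f^{\otimes k}(\bX) \mid \ell] \;\le\; \exp\paren*{-\tfrac{1}{2}\sum_{i=1}^k (\Dens_H(\ell,i) - \Adv_H(\ell,i))} \;=\; \exp\paren*{-\tfrac{\Dens_H(\ell) - \Adv_H(\ell)}{2}},
\]
which is in fact stronger than the claimed bound with a $4$ in the denominator (so the stated inequality follows trivially from $\exp(-x/2) \le \exp(-x/4)$ for $x\ge 0$). Finally, taking expectation over $\bell \sim \mu^k(T)$ yields the lemma.

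The only delicate point is the $\pm 1$ bookkeeping in the per-block computation and being careful that the advantage bound $\Adv_H(\ell,i) = |\Ex[f \cdot T_i \cdot H \mid \ell]|$ is exactly the quantity that appears after fixing $T_i = b_i$ at the leaf; everything else is routine. I do not expect any serious obstacle.
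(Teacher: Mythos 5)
Your proposal is correct, and the per-block calculation is exactly the one the paper uses: in the proof of \Cref{lem:threshold-accuracy-from-stats} the paper also writes $q(\ell)_i = \tfrac{1}{2}(1 - \Ex[T(\bX)_i f(\bX^{(i)}) \mid \ell])$, splits the inner expectation as $\Ex[T_i f H] + \Ex[T_i f (1-H)]$, and bounds the two pieces by $\Adv_H(\ell,i)$ and $1 - \Dens_H(\ell,i)$ to conclude $q(\ell)_i \geq p(\ell)_i$. The one place you diverge is the final step. The paper proves the more general threshold statement $\Pr[\lzero{T(\bX) - f^{\otimes k}(\bX)} \leq t]$ and therefore invokes a Chernoff bound on $\BerSum(p(\ell))$, which at $t=0$ yields $\exp(-\mu/2)$ with $\mu = \sum_i p(\ell)_i$, i.e.\ the denominator $4$ in the statement. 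You specialize to $t=0$ from the start, so you can directly compute the leaf-conditional success probability as $\prod_i(1 - p(\ell)_i) \leq e^{-\sum_i p(\ell)_i}$, which gives the cleaner denominator $2$. Both are valid; the paper tolerates the weaker constant because the Chernoff machinery is needed anyway for the threshold direct sum theorem. You are also right that \Cref{claim:leaf-product} (leaf conditioning preserves the product structure across blocks) is the structural fact that justifies multiplying the per-block probabilities.
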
 
\end{tcolorbox}

\subsection{Expected total hardcore density and advantage}

\Cref{lem:accuracy in terms of hardcore density and advantage} motivates understanding the random variables $\mathrm{Dens}_H(\boldsymbol{\ell})$ and $\mathrm{Adv}_H(\boldsymbol{\ell})$ for $\boldsymbol{\ell}\sim \mu^k(T)$. We begin by bounding their expectations:

\begin{claim}[Expected total hardcore density]
\label{claim:expected total hardcore density} If $H$ is a hardcore measure of density $\delta$ then 
\[  \Ex_{\bell\sim \mu^k(T)}[\Dens_H(\bell)] = \delta k. \] 
\end{claim}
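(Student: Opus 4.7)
The plan is to use linearity of expectation and the tower property to reduce the claim to the density property of the hardcore measure $H$. There is no real obstacle here; the statement essentially says that the density of $H$ is preserved in expectation when we average over the random leaf reached, block-by-block, and then sum over blocks.

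First I would expand using linearity of expectation over the $k$ blocks:
\[
\Ex_{\bell \sim \mu^k(T)}[\Dens_H(\bell)] = \sum_{i=1}^k \Ex_{\bell \sim \mu^k(T)}[\Dens_H(\bell, i)].
\]
Next I would unfold the definition of the canonical distribution $\mu^k(T)$ and of $\Dens_H(\ell, i)$, and recognize the result as a law-of-total-expectation identity. Concretely, for each fixed $i$,
\[
\Ex_{\bell \sim \mu^k(T)}[\Dens_H(\bell, i)] = \sum_{\ell \in \mathrm{leaves}(T)} \Prx_{\bX \sim \mu^k}[\bX \text{ reaches } \ell] \cdot \Ex_{\bX \sim \mu^k}\bracket{H(\bX^{(i)}) \mid \bX \text{ reaches } \ell},
\]
which, since the events $\{\bX \text{ reaches } \ell\}$ partition the probability space, collapses to $\Ex_{\bX \sim \mu^k}[H(\bX^{(i)})]$.

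Finally I would use that the marginal distribution of $\bX^{(i)}$ under $\mu^k$ is exactly $\mu$, so this expectation equals $\Ex_{\bx \sim \mu}[H(\bx)] = \delta$ by the first defining property of a density-$\delta$ hardcore measure. Summing the constant $\delta$ over the $k$ blocks yields $\delta k$, which completes the proof.
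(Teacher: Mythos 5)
Your proof is correct and matches the paper's reasoning: the paper does not give a standalone proof of this claim but instead subsumes it into the proof of the resilience lemma (\Cref{lem:resilience-general}), where the same law-of-total-expectation computation appears and the present claim corresponds to the special case $\Phi(z)=z$ (for which the Jensen step is an equality). The key steps — linearity over blocks, collapsing the sum over leaves via the tower property, and observing the marginal of $\bX^{(i)}$ under $\mu^k$ is $\mu$ — are exactly what the paper uses.
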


\Cref{claim:expected total hardcore density} is a statement about density preservation. It says that $H$'s expected density at a random leaf $\bell \sim \mu^k(T)$ and in a random block $\bi\sim [k]$ is equal to $H$'s initial density: 
\[  \mathop{\Ex_{\bell\sim \mu^k(T)}}_{\bi\sim [k]}\bigg[  \underbrace{\Ex_{\bX\sim \mu^k}\big[H(\bX^{(\bi)})\mid \text{$\bX$ reaches $\bell$}\big]}_{\Dens_H(\bell,\bi)}\bigg] = \delta = \Ex_{\bx\sim \mu}[H(\bX)]. \]

\begin{claim}[Expected total hardcore advantage]
\label{claim:expected total hardcore advantage}
    If $H$ is a $(\gamma, d)$-hardcore measure for $f$ of density $\delta$ w.r.t.~$\mu$ and the expected depth of $T$ is at most $dk$, then 
    \begin{equation*}
        \Ex_{\bell \sim \mu^k(T)}[\adv_H(\bell)] \leq \gamma \Ex_{\bell\sim\mu^k(T)}[\Dens_H(\bell)]. 
    \end{equation*}
\end{claim}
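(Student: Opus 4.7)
The plan is to extract from $T$ a single-block randomized decision tree $T' : \bn \to \bits$ whose $H$-weighted correlation with $f$ equals $\tfrac{1}{k}\Ex_{\bell\sim \mu^k(T)}[\adv_H(\bell)]$ and whose expected depth with respect to $\mu$ is at most $d$, and then invoke property~2 of \Cref{def:hardcore measure} to upper bound this correlation by $\gamma\delta$. Combining with \Cref{claim:expected total hardcore density}, which identifies $\delta k$ with $\Ex_\bell[\Dens_H(\bell)]$, immediately yields the claim. The extraction mirrors Observation~\#1 of \Cref{sec:simple}: place the external input $x$ in a random block $\bi\sim [k]$, fill the other blocks with independent draws from $\mu$, simulate~$T$, and read off its $\bi$-th output bit.

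\paragraph{Setting up $T'$.} To absorb the absolute values appearing in $\adv_H$ into the construction, I would first precompute, for each leaf $\ell$ of $T$ and each block $i\in[k]$, a sign $\sigma_{\ell,i}\in\{\pm1\}$ such that
\[
\sigma_{\ell,i}\cdot \Ex_{\bX\sim \mu^k}\bracket*{f(\bX^{(i)})\,T(\bX)_i\,H(\bX^{(i)}) \mid \bX\text{ reaches }\ell} \ge 0.
\]
Then $T'$ on input $x$ samples $\bi\sim[k]$ and $\bX^{(j)}\sim\mu$ independently for $j\ne\bi$, sets $\bX^{(\bi)}=x$, simulates $T$ on $\bX$ to reach a leaf $\bell$, and outputs $\sigma_{\bell,\bi}\cdot T(\bX)_{\bi}$. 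The only queries $T'$ makes to~$x$ are those $T$ makes to block~$\bi$ of $\bX$, so the expected depth of $T'$ with respect to $\mu$ is
\[
\tfrac{1}{k}\Ex_{\bX\sim\mu^k}\bracket*{\text{depth of $T$ on $\bX$}} \le d.
\]

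\paragraph{Finishing.} A direct expansion, reindexing the external input as the $\bi$-th block of $\bX$ and pulling out the pre-chosen signs, then gives
\[
\Ex_{\bx\sim\mu}[f(\bx)\,T'(\bx)\,H(\bx)] = \tfrac{1}{k}\sum_{i=1}^k \Ex_{\bell\sim\mu^k(T)}[\adv_H(\bell,i)] = \tfrac{1}{k}\Ex_\bell[\adv_H(\bell)].
\]
Applying the $(\gamma,d)$-hardcore guarantee to $T'$ upper-bounds this by $\gamma\delta$, and \Cref{claim:expected total hardcore density} then yields $\Ex_\bell[\adv_H(\bell)]\le \gamma\delta k = \gamma\,\Ex_\bell[\Dens_H(\bell)]$, as required.

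\paragraph{Main obstacle.} The nontrivial step is applying the hardcore guarantee to the randomized~$T'$: \Cref{def:hardcore measure} is phrased for deterministic trees, and although the average expected depth of $T'$ is $\le d$, individual realizations (one per choice of $\bi$ and $\{\bX^{(j)}\}_{j\ne\bi}$) can have larger expected depth. I expect this to be handled either by verifying that the proof of \Cref{thm:hardcore theorem for DTs} in \Cref{appendix:hardcore} applies symmetrically to randomized adversaries of expected depth at most $d$ (its boosting / LP-duality proof treats them uniformly with deterministic ones), or by truncating $T'$'s high-depth realizations via Markov's inequality and absorbing an $O(1)$-factor loss into the constants of \Cref{thm:hardcore theorem for DTs}.
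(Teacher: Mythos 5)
Your proof is correct and takes essentially the same route as the paper's: both construct a single-block randomized tree by placing the external input in a uniformly random block, filling the other blocks with fresh draws from $\mu$, simulating $T$, and reading off the corresponding output coordinate (handling the absolute values in $\adv_H$ by a per-leaf sign flip---you via precomputed multipliers $\sigma_{\ell,i}$, the paper by modifying $T$'s leaf labels, which is equivalent), and then invoke the hardcore property on this expected-depth-$d$ tree together with $\Ex_{\bell}[\Dens_H(\bell)]=\delta k$. The only cosmetic difference is that the paper argues by contrapositive, and the subtlety you correctly flag about applying the hardcore guarantee to a \emph{randomized} tree is resolved exactly as you propose: the minimax proof of \Cref{thm:hardcore theorem for DTs} goes through unchanged when the column set is taken to be randomized trees of expected depth at most $d$ (a convex set, so the minimax theorem still applies), which is the convention the paper is implicitly using.
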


\Cref{claim:expected total hardcore advantage} is a statement about depth amplification. By definition, $H$ being a $(\gamma,d)$-hardcore measure for $f$ means that 
    \begin{equation*}
        \underbrace{ \Ex_{\bx \sim \mu}[ f(\bx) T_{\textnormal{small}}(\bx) H(\bx)]}_{\textnormal{Hardcore advantage}}  \leq \gamma  \underbrace{\Ex_{\bx \sim \mu}[H(\bx)]}_{\textnormal{Hardcore density}}
    \end{equation*}
    for every tree $T_{\mathrm{small}} : \bn\to\bits$ of expected depth $d$.~\Cref{claim:expected total hardcore advantage} says that 
\[ \underbrace{\sum_{i=1}^k \Ex_{\bX\sim \mu^k}\big[ f(\bX^{(i)})T_{\textnormal{large}}(\bX)_i H(\bX^{(i)})\big] }_{\textnormal{Total hardcore advantage}} \le  \gamma   \underbrace{\sum_{i=1}^k \Ex_{\bX\sim\mu^k}[H(\bX^{(i)})]}_{ \textnormal{Total hardcore density}}. \]
for every tree $T_{\textnormal{large}} : (\bn)^k \to \bits^k$ of expected depth $dk$. Crucially, the depth of $T_{\textnormal{large}}$ is allowed to be a factor of $k$ larger than that of $T_{\textnormal{small}}$, and yet the ratio of hardcore advantage to hardcore density remains the same ($\gamma$ in both cases).

\subsubsection{Done if Jensen went the other way}
\label{sec:jensen}
For intuition as to why~\Cref{claim:expected total hardcore density} and~\Cref{claim:expected total hardcore advantage} are relevant yet insufficient for us, note that if it were the case that $\mathbb{E}[\exp(-\boldsymbol{Z})] \le \exp(-\mathbb{E}[\boldsymbol{Z}])$, which unfortunately is the opposite of what Jensen’s inequality gives, we would have the strong bound on the accuracy of $T$ that we seek:   
\begin{align*}
 \Prx_{\bX \sim \mu^k}[T(\bX) = f^{\otimes k}(\bX)] &\leq \Ex_{\bell \sim \mu^k(T)}\bracket*{\exp\paren*{-\frac{\dens_H(\bell) - \adv_h(\bell)}{4}}} \tag{\Cref{lem:accuracy in terms of hardcore density and advantage}} \\
 &``\le" \exp\bigg(- \Ex_{\bell\sim\mu^k(T)}\bigg[\frac{\Dens_H(\bell)-\Adv_H(\bell)}{4}\bigg] \bigg) \tag{Wrong direction of Jensen} \\
 &\le \exp\bigg(-\frac{\delta k -\gamma \delta k}{4} \bigg)  \tag{\Cref{claim:expected total hardcore density} and \Cref{claim:expected total hardcore advantage}} \\
 &\le \exp(-\Theta(\delta k)). 
\end{align*}

For an actual proof, we need to develop a more refined understanding of the distribution of $\mathrm{Dens}_H(\boldsymbol{\ell})$ beyond just its expectation. (As it turns out, this along with the bound on $\mathbb{E}[\mathrm{Adv}_H(\boldsymbol{\ell})]$ given by~\Cref{claim:expected total hardcore advantage} suffices.) 

\subsection{A resilience lemma for hardcore measures}

\paragraph{An illustrative bad case to rule out.} Suppose $T$ were such that it achieved $\mathbb{E}[\mathrm{Dens}_H(\boldsymbol{\ell})] =  \delta k$ by having a $\delta$-fraction of leaves with $\mathrm{Dens}_H(\ell) = k$ and the remaining $1-\delta$ fraction with $\mathrm{Dens}_H(\ell) = 0$. If this were the case then “all the hardness” would be concentrated on a small $\delta$ fraction of leaves, and the best lower bound that we would be able to guarantee on error of $T$ with respect to $f^{\otimes k}$ would only be $\delta$. This is our starting assumption on the hardness of $f$, and so no error amplification has occurred. 

\paragraph{The resilience lemma.} We rule out cases like this by showing that $T$ must achieve $\mathbb{E}[\mathrm{Dens}_H(\boldsymbol{\ell})]=  \delta k$ by having the vast majority of its leaves with $\mathrm{Dens}_H(\ell) = \Omega(\delta k)$, i.e.~that $\Dens_H(\bell)$ is tightly concentrated around its expectation:  
\medskip 

\begin{tcolorbox}[colback = white,arc=1mm, boxrule=0.25mm]
\begin{lemma}[Resilience lemma]
\label{lem:resilience lemma}
For any hardcore measure $H$ of density $\delta$ w.r.t.~$\mu$ and tree $T:(\bn)^k \to \bits^k$,
\begin{equation*}
    \Prx_{\bell \sim \mu^k(T)}[\Dens_H(\bell) \leq \delta k/2] \leq e^{-\delta k/8}.
\end{equation*}
Similarly, $\ds \Prx_{\bell \sim \mu^k(T)}[\Dens_H(\bell) \geq 2\delta k] \leq e^{-\delta k/3}.$ 
\end{lemma}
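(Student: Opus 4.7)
The plan is to observe that the total hardcore density at a random leaf is precisely the conditional expectation of $W := \sum_{i=1}^k H(\bX^{(i)})$ given the leaf, and then to transfer the Chernoff concentration of $W$ (a sum of $k$ independent $[0,1]$-valued random variables with mean $\delta k$) to $\Dens_H(\bell)$ by applying Jensen's inequality to the moment generating function.

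First I would establish the identity $\Dens_H(\bell(\bX)) = \Ex[W(\bX') \mid \bell(\bX')=\bell(\bX)]$ for independent copies $\bX,\bX' \sim \mu^k$. This follows directly from \Cref{def:hardcore-density}: under $\mu^k$ the blocks are independent and the leaf $\bell(\bX')$ is determined by the bits of $\bX'$ queried along the corresponding path, so conditioning on $\{\bell(\bX')=\bell\}$ restricts each block $\bX'^{(i)}$ to the subcube cut out by $\pi_i^\ell$, independently across $i$. Consequently $\Ex[H(\bX'^{(i)}) \mid \bell(\bX')=\bell] = \Dens_H(\bell,i)$, and summing over $i$ gives $\Dens_H(\bell)$. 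Rewriting this as a conditional expectation over a single $\bX \sim \mu^k$, we obtain $\Dens_H(\bell(\bX)) = \Ex[W \mid \bell(\bX)]$.

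Second, for any $t \in \R$, the conditional form of Jensen's inequality applied to the convex function $\phi(x)=e^{-tx}$ yields
\[
\Ex\big[e^{-t \Dens_H(\bell)}\big] \;=\; \Ex\big[e^{-t\,\Ex[W\mid \bell]}\big] \;\leq\; \Ex\big[\Ex[e^{-tW}\mid\bell]\big] \;=\; \Ex\big[e^{-tW}\big],
\]
and the analogous bound holds for $\phi(x)=e^{tx}$ with $t>0$. In particular, the MGF of $\Dens_H(\bell)$ is dominated pointwise by the MGF of $W$, so every Chernoff-style tail bound for $W$ obtained by the MGF/Markov method transfers verbatim to $\Dens_H(\bell)$ (using the same optimal $t$). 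Since $W$ is a sum of $k$ independent $[0,1]$-valued random variables with $\Ex[W]=\delta k$, the standard multiplicative Chernoff bounds $\Pr[W \leq \delta k/2] \leq e^{-\delta k/8}$ and $\Pr[W \geq 2\delta k] \leq e^{-\delta k/3}$ give exactly the two inequalities claimed in the lemma.

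I do not expect a serious obstacle here: the argument is short once one identifies the conditional-expectation structure $\Dens_H(\bell) = \Ex[W\mid \bell]$, after which Jensen and Chernoff do all the work. In particular, no depth assumption on $T$ is needed, because Jensen dominates the MGF of the smoothing by that of the underlying sum regardless of how fine or coarse the $\sigma$-algebra generated by $\bell(\bX)$ happens to be. This is also what makes the lemma robust enough to be used downstream without tracking any query-complexity parameters through it.
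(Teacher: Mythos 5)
Your proof is correct and follows essentially the same approach as the paper's: identify $\Dens_H(\bell)$ as the conditional expectation, given the leaf, of a sum of $k$ independent $[0,1]$-valued random variables each with mean $\delta$, use (conditional) Jensen to dominate the moment generating function, and invoke the Chernoff bound. The only cosmetic difference is that the paper additionally randomizes each $H(\bX^{(i)})$ into a Bernoulli $\bz_i\sim\Ber(H(\bX^{(i)}))$ so that the comparison variable is exactly $\Bin(k,\delta)$, whereas you work with $W=\sum_i H(\bX^{(i)})$ directly and apply the bounded-summand form of Chernoff (the paper's \Cref{fact:chernoff}), which yields the same tail estimates.
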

\end{tcolorbox}
\medskip 

\begin{figure}[H]
\centering
\hspace*{\fill}
\begin{subfigure}{.45\textwidth}
  \centering
      \begin{tikzpicture}[tips=proper]
        \node[isosceles triangle,
            draw,
            isosceles triangle apex angle=60,
            rotate=90,
            minimum size=6cm] (T1) at (0,0){};
            
        \draw[black,dashed] (T1.east) .. controls ([xshift=-0.1cm]T1.358) .. ([yshift=-0.5cm]T1.east) node[] (N1) {{}};
        \draw[black,dashed] ([yshift=-0.5cm]T1.east) .. controls ([xshift=0.4cm]T1.40) and (T1.350) .. (T1.center) node[] (N1) {};
        \draw[black,dashed] (T1.center) .. controls ([xshift=2.4cm]T1.110) .. ([xshift=-0.15cm]T1.west) node[] (N2) {{}};
        
        \draw[-{Stealth[scale=1]}] ([xshift=0.5cm,yshift=0.5cm]T1.west) to ([xshift=-0.15cm]T1.west);
        \draw[] ([xshift=0.5cm,yshift=0.7cm]T1.west) node [right,black,fill=white, text width=2cm] {\footnotesize Leaf $\ell$ with $\mathrm{Dens}_H(\ell)=0$};
        
        \draw [black,decorate,decoration={brace,mirror,raise=1pt,amplitude=3pt}] ([xshift=0.2cm,yshift=-0.5cm]T1.left corner) -- ([xshift=2.8cm,yshift=-0.5cm]T1.left corner) node [black,pos=0.5,yshift=-0.5cm] {\small $\delta$-fraction};

        \draw [black,decorate,decoration={brace,mirror,raise=1pt,amplitude=3pt}] ([xshift=3.2cm,yshift=-0.5cm]T1.left corner) -- ([xshift=7.0cm,yshift=-0.5cm]T1.left corner) node [black,pos=0.5,yshift=-0.5cm] {\small $(1-\delta)$-fraction};

        \draw [] ([xshift=0cm]T1.left corner) -- (T1.right corner) node [black,pos=0.5,below] { $~ k\quad k\quad k\quad k\quad k \quad 0\quad 0\quad 0\quad 0\quad 0\quad 0\quad 0$};   
    \end{tikzpicture}
  \caption{An illustration of the bad case where $\mathrm{Dens}_H$ is anti-concentrated away from its mean of $\delta k$.}
  \label{fig:bad case}
\end{subfigure}%
\hfill
\begin{subfigure}{.45\textwidth}
  \centering
          \begin{tikzpicture}[tips=proper]
        \node[isosceles triangle,
            draw,
            isosceles triangle apex angle=60,
            rotate=90,
            minimum size=6cm] (T1) at (0,0){};
            
        \draw[black,dashed] (T1.east) .. controls ([xshift=-0.1cm]T1.358) .. ([yshift=-0.5cm]T1.east) node[] (N1) {{}};
        \draw[black,dashed] ([yshift=-0.5cm]T1.east) .. controls ([xshift=0.4cm]T1.40) and (T1.350) .. (T1.center) node[] (N1) {};
        \draw[black,dashed] (T1.center) .. controls ([xshift=2.4cm]T1.110) .. ([xshift=-0.3cm]T1.west) node[] (N2) {{}};
        
        \draw[-{Stealth[scale=1]}] ([xshift=0.35cm,yshift=0.5cm]T1.west) to ([xshift=-0.3cm]T1.west);
        \draw[] ([xshift=0.4cm,yshift=0.7cm]T1.west) node [right,black, text width=2.2cm] {\footnotesize Leaf $\ell$ with $\mathrm{Dens}_H(\ell)\approx\delta k$};

        \draw [white,decorate,decoration={brace,mirror,raise=1pt,amplitude=3pt}] ([xshift=0.2cm,yshift=-0.5cm]T1.left corner) -- ([xshift=2.6cm,yshift=-0.5cm]T1.left corner) node [white,pos=0.5,yshift=-0.65cm] {{}};
        
        \draw [] ([xshift=0cm]T1.left corner) -- (T1.right corner) node [black,pos=0.5,below] { $\approx\delta k~~\approx\delta k~~\approx\delta k~~\approx\delta k~~ \approx\delta k~~ \approx\delta k$\vspace{2em}};
        \end{tikzpicture}
  \caption{An illustration of the good case where $\mathrm{Dens}_H$ is concentrated around its mean of $\delta k$. }
  \label{fig:good case}
\end{subfigure}
\hspace*{\fill}
  \captionsetup{width=.9\linewidth}
\caption{An illustration of our resilience lemma (\Cref{lem:resilience lemma}). This lemma shows that all trees resemble the one on the right, with $\mathrm{Dens}_H(\bell)$ tightly concentrated around its mean of $\delta k$. This allows us to rule out bad trees such as those on the left where all of the hardness is concentrated on a small fraction of the leaves.}
\label{fig:resilience lemma}
\end{figure}
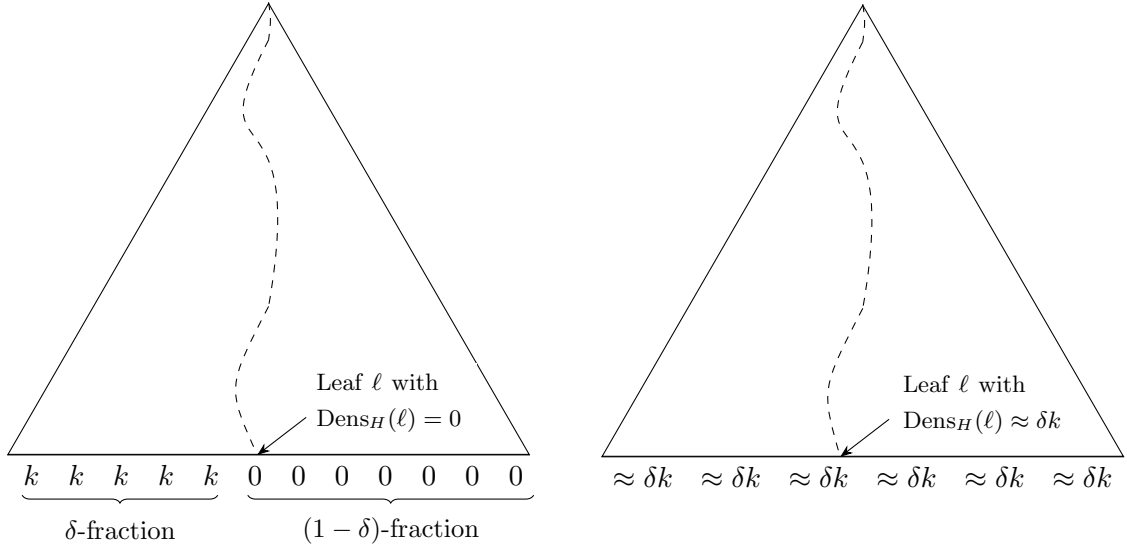

Comparing~\Cref{lem:resilience lemma} to~\Cref{claim:expected total hardcore advantage}, we see that~\Cref{claim:expected total hardcore advantage} is a statement about density preservation {\sl in expectation} whereas~\Cref{lem:resilience lemma} is a statement about density preservation {\sl with high probability}. It says that $H$'s density at a random leaf $\bell\sim \mu^k(T)$ and in a random block $\bi\sim [k]$ remains, with high probability, roughly the same as that of $H$'s initial density---this is why we call~\Cref{lem:resilience lemma} a resilience lemma. (Our proof of~\Cref{lem:resilience lemma} in fact shows that the $H$'s density remains resilient under arbitrary partitions of $(\{ \pm 1\}^n)^k$, not just those induced by a decision tree.)

With~\Cref{lem:resilience lemma} in hand, the intuition sketched in~\Cref{sec:jensen} can be made formal. 

\section{Discussion and Future Work}

Our main results are a strong direct sum theorem and a strong XOR lemma for distributional query complexity, showing that if $f$ is somewhat hard to approximate with depth-$d$ decision trees, then $f^{\otimes k}$ and $f^{\oplus k}$ are both {\sl much harder} to approximate, even with decision trees of {\sl much larger} depth.  These results hold for expected query complexity, and they circumvent a counterexample of Shaltiel showing that such statements are badly false for worst-case query complexity.  We view our work as confirming a remark Shaltiel made in his paper, that his counterexample ``seems to exploit defects in the formulation of the problem rather than show that our general intuition for direct product assertions is false."

Shaltiel's counterexample applies to  
 many other models including boolean circuits and communication protocols. A broad avenue for future work is to understand how this counterexample can be similarly circumvented in these models by working with more fine-grained notions of computation cost.  Consider for example boolean circuit complexity and Yao's XOR lemma~\cite{Yao82}, which states that if $f$ is mildly hard to approximate with size-$s$ circuits w.r.t.~$\mu$, then $f^{\oplus k}$ is extremely hard to approximate with size-$s'$ circuits w.r.t.~$\mu^{k}$.  A well-known downside of this important result is that it only holds for $s'\ll s$. Indeed, Shaltiel's counterexample shows that it cannot hold for $s' \gg s$, at least not for the standard notion of circuit size. Extrapolating from our work, can we prove a strong XOR lemma for boolean circuits by considering a notion of the ``expected size" of a circuit $C : \bn \to \bits$ with respect to a distribution $\mu$ over $\bn$?  A natural approach is to consider the standard notion of the expected runtime of a Turing machine with respect to a distribution over inputs and have the Cook--Levin theorem guide us towards an appropriate analogue for circuit size.

On a more technical level, a crucial ingredient in our work is the first use of Impagliazzo's Hardcore Theorem within the context of query complexity (and indeed, to our knowledge, the first use of it outside of circuit complexity). Could this powerful theorem be useful for other problems in query complexity, possibly when used in conjunction with our new resilience lemma?

\section{Preliminaries}

We use $[n]$ to denote the set $\set{1,2,\ldots, n}$ and \textbf{bold font} (e.g $\bx \sim \mcD$) to denote random variables. For any distribution $\mu$, we use $\mu^k$ to denote $k$-fold the product distribution $\mu \times \cdots \times \mu$.

For any function $f:\bn \to \bits$, we use $f^{\otimes k}: (\bn)^k \to \bits^k$ to denote its $k$-fold direct product,
\begin{equation*}
    f^{\otimes k}(X^{(1)}, \ldots, X^{(k)}) \coloneqq \paren[\big]{f(X^{(1)}), \ldots, f(X^{(k)})}.
\end{equation*}
Similarly, we use $f^{\oplus k}: (\bn)^k \to \bits$ to denote its $k$-fold direct sum,
\begin{equation*}
    f^{\oplus k}(X^{(1)}, \ldots, X^{(k)}) \coloneqq \prod_{i \in [k]}f(X^{(i)}).
\end{equation*}

\begin{definition}[Bernoulli distribution]
    For any $\delta \in [0,1]$, we write $\Ber(\delta)$ to denote the distribution of $\bz$ where $\bz = 1$ with probability $\delta$ and $0$ otherwise.
\end{definition}

\begin{definition}[Binomial distribution]
    For any $k \in \N$, $\delta \in [0,1]$, we write $\Bin(k, \delta)$ to denote the sum of $k$ independent random variables drawn from $\Ber(\delta)$.
\end{definition}

\begin{fact}[Chernoff bound]
    \label{fact:chernoff}
    Let $\bz_1, \ldots, \bz_k$ be independent and each bounded within $[0,1]$ and $\bZ \coloneqq \sum_{i \in [k]}\bz_i$. For any threshold $t \leq \mu \coloneqq \Ex[\bZ]$,
    \begin{equation*}
        \Pr[\bZ \leq t] \leq \exp\paren*{-\tfrac{(\mu - t)^2}{2\mu}}.
    \end{equation*}
    Similar bounds for the probability $\bZ$ exceeds its mean hold. For example,
    \begin{equation*}
        \Pr[\bZ \geq 2\mu] \leq \exp\paren*{-\tfrac{\mu}{3}}.
    \end{equation*}
    Furthermore, the above bounds also hold for any random variable $\bY$ satisfying $\Ex[e^{\lambda \bY}] \leq \Ex[e^{\lambda \bZ}]$ for all $\lambda \in \R$.
\end{fact}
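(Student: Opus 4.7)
The plan is to prove the Chernoff bound via the standard Markov-on-the-MGF argument, treating the two tails separately and then observing that the MGF-dominated extension comes for free. For the lower tail, I would fix any $\lambda > 0$ and apply Markov's inequality to $e^{-\lambda \bZ}$: $\Pr[\bZ \leq t] = \Pr[e^{-\lambda \bZ} \geq e^{-\lambda t}] \leq e^{\lambda t}\,\Ex[e^{-\lambda \bZ}]$. By independence of the $\bz_i$, the MGF factors as $\Ex[e^{-\lambda \bZ}] = \prod_{i\in[k]} \Ex[e^{-\lambda \bz_i}]$.

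Next I would bound each factor using convexity. Since $x \mapsto e^{-\lambda x}$ is convex on $[0,1]$, every $\bz_i \in [0,1]$ satisfies $e^{-\lambda \bz_i} \leq 1 - \bz_i(1 - e^{-\lambda})$ (the chord from $(0,1)$ to $(1, e^{-\lambda})$). Taking expectations and applying $1 + u \leq e^{u}$ yields $\Ex[e^{-\lambda \bz_i}] \leq \exp(-\mu_i(1 - e^{-\lambda}))$, where $\mu_i \coloneqq \Ex[\bz_i]$. Multiplying over $i$ gives $\Ex[e^{-\lambda \bZ}] \leq \exp(-\mu(1 - e^{-\lambda}))$, so $\Pr[\bZ \leq t] \leq \exp(\lambda t - \mu(1 - e^{-\lambda}))$. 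Writing $t = (1-\delta)\mu$ with $\delta = (\mu-t)/\mu \in [0,1]$ and choosing $\lambda = -\ln(1-\delta)$ makes the exponent equal $-\mu\bigl[(1-\delta)\ln(1-\delta) + \delta\bigr]$. The elementary scalar inequality $(1-\delta)\ln(1-\delta) + \delta \geq \delta^2/2$ on $[0,1]$ --- which I would verify by letting $g(\delta) \coloneqq (1-\delta)\ln(1-\delta) + \delta - \delta^2/2$ and observing $g(0) = g'(0) = 0$ together with $g''(\delta) = \delta/(1-\delta) \geq 0$ --- then delivers $\Pr[\bZ \leq t] \leq \exp(-\mu \delta^2/2) = \exp(-(\mu - t)^2/(2\mu))$.

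For the upper tail I would mirror the argument with $\lambda > 0$. Convexity gives $e^{\lambda x} \leq 1 + x(e^\lambda - 1)$ for $x \in [0,1]$, whence $\Ex[e^{\lambda \bz_i}] \leq \exp(\mu_i(e^\lambda - 1))$ and $\Ex[e^{\lambda \bZ}] \leq \exp(\mu(e^\lambda - 1))$. Markov on $e^{\lambda \bZ}$ yields $\Pr[\bZ \geq 2\mu] \leq \exp(\mu(e^\lambda - 1) - 2\lambda\mu)$, and specializing to $\lambda = \ln 2$ gives $\exp(-\mu(2\ln 2 - 1)) \leq \exp(-\mu/3)$, using the numerical fact that $2\ln 2 - 1 > 1/3$.

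Finally, the MGF-dominated extension is immediate. Both tail arguments used independence and boundedness solely to obtain an upper bound on $\Ex[e^{\lambda \bZ}]$ --- with $\lambda < 0$ in the lower-tail argument and $\lambda > 0$ in the upper-tail argument --- and then fed that upper bound into a chain of inequalities. Any $\bY$ satisfying $\Ex[e^{\lambda \bY}] \leq \Ex[e^{\lambda \bZ}]$ for all $\lambda \in \R$ therefore enjoys the very same tail bounds by substituting its MGF into the same chain. I do not anticipate a real obstacle: the argument is textbook, and the only mildly finicky step is the scalar inequality $(1-\delta)\ln(1-\delta) + \delta \geq \delta^2/2$, which is standard and handled as above.
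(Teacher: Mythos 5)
Your proof is correct. Note that the paper itself states this as a standard \emph{Fact} in the preliminaries and offers no proof, so there is nothing to compare against; your argument is the textbook exponential-moment one (Markov applied to $e^{\pm\lambda \bZ}$, factoring the MGF by independence, the chord bound $e^{\lambda x}\le 1+x(e^{\lambda}-1)$ for $x\in[0,1]$, and the scalar inequality $(1-\delta)\ln(1-\delta)+\delta\ge \delta^2/2$, which you verify correctly via $g(0)=g'(0)=0$ and $g''\ge 0$). The closing observation that the bounds transfer to any $\bY$ whose MGF is dominated by that of $\bZ$ is also right, since independence and boundedness enter only through the MGF estimate; this is exactly the form in which the paper later invokes the fact (via Lemma 7.4, which bounds the MGF of $\mathrm{Dens}_H(\boldsymbol{\ell})$ by that of $\mathrm{Bin}(k,\delta)$).
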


\subsection{Randomized vs.~deterministic decision trees}
We will prove all of our results with respect to the expected depth of a \emph{randomized} decision tree. In this subsection, we formally define \emph{deterministic} and \emph{randomized} decision trees and prove that our results easily extend to the deterministic setting.

\begin{definition}[Deterministic decision tree]
    A deterministic decision tree, $T: \bits^n \to \bits$, is a binary tree with two types of nodes: Internal nodes each query some $x_i$ for $i \in [n]$ and have two children whereas leaf nodes are labeled by a bit $b \in \bits$ and have no children. On input $x \in \bits^n$, $T(x)$ is computed as follows: We proceed through $T$ starting at the root. Whenever at an internal node that queries the $i^{\text{th}}$ coordinate, we proceed to the left child if $x_i = -1$ and right child if $x_i = +1$. Once we reach a leaf, we output the label of that leaf.
\end{definition}
\begin{definition}[Randomized decision tree]
    A randomized decision tree, $\mcT:\bits^n \to \bits$, is distribution over deterministic decision trees. On input $x \in \bits^n$, it first draws $\bT \sim \mcT$ and then outputs $\bT(x)$.
\end{definition}
\begin{definition}[Expected depth]
    For any deterministic decision tree $T: \bits^n \to \bits$ and distribution $\mu$ on $\bits^n$, we use $\overline{\mathrm{Depth}}^{\mu}(T)$ to denote the expected depth of $T$, which is the expected number of coordinates $T$ queries on a random input $\bx \sim \mu$. Similarly, for a randomized decision tree $\mcT$, $\overline{\mathrm{Depth}}^{\mu}(\mcT) \coloneqq \Ex_{\bT \sim \mcT}[\overline{\mathrm{Depth}}^{\mu}(\bT)]$.
\end{definition}

We write $\overline{\mathrm{Depth}}^{\mu}(\cdot,\cdot)$ to denote the minimum expected depth of \emph{any} decision tree, including randomized decision trees. Thus, all of our main results, as written, hold for \emph{randomized} decision trees; however, equivalent statements are true if we restrict ourselves to only \emph{deterministic} decision trees, with only a small change in constants, as easily seen from the following claim.
\begin{claim}
    \label{claim:connect-det-rand}
    For any $f:\bn \to \bits$, distribution $\mu$, and constant $\eps$,
    \begin{equation*}
         \odepthmu(f,2\eps) \leq \odepthmu_{\det}(f, 2\eps) \leq 2 \cdot \odepthmu(f, \eps).
    \end{equation*}
\end{claim}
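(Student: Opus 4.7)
The first inequality $\odepthmu(f,2\eps) \leq \odepthmu_{\det}(f,2\eps)$ is immediate: every deterministic decision tree is also a (degenerate) randomized one, so the infimum over the larger class is no larger than the infimum over the smaller.

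For the second inequality, let $d := \odepthmu(f,\eps)$ and, for any $\eta > 0$, fix a randomized tree $\mcT$ with $\odepthmu(\mcT) \leq d+\eta$ and $\Pr_{\bT\sim\mcT,\bx\sim\mu}[\bT(\bx)\neq f(\bx)] \leq \eps$. For a draw $\bT\sim\mcT$ define the two scalar-valued functionals $D(\bT) := \odepthmu(\bT)$ and $E(\bT) := \Pr_{\bx\sim\mu}[\bT(\bx)\neq f(\bx)]$, which by construction satisfy $\Ex_\bT[D(\bT)] \leq d+\eta$ and $\Ex_\bT[E(\bT)] \leq \eps$. The goal reduces to exhibiting a single $T$ in the support of $\mcT$ that simultaneously satisfies $D(T)\leq 2(d+\eta)$ and $E(T)\leq 2\eps$.

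The heart of the plan is a \emph{coupled} Markov argument that preserves the constant $2$ on both depth and error at once. Assuming $d,\eps>0$ (the edge cases are trivial), define the scalar $Z(\bT) := D(\bT)/(d+\eta) + E(\bT)/\eps$, so $\Ex_\bT[Z(\bT)] \leq 2$. Suppose, toward a contradiction, that no $T$ in the support of $\mcT$ satisfies both bounds. Then for every such $T$ either $D(T) > 2(d+\eta)$ or $E(T) > 2\eps$, and in either case $Z(T) > 2$ strictly. Since this holds on a full $\mcT$-measure set, $Z(\bT) > 2$ almost surely and hence $\Ex_\bT[Z(\bT)] > 2$, contradicting $\Ex_\bT[Z(\bT)] \leq 2$. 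Thus some $T_\star$ in the support meets both bounds, and sending $\eta\to 0$ (which is routine since only finitely many deterministic trees of bounded depth exist for a fixed input length) gives the claimed $\odepthmu_{\det}(f,2\eps)\leq 2d$.

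The only technical nuance --- and the reason for packaging the two constraints into the single scalar $Z$ rather than applying Markov to $D$ and $E$ separately --- is that two independent Markov bounds only yield each bound with probability at most $1/2$ each, which is insufficient on its own to conclude that both hold simultaneously with positive probability. The weighted-sum formulation bypasses this issue cleanly and retains the sharp factor of $2$ on both the depth and the error.
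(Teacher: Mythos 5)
Your proof is correct and differs from the paper's in a small but instructive way. The paper applies Markov's inequality to the two random variables $D(\bT)$ and $E(\bT)$ separately, obtaining $\Pr[D(\bT) > 2d]\le \tfrac12$ and $\Pr[E(\bT) > 2\eps]\le \tfrac12$, and then unions the two bad events to extract a single deterministic tree in the support with expected depth $\le 2d$ and error $\le 2\eps$. As you observe, the naive union bound only shows the bad event has probability $\le 1$, and one must implicitly invoke the strict form of Markov's inequality (a nonnegative random variable with positive mean $m$ has $\Pr[X>2m]<\tfrac12$) to conclude the intersection has positive probability; the paper's write-up passes over this silently. Your coupled scalar $Z(\bT)=D(\bT)/(d+\eta)+E(\bT)/\eps$ with $\Ex[Z]\le 2$, together with the observation that failure of both thresholds everywhere on the support would force $Z>2$ almost surely and hence $\Ex[Z]>2$, makes the strictness explicit in a single clean stroke. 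One minor simplification: the $\eta\to 0$ limit is unnecessary. A randomized decision tree on $\bn$ is a distribution over the finite set of deterministic trees of depth at most $n$, so the infimum defining $\odepthmu(f,\eps)$ is attained and you may take $\eta=0$; equivalently, as the paper does, one can prove the per-tree statement that any randomized $\mcT$ of error $\le\eps$ contains in its support a deterministic tree of error $\le 2\eps$ and expected depth $\le 2\,\odepthmu(\mcT)$, and then take the infimum over $\mcT$.
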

\begin{proof}
    The left-most inequality follows immediately from that fact that any deterministic decision tree is also a randomized decision tree. For the second inequality, given any randomized decision tree $\mcT$ with error $\eps$ and expected depth $d$, we'll construct a deterministic decision tree $T$ with error at most $2\eps$ and expected depth at most $2d$. First, we decompose the expected error of $\mcT$:
    \begin{equation*}
        \eps = \Prx_{\bx \sim \mu}[\mcT(\bx) \neq f(\bx)] = \Ex_{\bT \sim \mcT}\bracket[\Big]{\Prx_{\bx \sim \mu}[\bT(\bx) \neq f(\bx)]}.
    \end{equation*}
    Applying Markov's inequality, if we sample $\bT \sim \mcT$, with probability at least $1/2$, it has error at most $\eps$. Similarly, since the expected depth of $\mcT$ is $d$, with probability at least $1/2$, $\bT$ will have expected depth at most $2d$. By union bound, there is a nonzero probability that we choose a single (deterministic) tree with error at most $2\eps$ and expected depth at most $2d$.
\end{proof}
\Cref{claim:connect-det-rand} immediately allows direct sum theorems for \emph{randomized} decision trees to also apply to \emph{deterministic} decision trees.
\begin{corollary}[Randomized direct sum theorems imply deterministic ones]
    \label{cor:rand-to-det-DP}
    Suppose that a randomized direct sum theorem of the following form holds. For a function $f: \bn \to \bits$, distribution $\mu$  over $\bn$,  $k\in \mathbb{N}$, and constants $\varepsilon, \delta, M$,
    \[ \overline{\mathrm{Depth}}^{\mu^{\otimes k}}(f^{\otimes k},\varepsilon) \ge M \cdot \overline{\mathrm{Depth}}^{\mu}(f,\delta).\]
    Then, 
     \[ \overline{\mathrm{Depth}}_{\det}^{\mu^{\otimes k}}(f^{\otimes k},\varepsilon) \geq \overline{\mathrm{Depth}}^{\mu^{\otimes k}}(f^{\otimes k},\varepsilon) \ge M \cdot \overline{\mathrm{Depth}}^{\mu}(f,\delta) \geq  \tfrac{M}2 \cdot \overline{\mathrm{Depth}}^{\mu}_{\mathrm{det}}(f,2\delta).\]
\end{corollary}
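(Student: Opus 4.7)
The plan is to verify the three-link chain of inequalities stated in the corollary's conclusion, since each link is either trivial or already established earlier in the paper.

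First, for the leftmost inequality $\overline{\mathrm{Depth}}_{\det}^{\mu^{\otimes k}}(f^{\otimes k},\varepsilon) \ge \overline{\mathrm{Depth}}^{\mu^{\otimes k}}(f^{\otimes k},\varepsilon)$, I would observe that every deterministic decision tree is a point-mass randomized decision tree; hence the infimum defining the randomized quantity is taken over a superset of the trees competing in the deterministic infimum, so it can only be smaller. This is a one-line observation.

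Second, the middle inequality $\overline{\mathrm{Depth}}^{\mu^{\otimes k}}(f^{\otimes k},\varepsilon) \ge M \cdot \overline{\mathrm{Depth}}^{\mu}(f,\delta)$ is exactly the hypothesis of the corollary, i.e.\ the assumed randomized direct sum theorem, so it is invoked verbatim.

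Third, the rightmost inequality $M \cdot \overline{\mathrm{Depth}}^{\mu}(f,\delta) \ge \tfrac{M}{2} \cdot \overline{\mathrm{Depth}}^{\mu}_{\mathrm{det}}(f,2\delta)$ is obtained by specializing Claim \ref{claim:connect-det-rand} at $\eps = \delta$: its upper bound $\overline{\mathrm{Depth}}^{\mu}_{\mathrm{det}}(f,2\delta) \le 2\cdot\overline{\mathrm{Depth}}^{\mu}(f,\delta)$ rearranges to $\overline{\mathrm{Depth}}^{\mu}(f,\delta) \ge \tfrac12 \overline{\mathrm{Depth}}^{\mu}_{\mathrm{det}}(f,2\delta)$, and multiplying through by $M$ gives the claimed inequality. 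Chaining the three links yields the corollary.

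There is no genuine obstacle here; all of the substantive content is packed into Claim \ref{claim:connect-det-rand}, whose proof is the standard Markov-plus-union-bound derandomization (by Markov, at least half of the trees in the support of a randomized tree have error at most twice the expected error, and at least half have expected depth at most twice the expected depth, so some single deterministic tree achieves both simultaneously). The only bookkeeping to watch is to instantiate Claim \ref{claim:connect-det-rand} at $\eps = \delta$, so that the deterministic depth on the right is quoted at error $2\delta$, matching the error parameter advertised in the corollary's conclusion.
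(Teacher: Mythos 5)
Your proof is correct and matches the paper's intent: the corollary is stated immediately after Claim~\ref{claim:connect-det-rand} with no further argument, precisely because it follows by exactly the three-link chain you spell out (deterministic-is-a-special-case-of-randomized, the hypothesis, and Claim~\ref{claim:connect-det-rand} instantiated at $\eps = \delta$). Nothing to add.
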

With \Cref{cor:rand-to-det-DP} in mind, the remainder of this paper will only consider \emph{randomized} decision trees.

\section{Proof of~\Cref{thm:main-formal}}
The purpose of this section is to prove our direct sum theorem (\Cref{thm:main-formal}) showing that if $f$ is hard to compute, than $f^{\otimes k}$ is even harder to compute. We will in fact prove a {\sl threshold} direct sum theorem, showing that it is hard even to get {\sl most} of the $k$ copies correct. To formalize this, we generalize the notation $\odepth(\cdot, \cdot)$ to take in an additional threshold parameter $t$ specifying how many blocks we allow to be wrong.
\begin{definition}
    \label{def:threshold-DPT}
    For any function $f:\bn \to \bits$, error $\eps$, and threshold $t \in \N$, we use $\odepthmuk(f^{\otimes}, \eps, t)$ to denote the minimum expected depth of a tree $T:(\bn)^k \to \bits^k$ satisfying
    \begin{equation*}
        \Prx_{\bX \sim \mu^k}\bracket*{\lzero{T(\bX) - f^{\otimes k}(\bX)}] > t} \leq \eps.
    \end{equation*}
\end{definition}

\begin{theorem}[A strong threshold direct sum theorem for query complexity]
    \label{thm:threshold-DPT}
    For every function $f:\bits^n \to \bits$, distribution $\mu$ over $\bits^n$, $k \in \N$, and $\gamma, \delta \in (0,1)$,
    \begin{equation*}
        \odepthmuk(f^{\otimes k}, 1 - e^{-\Omega(\delta k)} - \gamma, \Omega(\delta k)) \geq \Omega\left( \frac{\gamma^2 k}{\log(1/\delta)}\right) \cdot \odepthmu(f,\delta).
    \end{equation*}
\end{theorem}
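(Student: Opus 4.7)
The plan is to follow the scaffolding of the proof of \Cref{thm:main-formal} and replace the product-type per-leaf bound with a Chernoff concentration bound on the number of correctly answered blocks. First I would invoke the Hardcore Theorem (\Cref{thm:hardcore theorem for DTs}) with hardness parameter $\gamma/4$ to obtain a $(\gamma/4, d)$-hardcore measure $H$ for $f$ of density $\delta/2$ with respect to $\mu$, where $d = \Theta(\gamma^2/\log(1/\delta)) \cdot \overline{\mathrm{Depth}}^{\mu}(f,\delta)$. Let $T : (\bn)^k \to \bits^k$ be any decision tree of expected depth at most $dk$. The goal is to show that $T$'s output is within Hamming distance $\delta k/32$ of $f^{\otimes k}(\bX)$ with probability at most $\gamma + e^{-\Omega(\delta k)}$, which matches the theorem after adjusting constants.

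The key new ingredient is a threshold version of \Cref{lem:accuracy in terms of hardcore density and advantage}. Fix a leaf $\ell$ of $T$ with output labels $(b_1, \ldots, b_k)$. Since $\mu^k$ is a product distribution and each path restriction $\pi_i$ only touches block $i$, conditioning on reaching $\ell$ preserves the mutual independence of $\bX^{(1)}, \ldots, \bX^{(k)}$; hence the error indicators $\be_i \coloneqq \mathbf{1}[b_i \neq f(\bX^{(i)})]$ are independent conditional on $\ell$. Decomposing $\Ex[b_i f(\bX^{(i)}) \mid \ell] = \Ex[b_i f H \mid \ell] + \Ex[b_i f (1 - H) \mid \ell]$, the first term has absolute value exactly $\Adv_H(\ell, i)$ by \Cref{def:hardcore advantage} (since $T(\bX)_i = b_i$ at $\ell$) and the second has absolute value at most $1 - \Dens_H(\ell, i)$, giving $\Ex[\be_i \mid \ell] \geq (\Dens_H(\ell, i) - \Adv_H(\ell, i))/2$. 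Setting $E_\ell \coloneqq (\Dens_H(\ell) - \Adv_H(\ell))/2$, a one-sided Chernoff bound (\Cref{fact:chernoff}) then yields
\[
\Prx_{\bX \sim \mu^k}\!\left[\lzero{T(\bX) - f^{\otimes k}(\bX)} \leq E_\ell/2 \;\middle|\; \bX \text{ reaches } \ell\right] \leq \exp(-E_\ell/8).
\]

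Next I would argue that $E_\bell = \Omega(\delta k)$ for most leaves $\bell \sim \mu^k(T)$. The Resilience Lemma (\Cref{lem:resilience lemma}), applied to $H$ of density $\delta/2$, gives $\Prx[\Dens_H(\bell) \leq \delta k/4] \leq e^{-\delta k/16}$. \Cref{claim:expected total hardcore advantage} bounds $\Ex[\Adv_H(\bell)] \leq (\gamma/4) \cdot \Ex[\Dens_H(\bell)] = \gamma \delta k/8$, so Markov's inequality gives $\Prx[\Adv_H(\bell) \geq \delta k/8] \leq \gamma$. A union bound then shows that, except on a set of leaves of total mass $\gamma + e^{-\Omega(\delta k)}$, we have $E_\bell \geq \delta k / 16$.

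Finally, setting the threshold $t = \delta k/32 \leq E_\bell/2$ on good leaves, the Chernoff bound above gives conditional miss-probability at most $e^{-\Omega(\delta k)}$; integrating over the leaf distribution and adding in the bad-leaf mass yields an overall upper bound of $\gamma + e^{-\Omega(\delta k)}$ on the probability that $T$ produces Hamming distance at most $\delta k/32$, which is exactly what the theorem asserts. The main obstacle I anticipate is the threshold accuracy lemma itself: carefully verifying the conditional block independence after reaching $\ell$ and extracting the per-block error lower bound cleanly enough to apply a single Chernoff inequality, thereby replacing the $\exp(-\sum_i (1 - p_i))$ multiplicative argument that suffices in the $t = 0$ case of \Cref{thm:main-formal}.
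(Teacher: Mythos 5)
Your proposal is correct and reaches the same conclusion as the paper's Theorem 7.3, but via a slightly different route in the final accounting step. You and the paper agree on the core scaffolding: invoke the Hardcore Theorem, establish (exactly as the paper's Lemma 7.11/Corollary 7.12 do) that conditional on a leaf $\ell$ the per-block error indicators are independent with means bounded below by $p(\ell)_i = (\Dens_H(\ell,i)-\Adv_H(\ell,i))/2$, and then feed the Resilience Lemma and Claim 5.9 into a Chernoff argument. Where you diverge is in how the density and advantage information gets combined. The paper bounds $\Ex_{\bell}[g_t(\Dens_H(\bell)-\Adv_H(\bell))]$ for the Chernoff kernel $g_t(z)=\min(1,e^{t-z/4})$ using the \emph{moment-generating-function} form of the Resilience Lemma (Lemma 7.5, via convexity) and then a dedicated Lipschitz analysis of $g_t$ (Proposition 7.15) to absorb the $\Adv_H(\bell)$ shift, with a separate case split on $\delta k\lessgtr 40$. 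You instead apply Markov's inequality directly to $\Adv_H(\bell)$, take a union bound with the Resilience Lemma tail event, and declare the remaining leaves ``good'' with $E_{\bell}\geq \delta k/16$. Your argument is more modular and elementary—no Lipschitz lemma, no convex-ordering MGF bound, just tail bounds and a union bound—at a small cost of looser constants in the exponent; the paper's approach is tighter but relies on the full convex-domination strength of the resilience machinery. Two minor points worth making explicit in a writeup: the per-leaf Chernoff step implicitly uses stochastic domination (couple each $\be_i$ with $\Ber(p(\ell)_i)$), and randomized trees should first be reduced to their deterministic realizations (as in the paper's Claim 7.13) before invoking conditional block independence.
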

Note that the threshold of $\Omega(\delta k)$ is within a constant factor of optimal, as repeating an algorithm that errs $\delta$ fraction of the time $k$ times will lead to an average of $\delta k$ mistakes.~\Cref{thm:threshold-DPT} implies our standard strong direct sum theorem (\Cref{thm:main-formal}) because \[ \odepthmuk(f^{\otimes k}, \eps, t) \geq \odepthmuk(f^{\otimes k}, \eps, 0) = \odepthmuk(f^{\otimes k}, \eps)\]  for any $t \geq 0$ and $\eps > 0$.

\subsection{The structure of this section}

By the Hardcore Theorem (\Cref{thm:hardcore theorem for DTs}), proving~\Cref{thm:threshold-DPT} reduces to proving:  \medskip 

\begin{tcolorbox}[colback = white,arc=1mm, boxrule=0.25mm]
\begin{theorem}[Hardness of $f^{\otimes k}$ in terms of a hardcore measure for $f$]
    \label{thm:bounds-from-hardcore}
    Suppose that $f:\bn \to \bits$ has an $(\gamma, d)$-hardcore measure w.r.t $\mu$ of density $\delta$. Then, for any $T:(\bn)^k \to \bits^k$ with $\odepthmuk(T) \leq kd$,
    \begin{equation*}
        \Prx_{\bX \sim \mu^k}\bracket*{\lzero{T(\bX) - f^{\otimes k}(\bX)} \leq \lfrac{\delta k}{10}} \leq e^{-\frac{\delta k}{10}} + 10\gamma.
    \end{equation*}
\end{theorem}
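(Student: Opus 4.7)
The plan is to combine the resilience lemma, Markov's inequality applied to $\Adv_H$, and a Chernoff bound applied conditionally at each leaf. This refines the Bernoulli-product argument behind \Cref{lem:accuracy in terms of hardcore density and advantage} to the threshold setting, where we must \emph{count} mistakes rather than merely ask whether any is made. Call a leaf $\ell$ of $T$ \emph{good} if $\Dens_H(\ell) \geq \delta k/2$ and $\Adv_H(\ell) \leq \delta k/10$. By \Cref{lem:resilience lemma}, $\Pr_{\bell \sim \mu^k(T)}[\Dens_H(\bell) < \delta k/2] \leq e^{-\delta k/8}$, and since $\odepthmuk(T) \leq kd$, \Cref{claim:expected total hardcore advantage} combined with Markov yields $\Pr_{\bell}[\Adv_H(\bell) > \delta k/10] \leq 10\gamma$. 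Hence $\Pr[\bell \text{ is bad}] \leq e^{-\delta k/8} + 10\gamma$.

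Next I fix a good leaf $\ell$, corresponding to per-block restrictions $(\pi_1,\ldots,\pi_k)$ and output bits $(b_1,\ldots,b_k) \in \bits^k$ that $T$ writes at $\ell$. Conditioned on $\bell = \ell$, the blocks $\bX^{(1)},\ldots,\bX^{(k)}$ remain mutually independent (each $\pi_i$ constrains only $\bX^{(i)}$), so the indicators $\bZ_i \coloneqq \mathbf{1}[b_i \neq f(\bX^{(i)})]$ are independent Bernoullis conditional on $\bell = \ell$. Using that $H$ takes values in $[0,1]$ and the identity $\mathbf{1}[b_i \neq f(\bx)] = (1 - b_i f(\bx))/2$, the definitions of $\Dens_H$ and $\Adv_H$ give
\[
\Ex[\bZ_i \mid \bell = \ell] \;\geq\; \Ex\bracket*{\mathbf{1}[b_i \neq f(\bX^{(i)})]\, H(\bX^{(i)}) \,\big|\, \bell = \ell} \;\geq\; \frac{\Dens_H(\ell,i) - \Adv_H(\ell,i)}{2}.
\]
Summing over $i$ and using goodness of $\ell$,
\[
\Ex\bracket[\Big]{\textstyle\sum_{i=1}^k \bZ_i \,\Big|\, \bell = \ell} \;\geq\; \frac{\Dens_H(\ell) - \Adv_H(\ell)}{2} \;\geq\; \frac{\delta k/2 - \delta k/10}{2} \;=\; \frac{\delta k}{5}.
\]
Applying \Cref{fact:chernoff} to the $\bZ_i$'s with threshold $\delta k/10$ (which is at most half of the above expectation) then yields $\Pr[\sum_i \bZ_i \leq \delta k/10 \mid \bell = \ell] \leq e^{-\Omega(\delta k)}$.

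A union bound over $\bell$ being bad and over the event of few mistakes at a good leaf gives
\[
\Prx_{\bX \sim \mu^k}\bracket[\big]{\lzero{T(\bX) - f^{\otimes k}(\bX)} \leq \delta k/10} \;\leq\; \Pr[\bell \text{ bad}] + \Pr\bracket[\big]{\textstyle\sum_i \bZ_i \leq \delta k/10 \,\big|\, \bell \text{ good}} \;\leq\; e^{-\Omega(\delta k)} + 10\gamma,
\]
which matches the form of the target bound; tightening the numerical constants in the definition of ``good'' and in the Chernoff tail gives the stated $e^{-\delta k/10}+10\gamma$. The hardest step is the per-block inequality $\Ex[\bZ_i \mid \bell=\ell] \geq (\Dens_H(\ell,i) - \Adv_H(\ell,i))/2$: this is the unique place where the hardcore \emph{measure} $H$ is converted into a statement about genuine mistake probabilities under $\mu$, and the bound $H \leq 1$ is precisely what enables the conversion. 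The remaining ingredients---independence of blocks across a fixed leaf, the resilience bound on $\Dens_H(\bell)$, and the expectation bound on $\Adv_H(\bell)$---are mechanical applications of the already-stated lemmas.
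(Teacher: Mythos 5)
Your proof shares the paper's key technical ingredient, which is the per-block inequality
\[
\Ex\big[\bZ_i \mid \bell = \ell\big] \;\ge\; \frac{\Dens_H(\ell,i) - \Adv_H(\ell,i)}{2},
\]
obtained by dropping $H \le 1$ and expanding $\mathbf{1}[b_i \neq f(x)] = (1 - b_i f(x))/2$; this is exactly the bound $q(\ell)_i \ge p(\ell)_i$ proved inside the paper's \Cref{lem:threshold-accuracy-from-stats}. You also correctly invoke \Cref{lem:resilience lemma} and \Cref{claim:expected total hardcore advantage}, and \Cref{claim:leaf-product} justifies the independence of the $\bZ_i$ at a fixed leaf. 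The difference is in the final assembly: you define good leaves via two events (dense and low-advantage), apply Markov to the advantage, and union-bound. The paper instead works with the function $g_t(z)=\min(1,e^{t-z/4})$ and Lipschitz bounds (\Cref{prop:lipschitz-bounds}), which lets it absorb the advantage term more delicately.

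Unfortunately your assembly has a genuine quantitative gap that cannot be closed by tightening constants. Your final bound has the form
\[
\underbrace{e^{-\delta k/8}}_{\text{resilience}} \;+\; \underbrace{10\gamma}_{\text{Markov}} \;+\; \underbrace{e^{-c\delta k}}_{\text{Chernoff at good leaves}}
\]
for some constant $c>0$. To match the target you would need $e^{-\delta k/8} + e^{-c\delta k} \le e^{-\delta k/10}$ for \emph{all} $\delta k > 0$. But as $\delta k \to 0$ the left side tends to $2$ while the right side tends to $1$, so the inequality fails for all $\delta k$ below some absolute threshold (roughly $\delta k \lesssim 30$) no matter how you choose $c$ or the thresholds $\delta k/2$, $\delta k/10$ defining a good leaf. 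The theorem is not vacuous in that regime (e.g.\ $\delta k = 1$, $\gamma = 10^{-3}$ gives a bound $\approx 0.915 < 1$), so this case must be handled. The issue traces to the Markov step: converting $\Ex[\Adv_H(\bell)] \le \gamma\delta k$ into a $10\gamma$ tail bound costs a full additive unit when the event occurs, whereas the paper's Lipschitz bound \Cref{eq:lipschitz} charges only $\tfrac14\Adv_H(\ell)$ per leaf, yielding $\gamma\delta k/4 \le 10\gamma$ for $\delta k \le 40$ without introducing a second exponential term. You would need an analogous separate argument for small $\delta k$ (or replace Markov with a Lipschitz-style charge) to make the proof go through.
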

\end{tcolorbox}
\medskip 

This section is therefore devoted to proving \Cref{thm:bounds-from-hardcore}.  As discussed in \Cref{sec:tech-overview}, our proof tracks two key quantities: we will analyze how {\sl hardcore density} (\Cref{def:hardcore-density}) and {\sl hardcore advantage} (\Cref{def:hardcore advantage}) are distributed over the leaves of $T$. This proof will be broken into three steps:
\begin{enumerate}
    \item In \Cref{subsec:understand-hardcore}, we prove \Cref{claim:expected total hardcore advantage} and \Cref{lem:resilience lemma}, which aim to understand the distributions of the hardcore density and hardcore advantage of a random leaf of $T$.
    \item In \Cref{subsec:error-one-leaf}, we derive an expression for the probability $T$ makes surprisingly few mistakes as a function of the hardcore density and hardcore advantage at each leaf. This generalizes \Cref{lem:accuracy in terms of hardcore density and advantage}.
    \item In \Cref{subsec:threshold-DPT-proof}, we combine the above  to prove \Cref{thm:bounds-from-hardcore}.
\end{enumerate}

\subsection{How hardcore density and advantage distribute over the leaves}
\label{subsec:understand-hardcore}
We begin with proving our resilience lemma for hardcore density. Roughly speaking, this will say that for any tree $T:(\bn)^k \to \bits^k$, the hardcore density of a random leaf concentrates around $\delta k$. We recall the definition of hardcore density.
\begin{definition}[Hardcore density at $\ell$, \Cref{def:hardcore-density} restated]
\label{def:hardcore-density-body}
For any tree $T:(\bn)^k \to \bits^k$, hardcore measure $H:\bn \to [0,1]$, distribution $\mu$ on $\bn$, $i\in [k]$, and leaf $\ell$ of $T$, the {\sl hardcore density at $\ell$ in the $i$th block} is the quantity: 
\[ \Dens_H(\ell,i)\coloneqq \Ex_{\bX\sim \mu^k}\big[H(\bX^{(i)})\mid \text{$\bX$ reaches $\ell$}\big]. \]
The {\sl total hardcore density at $\ell$} is the quantity $\ds \Dens_H(\ell) \coloneqq \sum_{i =1}^k \Dens_H(\ell,i).$
\end{definition}
The distribution over leaves in the resilience lemma is the canonical distribution.
\begin{definition}[Canonical distribution]
    \label{def:canonical-dist}
    For any tree $T$ and distribution $\mu$ over $T$'s domain, we write $\mu(T)$ to denote the distribution over leaves of $T$ where:  
\[ \Prx_{\bell\sim \mu^k(T)}[\bell = \ell] = \Prx_{\bX\sim \mu}[\text{$\bX$ reaches $\ell$}].\]
\end{definition}
\begin{lemma}[Resilience lemma, generalization of \Cref{lem:resilience lemma}]
    \label{lem:resilience-general}
    For any $T:(\bits^n)^k \to \bits^k$, hardcore measure $H$, distribution $\mu$, and convex $\Phi:\R \to \R$,
    \begin{equation*}
        \Ex_{\bell \sim \mu^k(T)}[\Phi(\dens_H(\bell))] \leq \Ex_{\bz \sim \Bin(k,\delta)}[\Phi(\bz)]
    \end{equation*}
    where $\delta \coloneqq \Ex_{\bx \sim \mu}[H(\bx)]$ is the density of $H$ w.r.t. $\mu$.
\end{lemma}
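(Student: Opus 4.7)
I would prove the inequality by chaining two applications of Jensen's inequality. The first absorbs the decision-tree structure: by definition $\dens_H(\bell)$ is itself a conditional expectation, so for convex $\Phi$, Jensen pulls the outer $\Phi$ inside the conditional expectation over the inputs reaching $\bell$. The second handles the gap between each $[0,1]$-valued $H(\bX^{(i)})$ and a $\{0,1\}$-valued $\Ber(\delta)$ via an independent Bernoullization. Neither step uses anything about $T$ beyond the fact that its leaves partition the input space, which is why the lemma holds for arbitrary partitions of $(\bn)^k$ and not just tree-induced ones.

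\textbf{Step 1: Jensen conditional on the leaf.} Let $\bell$ be the (random) leaf reached by $\bX \sim \mu^k$; then $\bell \sim \mu^k(T)$ by definition of the canonical distribution. Unpacking $\dens_H$,
\[ \dens_H(\bell) \,=\, \sum_{i=1}^k \Ex\!\big[H(\bX^{(i)}) \,\big|\, \bell\big] \,=\, \Ex\!\left[\sum_{i=1}^k H(\bX^{(i)}) \;\Big|\; \bell\right]. \]
Since $\Phi$ is convex, Jensen's inequality conditional on $\bell$ gives $\Phi(\dens_H(\bell)) \leq \Ex\!\left[\Phi\!\left(\sum_i H(\bX^{(i)})\right) \,\big|\, \bell\right]$, and averaging by the tower property yields
\[ \Ex_{\bell \sim \mu^k(T)}[\Phi(\dens_H(\bell))] \,\leq\, \Ex_{\bX \sim \mu^k}\!\left[\Phi\!\left(\sum_{i=1}^k H(\bX^{(i)})\right)\right]. \]

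\textbf{Step 2: Bernoullize each coordinate.} The variables $\by_i \coloneqq H(\bX^{(i)})$ are i.i.d., $[0,1]$-valued, with mean $\delta$. I would couple an additional $\bz_i \sim \Ber(\by_i)$ to each $\by_i$, independently across $i$. Then marginally each $\bz_i \sim \Ber(\delta)$ (since $\Ex[\bz_i] = \Ex[\by_i] = \delta$), the $\bz_i$'s remain jointly independent (each $\bz_i$ is a deterministic function of $\by_i$ and fresh randomness, and the $(\by_i)$'s are independent), so $\sum_i \bz_i \sim \Bin(k,\delta)$; and by construction $\Ex[\bz_i \mid \by_i] = \by_i$, hence
\[ \sum_{i=1}^k \by_i \,=\, \Ex\!\left[\sum_{i=1}^k \bz_i \;\Big|\; \by_1,\ldots,\by_k\right]. \]
A second application of Jensen conditional on $(\by_1,\ldots,\by_k)$ and then averaging gives $\Ex[\Phi(\sum_i \by_i)] \leq \Ex_{\bz \sim \Bin(k,\delta)}[\Phi(\bz)]$, which chained with Step 1 completes the argument.

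\textbf{Main obstacle.} I do not anticipate a serious technical obstacle; the argument is two Jensens glued together, and the decision tree plays essentially no role beyond refining the product measure into a mixture over leaves. The value of phrasing the lemma for all convex $\Phi$ is that the resulting convex-order domination is strictly stronger than the two tail bounds in \Cref{lem:resilience lemma}: taking $\Phi(x) = e^{\lambda x}$ shows that the moment generating function of $\dens_H(\bell)$ is pointwise dominated by that of $\Bin(k,\delta)$, and the Chernoff framework recorded in \Cref{fact:chernoff} then recovers both tails of \Cref{lem:resilience lemma} as immediate corollaries.
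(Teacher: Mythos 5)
Your proof is correct and matches the paper's argument in all essential respects: both proofs introduce the auxiliary Bernoullis $\bz_i \sim \Ber(H(\bX^{(i)}))$, observe that $\dens_H(\bell)$ is a conditional expectation of $\sum_i \bz_i$, apply Jensen, and conclude via independence across blocks that $\sum_i \bz_i \sim \Bin(k,\delta)$. The only cosmetic difference is that you split the Jensen step into two (first deconditioning on the leaf, then on $(\by_1,\ldots,\by_k)$), whereas the paper does a single Jensen conditioned only on the leaf; these are equivalent by the tower property.
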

\begin{proof}
    Draw $\bX \sim \mu^k$. Then, for each $i \in [k]$, independently draw $\bz_i \sim \Ber(H(\bX^{(i)}))$. Note that, for any leaf $\ell$ and $i \in [k]$,
    \begin{equation*}
        \dens_H(\ell, i) \coloneqq \Ex_{\bX \sim \mu^k}[H(\bX^{(i)}) \mid \bX\text{ reaches }\bell] = \Ex_{\bX \sim \mu^k}[\bz_i \mid \bX\text{ reaches }\bell].
    \end{equation*}
    By the above equality and definition $\dens_H(\ell) = \sum_{i \in [k]}\dens_H(\ell,i)$,
    \begin{align*}
        \Ex_{\bell \sim \mu^k(T)}[\Phi(\dens_H(\bell))] &= \Ex_{\bell \sim \mu^k(T)}\bracket[\Bigg]{\Phi\paren[\Bigg]{ \Ex_{\bX \sim \mu^k}\bracket[\Bigg]{\sum_{i \in [k]}\bz_i \mid \bX\text{ reaches }\bell}}}\\
        &\leq \Ex_{\bell \sim \mu^k(T)}\bracket[\Bigg]{\Ex_{\bX \sim \mu^k}\bracket[\Bigg]{\Phi\paren[\Bigg]{ \sum_{i \in [k]}\bz_i \mid \bX\text{ reaches }\bell}}} \tag{Jensen's inequality} \\
        &= \Ex_{\bX \sim \mu^k}\bracket[\Bigg]{\Phi\paren[\Bigg]{ \sum_{i \in [k]}\bz_i}}. \tag{Law of total expectation}
    \end{align*}
    Note that the last line holds precisely for the distribution $\mu^k(T)$ defined in \Cref{def:canonical-dist}, which is why we use that distribution.

    Since $\bX$ is drawn from a product distribution, and $\bz_i$ depends on only the $i^{\text{th}}$ coordinate of $\bX$, $\bz_1,\ldots, \bz_k$ are independent. Furthermore, each has mean $\Ex_{\bx \sim \mu}[H(\bx)] = \delta$. Therefore, $\sum_{i \in [k]}\bz_i$ is distributed according to $\Bin(k,\delta)$.
\end{proof}
A couple of remarks about the above Lemma: First, it implies that $\dens_H(\bell)$ concentrates around $\delta k$. Since $z \mapsto e^{\lambda z}$ is convex for any $\lambda \in \R$,  \Cref{lem:resilience-general} implies that the moment generating function of $\dens_H(\bell)$ is dominated by that of $\Bin(k,\delta)$. This means that Chernoff bounds that hold for $\Bin(k,\delta)$ also hold for $\dens_H(\bell)$. In particular, the statement of \Cref{lem:resilience lemma} is a consequence of the Chernoff bound given in \Cref{fact:chernoff}.

Second, the proof of \Cref{lem:resilience-general} does not make heavy use of the decision tree structure of $T$. It only uses that the leaves of $T$ partition $(\bn)^k$, and so may find uses for other models that partition the domain.

\paragraph{Depth amplification for hardcore advantage.}

While the resilience lemma gives a fairly fine-grained understanding of how hardcore density distributes among the leaves, our guarantee for hardcore advantage are more coarse -- that its expectation over the leaves is bounded.

\begin{definition}[Hardcore advantage at $\ell$, \Cref{def:hardcore advantage} restated]
\label{def:hardcore-advantage-body}
For any tree $T:(\bn)^k \to \bits^k$, hardcore measure $H:\bn \to [0,1]$, distribution $\mu$ on $\bn$, $i\in [k]$, and leaf $\ell$ of $T$, the {\sl hardcore advantage at $\ell$ in the $i$th block} is the quantity: 
\begin{equation}
    \label{eq:hardcore-advantage}
     \Adv_H(\ell,i) \coloneqq \abs[\Big]{\Ex_{\bX\sim \mu^k}\big[ f(\bX^{(i)})T(\bX)_i H(\bX^{(i)})\mid \text{$\bX$ reaches $\ell$}\big]}.
\end{equation}
The {\sl total hardcore advantage at $\ell$} is the quantity  
$ \ds \Adv_H(\ell) \coloneqq \sum_{i=1}^k \Adv_H(\ell, i). $
\end{definition}

\begin{lemma}[Expected total hardcore advantage, \Cref{claim:expected total hardcore advantage} restated]
\label{lem:expected total hardcore advantage body}
    If $H$ is a $(\gamma, d)$-hardcore measure for $f$ of density $\delta$ w.r.t.~$\mu$ and the expected depth of $T$ is at most $dk$, then 
    \begin{equation*}
        \Ex_{\bell \sim \mu^k(T)}[\adv_H(\bell)] \leq \gamma \Ex_{\bell\sim\mu^k(T)}[\Dens_H(\bell)] = \gamma \delta k. 
    \end{equation*}
\end{lemma}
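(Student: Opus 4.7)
\textbf{Proof plan for \Cref{lem:expected total hardcore advantage body}.}
The plan is to extract, from the large tree $T$ of expected depth $dk$, a single randomized decision tree $\mathcal{T}$ of expected depth $d$ on one block, and then apply the hardcore property to $\mathcal{T}$. The key idea is that the factor-of-$k$ savings in depth is exactly what is needed to compensate for the fact that the total hardcore advantage is a sum of $k$ per-block quantities.

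First, I would rewrite each $\adv_H(\bell, i)$ by stripping the absolute value. Since $T(X)_i$ equals some fixed bit $b_\ell^{(i)} \in \bits$ on every input reaching leaf $\ell$, the weighted sum over leaves becomes
\[
\Ex_{\bell\sim\mu^k(T)}[\adv_H(\bell,i)] \;=\; \Ex_{\bX\sim\mu^k}\bigl[f(\bX^{(i)})\,s_{\bell}^{(i)}\,b_{\bell}^{(i)}\,H(\bX^{(i)})\bigr],
\]
where $\bell$ is the leaf $\bX$ reaches and $s_{\ell}^{(i)}\in\{\pm1\}$ is the sign that turns the inner signed expectation into its absolute value. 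The signs $s_\ell^{(i)}$ depend only on $T$, $f$, $H$, and $\mu$, so they are ``constants'' from the perspective of any future algorithm.

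Next, I would define the following randomized decision tree $\mathcal{T}:\bn\to\bits$: on input $x$, sample $\bi\sim[k]$ uniformly and $\bX^{(-\bi)}\sim\mu^{k-1}$, simulate $T$ on the composite input $\bX$ with block $\bi$ set to $x$ (querying only the coordinates of $x$ that $T$ queries in block $\bi$), let $\bell$ be the leaf reached, and output $s_{\bell}^{(\bi)}\cdot b_{\bell}^{(\bi)}$. Two properties are immediate. First, the expected depth of $\mathcal{T}$ under $\mu$ is $\frac{1}{k}\Ex_{\bX\sim\mu^k}[\text{total queries of }T]\le d$, since a uniformly chosen block collects a $1/k$ share of the expected depth. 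Second, averaging the identity above over $\bi\sim[k]$ yields
\[
\Ex_{\bx\sim\mu}\bigl[f(\bx)\,\mathcal{T}(\bx)\,H(\bx)\bigr] \;=\; \frac{1}{k}\sum_{i=1}^{k}\Ex_{\bell\sim\mu^k(T)}[\adv_H(\bell,i)] \;=\; \frac{1}{k}\,\Ex_{\bell\sim\mu^k(T)}[\adv_H(\bell)].
\]

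Finally, I would invoke the hardcore property of $H$ on the randomized tree $\mathcal{T}$ (valid since $\overline{\mathrm{Depth}}^\mu(\mathcal{T})\le d$), giving $\Ex_{\bx\sim\mu}[f(\bx)\mathcal{T}(\bx)H(\bx)]\le\gamma\delta$, and multiply by $k$ to conclude $\Ex_{\bell}[\adv_H(\bell)]\le \gamma\delta k$, which equals $\gamma\,\Ex_{\bell}[\dens_H(\bell)]$ by \Cref{claim:expected total hardcore density}. The main subtlety I anticipate is the sign-aligning step: one must verify that the per-leaf, per-block signs $s_\ell^{(i)}$ can be hard-coded into $\mathcal{T}$'s output rule without inflating its query budget or making it ``look at'' other blocks' inputs. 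This is clean because $\bi$ and $\bX^{(-\bi)}$ are part of $\mathcal{T}$'s internal randomness, so the choice of sign is determined by the leaf reached together with that randomness and requires no additional queries into $x$.
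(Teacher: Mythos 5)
Your proof is correct and follows essentially the same route as the paper's: extract a single-block randomized tree by embedding the input in a random block of $T$ and filling the rest from $\mu^{k-1}$, observe the expected depth drops by a factor of $k$ while the per-block hardcore advantage is exactly recovered after absorbing the absolute value into the output sign, and apply the hardcore property. The only cosmetic difference is that the paper argues by contrapositive and removes the absolute value by flipping the offending leaf labels of $T$ as a WLOG preprocessing step, whereas you argue directly and fold the sign $s_\ell^{(i)}$ into the output rule of $\mathcal{T}$; these are the same maneuver.
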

\begin{proof}[Proof of \Cref{lem:expected total hardcore advantage body}]
    By contrapositive. Suppose there exists $T_{\textnormal{large}}:(\bn)^k \to \bits^k$ making $dk$ queries on average w.r.t. $\mu^k$ for which,
    \begin{equation*}
        \Ex_{\bell \sim \mu^k(T)}[\adv_H(\bell)] > \gamma \Ex_{\bell\sim \mu^k(T)}[\dens_H(\bell)] = \gamma\delta k.
    \end{equation*}
    Then, we'll show there exists $T_{\textnormal{small}}:\bn \to \bits$ making $d$ queries on average w.r.t. $\mu$ for which
    \begin{equation*}
        \Ex_{\bx \sim \mu}[ f(\bx) T_{\textnormal{small}}(\bx) H(\bx)] > \gamma \cdot \Ex_{\bx \sim \mu}[H(\bx)] = \gamma \delta.
    \end{equation*}
    Before constructing $T_{\textnormal{small}}$, we observe that we can assume, without loss of generality, that for every leaf $\ell$ of $T$, that we can remove the absolute value from \Cref{eq:hardcore-advantage}; i.e. that
    \begin{equation*}
        \Adv_H(\ell,i) = \Ex_{\bX\sim \mu^k}\big[ f(\bX^{(i)})T(\bX)_i H(\bX^{(i)})\mid \text{$\bX$ reaches $\ell$}\big]
    \end{equation*}
    Otherwise, we could modify this leaf by flipping the label of $T(X)_i$ whenever $X$ reaches a leaf where the above quantity is negative. This does not change the hardcore advantage, so this new $T$ still satisfies our assumption.
    
    $T_{\textnormal{small}}$ will be a randomized algorithm. Upon receiving the input $x \in \bn$, it samples $\bX \sim \mu^k$ and $\bi \sim \mathrm{Unif}([k])$, and then constructs $\bX(x, \bi)$ by inserting $x$ into the $\bi^{\text{th}}$ block of $\bX$,
    \begin{equation*}
        (\bX(x, i))^{(j)} = \begin{cases}
            \bX^{(j)} &\text{if }j \neq i\\
            x&\text{if }j = \bi.
        \end{cases}
    \end{equation*}
    Then, $T_{\textnormal{small}}(x)$ outputs $T_{\textnormal{large}}(\bX(x, \bi))_{\bi}$.

    Our analysis of $T_{\textnormal{small}}$ relies on the following simple observation: If we sample $\bx \sim \mu$, then even conditioning on any choice of $\bi = i$, the distribution of $\bX(\bx, \bi)$ is $\mu^k$. This also means that $\bX(\bx, \bi)$ and $\bi$ are independent.

    We claim that $T_{\textnormal{small}}$ has the two desired properties; low expected number of queries, and high accuracy on $H$. To bound the expected number of queries $T_{\textnormal{small}}$ makes on an input $\bx \sim \mu$, we use that $\bX(\bx, \bi)$ is distributed according to $\mu^k$. Therefore, $T_{\textnormal{large}}(\bX(\bx, \bi))$ makes, on average, $dk$ queries. Expanded, we have that,
    \begin{equation*}
        \sum_{i \in [k], j\in [n]}\Pr[T_{\textnormal{large}}(\bX(\bx, \bi))\text{ queries }\bX(\bx)^{(i)}_j] = dk.
    \end{equation*}
    Whereas, the number of queries $T_{\textnormal{small}}(\bx)$ makes only counts queries to the $\bi^{\text{th}}$ block, and is therefore,
    \begin{align*}
         \sum_{i \in [k], j\in [n]}\Pr\big[T_{\textnormal{large}}(\bX(\bx, \bi))&\text{ queries }\bX(\bx, \bi)^{(i)}_j \cdot \Ind[\bi = i]\big] \\
         &= \sum_{i \in [k], j\in [n]}\Pr\big[T_{\textnormal{large}}(\bX(\bx, \bi)))\text{ queries }\bX(\bx, \bi)^{(i)}_j\big] \cdot \frac{1}{k}\\
         &= d.
    \end{align*}
    In the above, the first equality uses that $\bi$ is independent of $\bX(\bx, \bi)$, and so is still uniform on $[k]$ even conditioned on which queries $T_{\textnormal{large}}$ makes.

    Lastly, we verify that $T_{\textnormal{small}}$ has high accuracy on the hardcore measure.
    \begin{align*}
        \Ex_{\bx \sim \mu}[ f(\bx) T_{\textnormal{small}}(\bx, \bi)) H(\bx)] &= \Ex_{\bx \sim \mu}[ f(\bx) T_{\textnormal{large}}(\bX(\bx, \bi)))_{\bi} H(\bx)] \tag{Definition of $T_{\textnormal{small}}$}  \\
        &=\Ex_{\bi \sim [k]}\bracket*{\Ex_{\bX \sim \mu^k}\bracket*{f(\bX^{(\bi)}) T_{\textnormal{large}}(\bX)_{\bi} H(\bX^{(\bi)})}} \tag{$\bi, \bX(\bx)$ are independent} \\
        &= \Ex_{\bi \sim [k]}\bracket*{\Ex_{\bell \sim \mu^k(T_{\textnormal{large}})}[\adv_H(\bell, \bi)]} \tag{\Cref{def:hardcore-advantage-body}}\\
        &= \frac{1}{k}\Ex_{\bell \sim \mu^k(T_{\textnormal{large}})}[\adv_H(\bell)] > \gamma \delta. \tag*{\qedhere}
    \end{align*}
\end{proof}

\subsection{Understanding the error in terms of hardcore density and advantage}
\label{subsec:error-one-leaf}
To state the main result of this subsection, we'll define the following distribution for the sum of independent Bernoulli random variables.
\begin{definition}
    For any $p \in [0,1]^k$, we use $\BerSum(p)$ to denote the distribution of $\bz \coloneqq \bz_1 + \cdots + \bz_k$ where each $\bz_i$ is independently drawn from $\Ber(p_i)$.
\end{definition}
The following generalizes \Cref{lem:accuracy in terms of hardcore density and advantage}.
\begin{lemma}[Accuracy in terms of hardcore density and advantage of the leaves]
    \label{lem:threshold-accuracy-from-stats}
    Let $H$ be a $(\gamma, d)$-hardcore measure w.r.t.~$\mu$ for $f: \bits^n \to \bits$, and $T:(\bn)^k \to \bits$ be any tree. Then, for any $t \geq 0$, 
    \begin{equation*}
        \Prx_{\bX \sim \mu^k}\bracket*{\lzero[\big]{T(\bX)-  f^{\otimes k}(\bX)} \leq t} \leq \Ex_{\bell \sim \mu^k(T)}\bracket*{\Prx_{\bz \sim \BerSum(p(\bell))}[\bz \leq t]}
    \end{equation*}
    where $p(\ell) \in [0,1]^k$ is the vector where
    \begin{equation*}
        p(\ell)_i \coloneqq \frac{\dens_H(\ell,i) - \adv_H(\ell,i)}{2}\quad\quad\text{for each }i \in [k].
    \end{equation*}
\end{lemma}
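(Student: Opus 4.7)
The plan is to work leaf by leaf and couple the vector of per-block mistakes with a vector of independent Bernoullis whose parameters match the claimed $p(\ell)_i$. Fix a leaf $\ell$ of $T$: conditioning on $\bX$ reaching $\ell$ fixes some coordinates in each of the $k$ blocks but leaves the blocks mutually independent, with $\bX^{(i)}$ distributed as $\mu$ restricted to the subcube determined by the queries $\ell$ makes in block $i$. Moreover $T(\bX)_i$ is a fixed label $b_i\in\bits$ at $\ell$. Let $M_i\coloneqq \mathbf{1}[b_i \neq f(\bX^{(i)})]$ be the mistake indicator for block $i$; conditioned on $\ell$ the $M_i$'s are independent across $i$.

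For each $i$, introduce an auxiliary coin $\bh_i$ that, conditionally on $\bX^{(i)}$, is $\Ber(H(\bX^{(i)}))$ drawn independently of everything else, and set $\bz_i\coloneqq M_i \cdot \bh_i$. Then $\bz_i \le M_i$ pointwise, the $\bz_i$'s remain independent across $i$ given $\ell$ (since the $\bX^{(i)}$'s and the auxiliary coins are), and using $M_i = (1 - b_i f(\bX^{(i)}))/2$ in the $\pm 1$ convention I compute
$$
\Pr[\bz_i = 1 \mid \bX\text{ reaches }\ell] \;=\; \Ex[M_i\, H(\bX^{(i)}) \mid \bX\text{ reaches }\ell] \;=\; \tfrac{1}{2}\bigl(\dens_H(\ell,i) - b_i\, \Ex[f(\bX^{(i)})H(\bX^{(i)}) \mid \bX\text{ reaches }\ell]\bigr).
$$
Because $b_i\in\bits$ and $T(\bX)_i\equiv b_i$ on $\ell$, the absolute value of the second expectation is exactly $\adv_H(\ell,i)$, so $b_i\cdot\Ex[fH\mid \ell]\le \adv_H(\ell,i)$ and consequently $\Pr[\bz_i=1\mid\ell] \ge (\dens_H(\ell,i)-\adv_H(\ell,i))/2 = p(\ell)_i$.

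To conclude, $\bz_i \le M_i$ gives $\sum_i \bz_i \le \sum_i M_i = \lzero{T(\bX)-f^{\otimes k}(\bX)}$, so conditional on $\ell$ the event $\{\text{mistakes} \le t\}$ is contained in $\{\sum_i \bz_i \le t\}$. Since the $\bz_i$'s are independent Bernoullis with parameters $q_i \ge p(\ell)_i$, the standard stochastic-dominance fact for sums of independent Bernoullis (lowering any single parameter can only increase $\Pr[\text{sum}\le t]$) yields
$$
\Prx_{\bX\sim\mu^k}\bigl[\lzero{T(\bX)-f^{\otimes k}(\bX)} \le t \,\big|\, \bell=\ell\bigr] \;\le\; \Prx_{\bz\sim \BerSum(p(\ell))}[\bz\le t],
$$
and taking expectation over $\bell\sim \mu^k(T)$ proves the lemma. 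The only substantive step is the stochastic-dominance comparison, which is routine; the main care needed is tracking the $\pm 1$ conversion and the fact that $\adv_H(\ell,i)$ is an absolute value, so that $b_i\cdot\Ex[fH\mid\ell] \le \adv_H(\ell,i)$ goes in the right direction.
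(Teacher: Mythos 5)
Your proof is correct and takes essentially the same approach as the paper: both work leaf by leaf, use that conditionally on a leaf the $k$ blocks are independent, show the per-block mistake probability is at least $p(\ell)_i$, and finish with the monotonicity of $\Pr[\BerSum(\cdot)\le t]$ in its parameters. Your auxiliary coupling $\bz_i = M_i\bh_i$ is a slightly more elaborate way to establish the bound $\Pr[M_i=1\mid\ell]\ge p(\ell)_i$ (since $\Pr[M_i=1]\ge\Pr[\bz_i=1]=\Ex[M_iH(\bX^{(i)})\mid\ell]$), whereas the paper computes this bound directly by splitting $\Ex[T_i f]$ into $H$ and $1-H$ parts; the underlying calculation and use of the absolute value in $\adv_H(\ell,i)$ are the same.
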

\Cref{lem:threshold-accuracy-from-stats} implies a generalization of \Cref{lem:accuracy in terms of hardcore density and advantage}.
\begin{corollary}
    \label{cor:threshold-from-stats-chernoff}
    Let $H$ be a $(\gamma, d)$-hardcore measure w.r.t.~$\mu$ for $f: \bits^n \to \bits$, and $T:(\bn)^k \to \bits$ be any tree. Then, for any $t \geq 0$,
    \begin{equation*}
        \Prx_{\bX \sim \mu^k}\bracket[\Big]{\lzero[\big]{T(\bX)- f^{\otimes k}(\bX)} \leq t} \leq \Ex_{\bell \sim \mu^k(T)}\bracket*{\min\paren[\Big]{1, \exp\paren[\Big]{t - \frac{\Dens_H(\bell) - \Adv_H(\bell)}{4}}}}.
    \end{equation*}
\end{corollary}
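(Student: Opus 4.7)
The plan is to derive the corollary directly from \Cref{lem:threshold-accuracy-from-stats} by applying a Chernoff bound leaf-by-leaf to the random variable $\bz \sim \BerSum(p(\bell))$. Fix a leaf $\ell$ and write $\mu_\ell \coloneqq \sum_{i=1}^k p(\ell)_i = (\Dens_H(\ell) - \Adv_H(\ell))/2$ for the mean of $\bz$. By \Cref{lem:threshold-accuracy-from-stats}, it suffices to show that for every $\ell$,
\[
\Prx_{\bz \sim \BerSum(p(\ell))}[\bz \leq t] \;\leq\; \min\!\paren*{1,\, \exp\!\paren*{t - \tfrac{\Dens_H(\ell) - \Adv_H(\ell)}{4}}} \;=\; \min\!\paren*{1,\,\exp(t - \mu_\ell/2)}.
\]

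First I would dispose of the trivial regime. When $t \geq \mu_\ell/2$, the exponential on the right is at least $1$, so the claim holds since probabilities are bounded by $1$. It remains to handle $t < \mu_\ell/2$. In that regime, $t < \mu_\ell$, so \Cref{fact:chernoff} applies to the sum of independent Bernoullis $\bz$ and yields
\[
\Prx_{\bz \sim \BerSum(p(\ell))}[\bz \leq t] \;\leq\; \exp\!\paren*{-\frac{(\mu_\ell - t)^2}{2\mu_\ell}}.
\]

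The final step is the purely algebraic inequality
\[
\frac{(\mu_\ell - t)^2}{2\mu_\ell} \;\geq\; \frac{\mu_\ell}{2} - t
\]
whenever $t \leq \mu_\ell/2$, which rearranges to $(\mu_\ell - t)^2 \geq \mu_\ell^2 - 2\mu_\ell t$, i.e., $t^2 \geq 0$. Combining this with the Chernoff bound gives $\Pr[\bz \leq t] \leq \exp(t - \mu_\ell/2)$, as desired. Taking expectation over $\bell \sim \mu^k(T)$ and invoking \Cref{lem:threshold-accuracy-from-stats} finishes the proof.

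I do not anticipate any substantive obstacle: the work consists of applying \Cref{lem:threshold-accuracy-from-stats}, invoking the standard Chernoff tail bound for sums of independent Bernoullis (exactly the form recorded in \Cref{fact:chernoff}), and verifying a one-line quadratic inequality to convert the Gaussian-type exponent $-(\mu_\ell - t)^2/(2\mu_\ell)$ into the simpler linear exponent $t - \mu_\ell/2$ that appears in the corollary. The main subtlety, such as it is, is remembering to include the trivial $\min(1, \cdot)$ clipping, which is forced on us precisely because the linear exponent $t - \mu_\ell/2$ is loose (though convenient) once $t$ approaches or exceeds the mean.
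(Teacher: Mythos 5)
Your proof is correct and follows essentially the same route as the paper's: both reduce to the per-leaf bound via \Cref{lem:threshold-accuracy-from-stats}, invoke the Chernoff bound of \Cref{fact:chernoff}, and obtain the linear exponent $t - \mu_\ell/2$ from the Gaussian-type exponent by dropping the $t^2$ term. The only cosmetic difference is where you place the trivial case split (you split on $t \gtrless \mu_\ell/2$ while the paper splits on $t \gtrless \mu_\ell$ and separately observes the Chernoff bound is at most $1$), but both are valid and interchangeable.
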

\begin{proof}
    The Chernoff bound of \Cref{fact:chernoff} says that, for any $p \in [0,1]^k$ and $\mu \coloneqq \sum_{i \in [k]} p_i$,
    \begin{equation*}
        \Prx_{\bz \sim \BerSum(p)}[\bz \leq t] \leq \begin{cases}
            \exp\paren*{-\frac{(\mu - t)^2}{2\mu}}&\text{if }\mu \geq t \\
            1& \text{otherwise.}
        \end{cases}
    \end{equation*}
    We want to show that the above is bounded by $\min(1, e^{t- \mu/2})$. Clearly this holds for $\mu < t$, so we need only consider the case where $\mu \geq t$
    \begin{equation*}
         \exp\paren*{-\tfrac{(\mu - t)^2}{2\mu}} = \exp\paren*{-\tfrac{\mu^2 - 2t\mu + t^2}{2\mu}} \leq \exp\paren*{-\tfrac{\mu^2 - 2t\mu}{2\mu}} = e^{t - \mu/2}.
    \end{equation*}
    Since $ \exp\paren*{-\frac{(\mu - t)^2}{2\mu}} \leq 1$ as well, it is upper bounded by $\min(1, e^{t- \mu/2})$ as desired. The desired result follows from \Cref{lem:threshold-accuracy-from-stats} as well as $\sum_{i \in [k]} p(\ell)_i = \frac{\Dens_H(\ell) - \Adv_H(\ell)}{2}$ for every leaf $\ell$ of $T$.
\end{proof}

The main observation underlying \Cref{lem:threshold-accuracy-from-stats} is that, if we choose an input $\bX \sim \mu^k$ conditioned on reaching a leaf $\ell \in T$, that $\bX$ is distributed according to a $k$-wise product distribution (i.e.~from $\mu_1(\ell) \times \cdots \times \mu_k(\ell)$ for appropriately defined distributions). The below is essentially the same as  Lemma 3.2 of \cite{Dru12}, but we include a proof for completeness.
\begin{claim}
    \label{claim:leaf-product}
    For any (potentially randomized) tree $T:(\bn)^k \to \mcY$ and leaf $\ell$ of $T$, if $\bX \sim \mu^k$, then the distribution of $\bX$ conditioned on reaching the leaf $\ell$ is a product distribution over the $k$ blocks of $\bX$.
\end{claim}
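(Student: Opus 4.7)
My plan is to first reduce to deterministic trees and then expose the product structure by tracing the root-to-leaf path. A randomized tree $\mcT$ is by definition a distribution over deterministic trees, and the random draw $\bT \sim \mcT$ is independent of the input $\bX \sim \mu^k$; a leaf $\ell$ of $\mcT$ is therefore naturally identified with a pair $(T', \ell')$ where $T'$ is a deterministic tree in the support of $\mcT$ and $\ell'$ is a leaf of $T'$. The event ``$\bX$ reaches $\ell$'' then factors as the conjunction of $\{\bT = T'\}$, which depends only on the tree-sampling randomness, and ``$\bX$ reaches $\ell'$ in $T'$'', so conditioning on the former event does not alter the distribution of $\bX$. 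It therefore suffices to prove the claim for an arbitrary deterministic tree.

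For the deterministic case, I would inspect the unique root-to-$\ell$ path in $T$. Each internal node on this path queries a single coordinate $x^{(i_j)}_{c_j}$ belonging to exactly one block $i_j \in [k]$ and proceeds toward $\ell$ only when that coordinate takes a specific value $a_j \in \bits$. Consequently, the event ``$\bX$ reaches $\ell$'' is precisely the conjunction $\bigwedge_j \{\bX^{(i_j)}_{c_j} = a_j\}$. Grouping these constraints by block, for each $i \in [k]$ they restrict $\bX^{(i)}$ to a subcube $C_i \subseteq \bn$ (the set of strings in the $i$th block consistent with the queries the path makes to that block), and impose no cross-block relation. Since $\mu^k = \mu \times \cdots \times \mu$ is itself a product distribution across blocks, conditioning on the product event $\{\bX^{(1)} \in C_1\} \cap \cdots \cap \{\bX^{(k)} \in C_k\}$ yields the product measure $\mu|_{C_1} \times \cdots \times \mu|_{C_k}$, which is the conclusion.

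There is no substantive technical obstacle here; the only thing one must do carefully is book-keep which coordinates of which blocks have been queried along the root-to-$\ell$ path and observe that this information factorizes across blocks. The argument makes no use of the output alphabet (so it applies equally to $T:(\bn)^k \to \mcY$ for any $\mcY$) and no use of any bound on the depth of $T$, so it applies in full generality, exactly as the statement requires.
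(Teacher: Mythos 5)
Your proposal is correct and takes essentially the same approach as the paper: reduce to deterministic trees, then observe that the event of reaching a fixed leaf imposes only per-block coordinate constraints, so that conditioning the product measure $\mu^k$ on this product event yields a product measure. The paper formalizes the deterministic step by induction on the depth of the node, peeling off one query at a time, whereas you argue directly from the full root-to-leaf path; these are interchangeable, and your handling of the randomized-to-deterministic reduction (noting that the tree-sampling randomness is independent of $\bX$, so conditioning on $\{\bT = T'\}$ leaves $\bX$'s law unchanged) is if anything slightly more explicit than the paper's.
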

\begin{proof}
    First, if $T$ is a randomized tree, it is as a distribution over deterministic trees. If the desired result holds for each of those deterministic trees, it also holds for $T$. Therefore, it suffices to consider the case where $T$ is deterministic.

    We'll prove that the distribution of $\bX$ reaching any internal node \emph{or} leaf of $T$ is product by induction on the depth of that node. If that depth is $0$, then all inputs reach it and so the desired result follows from $\mu^k$ being product.

    For depth $d \geq 1$, let $\alpha$ be the parent of $\ell$. Then $\alpha$ has depth $d-1$, so by the inductive hypothesis, the distribution of inputs reaching $\alpha$ is product. Let $i \in [k], j \in [n], b \in \bits$ be chosen so that an input $X$ reaches $\ell$ iff it reaches $\alpha$ and $X^{(i)}_j = b$. Then,
    \begin{align*}
        \Pr[&\bX =X \mid \bX \text{ reaches } \ell] \\
        &= \Pr[\bX =X\mid \bX \text{ reaches } \alpha] \cdot \frac{\Ind[X^{(i)}_j = b]}{\Pr[X^{(i)}_j=b \mid \bX \text{ reaches } \alpha]} \\
         &= \frac{\Ind[X^{(i)}_j = b]}{\Pr[X^{(i)}_j =b\mid \bX \text{ reaches } \alpha]} \cdot \prod_{\ell \in [k]}  \Pr[\bX^{(\ell)} =X^{(\ell)}\mid \bX \text{ reaches } \alpha] \tag{Inductive hypothesis}\\
         &= \paren*{\prod_{\ell \neq i}  \Pr[\bX^{(\ell)} =X^{(\ell)}\mid \bX \text{ reaches } \alpha]} \cdot \frac{\Pr[\bX^{(i)} =X^{(i)}\mid \bX \text{ reaches } \alpha] \cdot \Ind[X^{(i)}_j = b]}{\Pr[X^{(i)}_j =b\mid \bX \text{ reaches } \alpha]}\\
         &= \paren*{\prod_{\ell \neq i}  \Pr[\bX^{(\ell)} =X^{(\ell)}\mid \bX \text{ reaches } \alpha]} \cdot \Pr[\bX^{(i)} =X^{(i)}\mid \bX \text{ reaches } \alpha, \bX^{(i)}_j = b].
    \end{align*}
    The above is decomposed as a product over the $k$ components of $X$, so is a product distribution.
\end{proof}
We conclude this subsection with a proof of \Cref{lem:threshold-accuracy-from-stats}.
\begin{proof}[Proof of \Cref{lem:threshold-accuracy-from-stats}]
    Consider any leaf $\ell$ of $T$. We wish to compute the probability that $T(\bX)$ makes less than $t$ mistakes on $f^{\otimes k}(\bX)$ given that $\bX$ reaches the leaf $\ell$. On this leaf, $T$ outputs a single vector $y \in \bits^k$. Meanwhile, by \Cref{claim:leaf-product}, the distribution of $\bX$ is product over the blocks, and so $f(\bX^{(1)}), \ldots, f(\bX^{(k)})$ are independent. Define $q(\ell) \in [0,1]^k$ as,
    \begin{equation*}
        q(\ell)_i \coloneqq \Prx_{\bX \sim \mu^k}[y_i \neq f(\bX^{(i)})].
    \end{equation*}
    Then,
    \begin{equation*}
        \Prx_{\bX \sim \mu^k}\bracket[\big]{\lzero[\big]{T(\bX), f^{\otimes k}(\bX)} \leq t \mid \bX\text{ reaches }\ell} = \Prx_{\bz \sim \BerSum(q(\ell))}[\bz \leq t].
    \end{equation*}
    For $\bz \sim \BerSum(q)$, the probability $\bz \leq t$ is monotonically decreasing in each $q_i$. Therefore, it suffices to show that $q(\ell)_i \geq p(\ell)_i$ for each $i \in [k]$. We compute,
    \begin{align*}
        q(\ell)_i &= \Prx_{\bX \sim \mu^k}[T(\bX)_i \neq f(\bX^{(i)}) \mid \bX\text{ reaches }\ell] \\
        &=  \frac{1 - \Ex_{\bX \sim \mu^k}[T(\bX)_i f(\bX^{(i)}) \mid \bX\text{ reaches }\ell]}{2} \tag{$T(X)_i, f(X^{(i)} \in \bits$}
    \end{align*}
    Separating the above expectation into two pieces, for $\bX$ drawn from $\mu^k$ conditioned in $\bX$ reaching $\ell$,
    \begin{align*}
        \Ex[T(\bX)_i f(\bX^{(i)})] &= \Ex[T(\bX)_i f(\bX^{(i)})H(\bX^{(i)})] +  \Ex[T(\bX)_i f(\bX^{(i)})(1-H(\bX^{(i)}))] \\
        &\leq \adv_H(\ell,i) +  \Ex[T(\bX)_i f(\bX^{(i)})(1-H(\bX^{(i)}))] \tag{\Cref{def:hardcore advantage}}\\
        &\leq \adv_H(\ell,i) +  \Ex[(1-H(\bX^{(i)}))] = 1 +\adv_H(\ell,i) - \dens_H(\ell,i).
    \end{align*}
    Therefore,
    \begin{equation*}
        q(\ell)_i \geq \frac{\dens_H(\ell,i)- \adv_H(\ell,i)}{2} = p(\ell)_i. \qedhere
    \end{equation*}
\end{proof}

\subsection{Completing the proof of the threshold direct sum theorem}
\label{subsec:threshold-DPT-proof}
In this subsection, we complete the proof of \Cref{thm:bounds-from-hardcore}. Throughout this section, we'll use the following function:
\begin{equation*}
    g_t(z) \coloneqq \min(1, e^{t - z/4}).
\end{equation*}
By using \Cref{cor:threshold-from-stats-chernoff}, it suffices to show that
\begin{equation}
    \label{eq:error-bound}
     \Ex_{\bell \sim \mu^k(T)}\bracket*{g_{\delta k/10}(\Dens_H(\bell) - \Adv_H(\bell))}  \leq e^{-\delta k/10} + 1-\gamma.
\end{equation}
Recall that we have much information about how $\Dens_H(\bell)$ distributes over the leaves via \Cref{lem:resilience-general}, but a coarser understanding of how $\Adv_H(\bell)$ distributes via \Cref{lem:expected total hardcore advantage body}. Because of this, we will first bound the above equation where the $\Adv_H(\bell)$ is set to $0$ and analyze how much including that term affects the result.
\begin{lemma}
    \label{lem:error-no-advantange}
    For any tree $T:(\bn)^k \to \bits^k$ and hardcore measure of density $\delta$ w.r.t. distribution $\mu$,
    \begin{equation*}
        \Ex_{\bell \sim \mu^k(T)}\bracket*{g_{\delta k/10}(\Dens_H(\bell))} \leq e^{-.121\delta k}.
    \end{equation*}
\end{lemma}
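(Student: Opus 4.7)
The plan is to first replace the non-convex function $g_{\delta k/10}$ with a convex exponential surrogate, so that the resilience lemma (\Cref{lem:resilience-general}) can be applied. Specifically, I claim the pointwise bound $g_t(z) = \min(1, e^{t - z/4}) \leq e^{t - z/4}$ holds for all $z \in \R$: when $z \geq 4t$ the bound is tight, and when $z < 4t$ both sides exceed $1$ since $g_t(z) = 1$ and $e^{t-z/4} \geq 1$. Applying this pointwise to $\Dens_H(\bell)$ and factoring out the $e^t$ gives
\[
\Ex_{\bell \sim \mu^k(T)}\bracket*{g_{\delta k/10}(\Dens_H(\bell))} \;\leq\; e^{\delta k / 10} \cdot \Ex_{\bell \sim \mu^k(T)}\bracket*{e^{-\Dens_H(\bell)/4}}.
\]

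Next, I would observe that $\Phi(z) := e^{-z/4}$ is convex (its second derivative is $\tfrac{1}{16}e^{-z/4} > 0$), which is exactly the hypothesis needed to invoke \Cref{lem:resilience-general} and bound the remaining expectation by the corresponding binomial expectation. Since $\bZ \sim \Bin(k,\delta)$ is a sum of $k$ independent $\Ber(\delta)$ random variables, the moment generating function factorizes cleanly:
\[
\Ex_{\bZ \sim \Bin(k,\delta)}\bracket[\big]{e^{-\bZ/4}} \;=\; \paren[\big]{1 - \delta + \delta e^{-1/4}}^k \;=\; \paren[\big]{1 - \delta(1 - e^{-1/4})}^k \;\leq\; e^{-\delta k(1 - e^{-1/4})},
\]
where the final inequality uses $1 + x \leq e^x$.

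Chaining the two inequalities, the total upper bound becomes $\exp\paren[\big]{\delta k/10 - \delta k(1 - e^{-1/4})}$, and the numerical check $1 - e^{-1/4} - 1/10 > 0.121$ yields the claimed $e^{-0.121\,\delta k}$. There is no serious obstacle in this plan: the only real insight is that although $g_t$ has a concave kink at $z = 4t$ and therefore is not directly amenable to the resilience lemma, the slack in the crude bound $g_t(z) \leq e^{t-z/4}$ when $z$ is small is comfortably absorbed by the large gap between $1 - e^{-1/4} \approx 0.221$ and the threshold offset $1/10$. After this replacement, the rest is a routine MGF calculation for the binomial distribution.
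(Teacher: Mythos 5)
Your proof is correct and follows the same approach as the paper's: discard the $\min$ via $g_t(z) \leq e^{t-z/4}$, invoke \Cref{lem:resilience-general} with the convex function $z \mapsto e^{-z/4}$, compute the binomial moment generating function, and chain. The paper's proof is essentially identical in every step, including the final numerical check $1 - e^{-1/4} - 1/10 > 0.121$.
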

\begin{proof}
    We bound,
    \begin{equation*}
        \Ex_{\bell \sim \mu^k(T)}\bracket*{\min\paren[\Big]{1, \exp\paren[\Big]{\delta k / 10 - \frac{\Dens_H(\bell)}{4}}}} \leq e^{\delta k/10} \cdot \Ex_{\bell \sim \mu^k(T)}\bracket*{e^{-\Dens_H(\ell)/4}}.
    \end{equation*}
    Since $z \mapsto e^{-z/4}$ is convex, we can use \Cref{lem:resilience-general} to bound the above using the moment generating function of the binomial distribution,
    \begin{align*}
         \Ex_{\bell \sim \mu^k(T)}\bracket*{e^{-\Dens_H(\bell)/4}} &\leq \Ex_{\bz \sim \Bin(k,\delta)}[e^{-\bz/4}] \\
         &= (1 - \delta(1 - e^{-1/4}))^k \\
         & \leq e^{-(1 -e^{-1/4})\delta k}.
    \end{align*}
    Combining the above,
    \begin{equation*}
         \Ex_{\bell \sim \mu^k(T)}\bracket*{g_{\delta k/10}(\Dens_H(\bell))}  \leq e^{-\delta k (1 - e^{-1/4} - 1/10)} \leq e^{-0.121 \delta k}. \qedhere
    \end{equation*}
\end{proof}
Next, we prove a Lipschitz-style bound for $g$. This will be useful in incorporating $\Adv_H(\bell)$ to \Cref{eq:error-bound}.
\begin{proposition}
    \label{prop:lipschitz-bounds}
    For any $z, \Delta, t \geq 0$,
    \begin{equation}
        \label{eq:lipschitz}
        g_t(z - \Delta) \leq g_t(z) + \Delta/4.
    \end{equation}
    Furthermore, if $z \geq 5t$, then
    \begin{equation}
        \label{eq:lipschitz-scaled}
        g_t(z - \Delta) \leq g_t(z) + \Delta/t.
    \end{equation}
\end{proposition}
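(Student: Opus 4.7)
Both inequalities concern how fast $g_t(z) = \min(1, e^{t - z/4})$ can grow as we decrease its argument. For (\ref{eq:lipschitz}) I would observe that $g_t$ is $(1/4)$-Lipschitz: on $z < 4t$ it is constantly $1$, while on $z > 4t$ its derivative is $-\tfrac14 e^{t - z/4} \in (-\tfrac14, 0)$, using $e^{t - z/4} \le 1$. The fundamental theorem of calculus, combined with monotonicity of $g_t$, then yields $g_t(z - \Delta) - g_t(z) \le \Delta/4$ immediately.

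For (\ref{eq:lipschitz-scaled}), the stronger hypothesis $z \ge 5t$ buys us the pointwise bound $g_t(z) = e^{t - z/4} \le e^{-t/4}$. I would split on the size of $\Delta$ relative to $t$. If $\Delta \ge t$, then trivially $g_t(z - \Delta) - g_t(z) \le g_t(z - \Delta) \le 1 \le \Delta/t$. If $\Delta < t$, then $z - \Delta > 4t$ (this is where $z \ge 5t$ is used), so both $g_t(z)$ and $g_t(z - \Delta)$ equal their unclipped values and
\[ g_t(z - \Delta) - g_t(z) \;=\; e^{t - z/4}\bigl(e^{\Delta/4} - 1\bigr) \;\le\; e^{-t/4}\bigl(e^{\Delta/4} - 1\bigr). \]
To turn the right-hand side into something linear in $\Delta/t$, I would apply convexity of $x \mapsto e^{x/4}$ on the interval $[0, t]$ to obtain the chord bound $e^{\Delta/4} - 1 \le (e^{t/4} - 1) \cdot \Delta/t$ for $\Delta \in [0, t]$. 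Multiplying by $e^{-t/4}$ yields $e^{-t/4}(e^{\Delta/4} - 1) \le (1 - e^{-t/4}) \cdot \Delta/t \le \Delta/t$, completing the proof.

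The main obstacle is the small-$\Delta$ case of (\ref{eq:lipschitz-scaled}): the naive estimate $e^{\Delta/4} - 1 \le \Delta/4$ is \emph{false} (since $e^x \ge 1 + x$), so a pointwise Lipschitz bound can give only $\Delta/4$, not $\Delta/t$. The point of calibrating the chord bound to the interval $[0, t]$ rather than using a pointwise derivative estimate is that the resulting prefactor collapses to $1 - e^{-t/4} \le 1$, which absorbs the otherwise-large slope $\tfrac14 e^{t/4}$ of the exponential at the right endpoint. Everything else---checking the case split and verifying $z - \Delta > 4t$ when $\Delta < t$---is routine.
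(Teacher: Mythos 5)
Your proof is correct, and for \eqref{eq:lipschitz-scaled} it takes a genuinely different route from the paper's. You split explicitly on whether $\Delta \ge t$, handling the large-$\Delta$ case via the trivial bound $g_t(z-\Delta) - g_t(z) \le 1 \le \Delta/t$, and the small-$\Delta$ case via the chord bound $e^{\Delta/4} - 1 \le (e^{t/4}-1)\Delta/t$ on $[0,t]$, which after multiplying by $e^{-t/4}$ collapses to the prefactor $1 - e^{-t/4} \le 1$. The paper instead argues directly that the difference quotient $\bigl(g_t(z-\Delta) - g_t(z)\bigr)/\Delta$ is maximized at $\Delta = z - 4t$: for $\Delta < z - 4t$ it is non-decreasing because $g_t$ is convex on $[4t,\infty)$ (equivalently, the secant slope from $z$ is monotone), while for $\Delta > z - 4t$ the numerator plateaus at $1 - g_t(z)$ and the denominator grows. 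Evaluating at the maximizer gives $\frac{1 - g_t(z)}{z - 4t} \le \frac{1}{z-4t} \le \frac{1}{t}$, the last step using $z \ge 5t$. Both arguments rest on the same convexity property of the exponential tail; the paper's is more compact by locating the single extremal $\Delta$, whereas yours is more explicit and incidentally produces the slightly sharper constant $1 - e^{-t/4}$ in the $\Delta < t$ regime. Either is fine, and your handling of \eqref{eq:lipschitz} by $(1/4)$-Lipschitzness matches the paper exactly.
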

\begin{proof}
    \Cref{eq:lipschitz} follows from the $(1/4)$-Lipschitzness of $g_t(z)$.

    For \Cref{eq:lipschitz-scaled}, fix any choice of $z \geq 5t$. We want to show that for any choice of $\Delta$,
    \begin{equation*}
        \frac{ g_t(z - \Delta) - g_t(z)}{\Delta} \leq 1/t.
    \end{equation*}
    We claim that the left hand side of the above inequality is maximized when $\Delta = z - 4t$. When $\Delta$ is increased beyond $z - 4t$, the numerator remains constant (because $g_t(z)$ is constant for any $z \leq 4t$, but the denominator increases, so the maximum cannot occur at at any $\Delta > z-4t$. On the other hand, $g_t(z)$ is convex when restricted to the domain $[4t, \infty)$, so the maximum cannot occur at any $\Delta < z - 4t$. Therefore, it suffices to consider $\Delta = z - 4t$, in which case,
    \begin{equation*}
         \frac{ g_t(z - \Delta) - g_t(z)}{\Delta} = \frac{1 - g_t(z)}{\Delta} \leq \frac{1}{\Delta} \leq 1/t. \qedhere
    \end{equation*}
\end{proof}

We are now ready to prove the main result of this section.
\begin{proof}[Proof of \Cref{thm:bounds-from-hardcore}]
    By applying \Cref{cor:threshold-from-stats-chernoff},
    \begin{equation*}
        \Prx_{\bX \sim \mu^k}\bracket*{\lzero[\big]{T(\bX)-f^{\otimes k}(\bX)} \leq \delta k/10} \leq \Ex_{\bell \sim \mu^k(T)}\bracket*{g_{\delta k/10}(\Dens_H(\bell) - \Adv_H(\bell))}.
    \end{equation*}
    First, we consider the case where $\delta k \leq 40$. Here, by applying \Cref{eq:lipschitz},
    \begin{align*}
        \Ex_{\bell \sim \mu^k(T)}\bracket*{g_{\delta k/10}(\Dens_H(\bell) - \Adv_H(\bell))} &\leq \Ex_{\bell \sim \mu^k(T)}\bracket*{g_{\delta k/10}(\Dens_H(\bell))} + \tfrac{1}{4}\cdot \Ex_{\bell \sim \mu^k(T)}[\Adv_H(\bell)]\\
        &\leq e^{-0.121 \delta k} + \gamma \delta k/4 \tag{\Cref{lem:error-no-advantange,lem:expected total hardcore advantage body}}\\
        &\leq e^{-\delta k/10} + 10\gamma\tag{$\delta k \leq 40$}
    \end{align*}

    When $\delta k > 40$, we break down the desired result into two pieces, depending on whether $\Dens_H(\bell)$ is small or large. For the piece where $\Dens_H(\bell)$ is small, we just use that $g(\cdot)$ is bounded between $0$ and $1$ which means $g(z) - g(z-\Delta) \leq 1$,
    \begin{align*}
         \Ex\big[g_{\delta k/10}(\Dens_H(\bell) &- \Adv_H(\bell)) \cdot \Ind[\Dens_H(\bell) \leq \delta k/2]\big] \\
         &\leq \Ex\bracket*{g_{\delta k/10}(\Dens_H(\bell))\cdot \Ind[\Dens_H(\bell) \leq \delta k/2]} + \Pr[\Dens_H(\bell) \leq \delta k/2]\\
         &\leq  \Ex\bracket*{g_{\delta k/10}(\Dens_H(\bell))\cdot \Ind[\Dens_H(\bell) \leq \delta k/2]}  + e^{-\delta k/8}. \tag{\Cref{lem:resilience lemma}}
    \end{align*}
    For the piece where $\Dens_H(\bell)$ is large, we use \Cref{eq:lipschitz-scaled}
    \begin{align*}
        \Ex\big[g_{\delta k/10}&(\Dens_H(\bell) - \Adv_H(\bell)) \cdot \Ind[\Dens_H(\bell) > \delta k/2]\big] \\
        &\leq \Ex\bracket*{g_{\delta k/10}(\Dens_H(\bell))\cdot \Ind[\Dens_H(\bell) > \delta k/2]} + \tfrac{10}{\delta k} \cdot \Ex\bracket*{\Adv_H(\bell)\cdot \Ind[\Dens_H(\bell) > \delta k/2]} \tag{\Cref{eq:lipschitz-scaled}} \\
        &\leq \Ex\bracket*{g_{\delta k/10}(\Dens_H(\bell))\cdot \Ind[\Dens_H(\bell) > \delta k/2]} + \tfrac{10}{\delta k} \cdot \Ex\bracket*{\Adv_H(\bell)} \tag{$\Dens_H(\bell) \geq 0$} \\
        &\leq \Ex\bracket*{g_{\delta k/10}(\Dens_H(\bell))\cdot \Ind[\Dens_H(\bell) > \delta k/2]} + \tfrac{10}{\delta k} \cdot \gamma \delta k \tag{\Cref{lem:expected total hardcore advantage body}} 
    \end{align*}
    Combining the above two pieces,
    \begin{align*}
        \Ex\big[g_{\delta k/10}(\Dens_H(\bell) - \Adv_H(\bell))\big] &\leq \Ex\big[g_{\delta k/10}(\Dens_H(\bell))\big] + e^{-\delta k/8} + 10\gamma \\
        &\leq e^{-\delta k/8} + e^{-0.121 \delta k} + 10\gamma \tag{\Cref{lem:error-no-advantange}}
    \end{align*}
    When $\delta k > 40$, $e^{-\delta k/8} + e^{-0.121 \delta k} < e^{-\delta k/10}$, so we also recover the desired result in this case.
\end{proof}

\section{Equivalence between direct sum theorems and XOR lemmas and the proof of~\Cref{thm:main-xor}}

In this section, we prove the following claim which shows that a strong direct sum theorem implies a strong XOR lemma. We then derive~\Cref{thm:main-xor} as a consequence of this equivalence and our strong direct sum theorem for query complexity (\Cref{thm:main-intro}).

\begin{claim}[Equivalence between direct sum theorems and XOR lemmas]
\label{claim:equivalence between direct product and XOR}
For every $f:\bn\to\bits$, distribution $\mu$ over $\bn$, integer $k\in\N$, multiplicative factor $M \in \R$, and $\eps\in (0,1)$, if the following direct sum theorem holds:
$$
\overline{\mathrm{Depth}}^{\mu^k}(f^{\otimes k},\eps)\ge M\cdot\overline{\mathrm{Depth}}^{\mu}(f,\delta).
$$
then, the following XOR lemma holds:
$$
\overline{\mathrm{Depth}}^{\mu^k}(f^{\oplus k},\tfrac{\eps}{2})\ge M\cdot\overline{\mathrm{Depth}}^{\mu}(f,\delta).
$$
\end{claim}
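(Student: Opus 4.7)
The plan is a simple structural reduction: given any (possibly randomized) decision tree $T:(\bn)^k\to\bits$ computing $f^{\oplus k}$ to error $\eps/2$ with expected depth $D$ with respect to $\mu^k$, I will construct a tree $T':(\bn)^k\to\bits^k$ with the \emph{same} tree structure (and hence the same expected depth $D$) but with leaves labeled by vectors in $\bits^k$, and show that $T'$ computes $f^{\otimes k}$ to error at most $\eps$.  Feeding this tree into the direct sum hypothesis then gives $D\ge M\cdot\overline{\mathrm{Depth}}^\mu(f,\delta)$, which is exactly the desired XOR lemma.  (For a randomized $T$, the construction is applied to each deterministic tree in its support, and the resulting distribution of trees is used as $T'$.)

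The labels of $T'$ are the natural ones.  By \Cref{claim:leaf-product}, conditioning on $\bX\sim\mu^k$ reaching any leaf $\ell$ of $T$ leaves the blocks $\bX^{(1)},\ldots,\bX^{(k)}$ mutually independent.  For each leaf $\ell$ and each $i\in[k]$, define
\[
p^\ell_i \coloneqq \Prx_{\bX\sim\mu^k}[f(\bX^{(i)})=+1 \mid \bX\text{ reaches }\ell], \qquad \beta^\ell_i \coloneqq 2p^\ell_i - 1 \in[-1,1],
\]
and label the corresponding leaf of $T'$ by the vector $v^\ell\in\bits^k$ with $v^\ell_i=\mathrm{sign}(\beta^\ell_i)$ (ties broken arbitrarily)---i.e.\ the coordinate-wise most likely value of $f^{\otimes k}(\bX)$ at $\ell$.

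The error analysis is leaf-by-leaf.  Let $b_\ell\in\bits$ denote $T$'s original bit label at $\ell$ and set $\gamma^\ell\coloneqq\prod_i\beta^\ell_i$.  The block independence at $\ell$ makes $f^{\oplus k}(\bX)$ a $\pm 1$ random variable with mean $\gamma^\ell$ and each $f(\bX^{(i)})$ an independent $\pm 1$ variable with mean $\beta^\ell_i$, yielding the per-leaf identities
\[
\text{XOR error at }\ell \;=\; \tfrac{1}{2}(1 - b_\ell\gamma^\ell), \qquad \text{direct-product error at }\ell \;=\; 1 - \prod_i\tfrac{1+|\beta^\ell_i|}{2}.
\]
The elementary inequality $(1+|\beta|)/2\ge|\beta|$ for $|\beta|\le 1$, applied termwise, gives
\[
\prod_i\tfrac{1+|\beta^\ell_i|}{2} \;\ge\; \prod_i|\beta^\ell_i| \;=\; |\gamma^\ell| \;\ge\; b_\ell\gamma^\ell,
\]
so the direct-product error at each leaf is at most twice the XOR error at that leaf.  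Averaging over $\bell\sim\mu^k(T)$ (\Cref{def:canonical-dist}) and using the assumption that the total XOR error of $T$ is at most $\eps/2$ yields $\mathrm{error}(T')\le 2\cdot\mathrm{error}(T)\le\eps$, as required.

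The only nontrivial input to the argument is the leaf-product-distribution fact \Cref{claim:leaf-product}, which has already been established; beyond that, everything reduces to a per-leaf expectation calculation together with the scalar inequality $(1+|\beta|)/2\ge|\beta|$.  I therefore do not anticipate a real obstacle---the plan essentially amounts to organizing the per-leaf quantities above and chaining the three inequalities in the displayed bound.
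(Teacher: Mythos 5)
Your proposal is correct and takes essentially the same approach as the paper: you construct the identical ``product tree'' (the paper's $\tilde{T}$, \Cref{def:product tree}) by relabeling each leaf with the coordinate-wise most likely value of $f^{\otimes k}$, invoke \Cref{claim:leaf-product} for block independence at each leaf, and derive a per-leaf bound. The only difference is cosmetic---you phrase the per-leaf chain in terms of error probabilities (direct-product error $\le 2\times$ XOR error), while the paper phrases the same chain in terms of correlations ($\Pr[\tilde{T}=f^{\otimes k}]\ge\Ex[Tf^{\oplus k}]$); the underlying inequalities, including $(1+|\beta|)/2\ge|\beta|$, are identical.
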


In order to prove \Cref{claim:equivalence between direct product and XOR} and \Cref{thm:main-xor}, we establish a lemma which allows us to convert any decision accurately computing $f^{\oplus k}$ into a decision tree accurately computing $f^{\otimes k}$. The following definition captures this conversion.

\begin{definition}[The product tree]
\label{def:product tree}
    Given a decision tree $T:(\bn)^k \to \bits$, the $k$-wise product tree $\Tilde{T}:(\bn)^k \to \bits^k$ is defined as follows. For the internal nodes, $\Tilde{T}$ has exactly the same structure as $T$. For a leaf $\ell$ in $T$, the leaf vector $(\ell_1,\ldots,\ell_k)\in\bits^k$ in $\Tilde{T}$ is defined by
    $$
    \ell_i\coloneqq\sign\left(\Ex_{\bX\sim\mu^{k}}[f(\bX^{(i)})\mid \bX\text{ reaches }\ell]\right)
    $$
    for all $i\in[k]$.
\end{definition}

Intuitively, $\Tilde{T}$ computes $T$'s best guess for $f(X^{(i)})$ for each $i\in [k]$ on a given input $(X^{(1)},\ldots,X^{(k)})$. If $T$ is really good at computing $f^{\oplus k}$ then at every leaf it should have queried enough variables to pin down $f$'s value on each of the input blocks. The main lemma formalizes this intuition.

\begin{lemma}
\label{lem:dpt-to-ds}
    For any $f:\bits^n \to \bits$, distribution $\mu$ over $\bits^n$, and tree $T:(\bn)^k \to \bits$, the $k$-wise product tree $\Tilde{T}:(\bn)^k \to \bits^k$ satisfies
    \begin{equation*}
        \Prx_{\bX \sim \mu^{k}}[\tilde{T}(\bX) = f^{\otimes k}(\bX)] \geq \Ex_{\bX \sim \mu^{k}}[T(\bX)f^{\oplus k}(\bX)].
    \end{equation*}
\end{lemma}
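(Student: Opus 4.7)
The plan is to prove the inequality leaf by leaf and then aggregate. First I would use Claim~\ref{claim:leaf-product} to get that, conditioned on reaching any fixed leaf $\ell$ of $T$ (which has the same tree shape as $\tilde{T}$), the input $\bX \sim \mu^k$ becomes a product distribution across the $k$ blocks. This is the structural workhorse: it lets both sides factorize nicely over the $k$ coordinates.

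For each leaf $\ell$, define $a_i(\ell) \coloneqq \Ex_{\bX \sim \mu^k}[f(\bX^{(i)}) \mid \bX\text{ reaches }\ell] \in [-1,1]$, so by~\Cref{def:product tree} the $i$th output bit of $\tilde{T}$ at $\ell$ is $\tilde{\ell}_i = \sign(a_i(\ell))$. Using the conditional product structure, I would compute
\[
\Prx_{\bX\sim\mu^k}\big[\tilde{T}(\bX) = f^{\otimes k}(\bX) \mid \bX\text{ reaches }\ell\big] = \prod_{i=1}^k \tfrac{1 + \tilde{\ell}_i\, a_i(\ell)}{2} = \prod_{i=1}^k \tfrac{1 + |a_i(\ell)|}{2},
\]
since each $f(\bX^{(i)})$ is independent given reaching $\ell$, with conditional mean $a_i(\ell)$. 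On the other hand, letting $T(\ell)\in\bits$ denote the label of $\ell$ in $T$, the same product structure yields
\[
\Ex_{\bX\sim\mu^k}\big[T(\bX)\, f^{\oplus k}(\bX) \mid \bX\text{ reaches }\ell\big] = T(\ell)\, \Ex\Big[\prod_{i=1}^k f(\bX^{(i)}) \,\Big|\, \bX\text{ reaches }\ell\Big] = T(\ell)\prod_{i=1}^k a_i(\ell).
\]

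The per-leaf inequality then reduces to the elementary fact that $\tfrac{1+|a|}{2} \geq |a|$ for all $a\in[-1,1]$, which multiplied across $i\in[k]$ gives
\[
\prod_{i=1}^k \tfrac{1+|a_i(\ell)|}{2} \;\geq\; \prod_{i=1}^k |a_i(\ell)| \;\geq\; T(\ell)\prod_{i=1}^k a_i(\ell).
\]
Finally, I would take the expectation over $\bell \sim \mu^k(T)$ (the canonical leaf distribution from \Cref{def:canonical-dist}, which is the same for $T$ and $\tilde{T}$ since they share tree structure) and apply the law of total expectation to both sides, yielding the desired inequality. I do not expect any serious obstacle here: the only subtle point is making sure the product-structure lemma is invoked correctly for both the joint event $\{\tilde{T}(\bX) = f^{\otimes k}(\bX)\}$ and for $f^{\oplus k}(\bX) = \prod_i f(\bX^{(i)})$, and then keeping track that $\tilde{\ell}_i = \sign(a_i(\ell))$ is precisely the choice that maximizes $\prod_i (1 + \tilde{\ell}_i a_i(\ell))/2$, which is what makes the left-hand side match $\prod_i (1+|a_i(\ell)|)/2$ rather than something smaller.
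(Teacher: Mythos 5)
Your proof is correct and takes essentially the same route as the paper: condition on each leaf, invoke \Cref{claim:leaf-product} for the conditional product structure across the $k$ blocks, and use that the $i$th label of $\tilde T$ at $\ell$ is the sign of $a_i(\ell)$ together with the elementary bound $\tfrac{1+|a|}{2}\geq|a|$ for $a\in[-1,1]$. Your per-leaf chain $\prod_i\tfrac{1+|a_i(\ell)|}{2}\geq\prod_i|a_i(\ell)|\geq T(\ell)\prod_i a_i(\ell)$ is a slightly cleaner packaging of the same inequalities the paper establishes via \Cref{eq:corr} and the display that follows it.
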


\subsection{Proofs of~\Cref{claim:equivalence between direct product and XOR} and~\Cref{thm:main-xor} assuming~\Cref{lem:dpt-to-ds}}
The following corollary of \Cref{lem:dpt-to-ds} implies~\Cref{claim:equivalence between direct product and XOR} and~\Cref{thm:main-xor}.
\begin{corollary}[Main corollary of \Cref{lem:dpt-to-ds}]
    \label{cor:xor-to-dp-for-r}
    For all $f:\bits^n\to\bits$, $k\ge 1$, distributions $\mu$ over $\bits^n$, and $\eps>0$, $\overline{\mathrm{Depth}}^{\mu^{k}}(f^{\oplus k},\frac{\eps}{2})\ge \overline{\mathrm{Depth}}^{\mu^{k}}(f^{\otimes k},\eps)$.
\end{corollary}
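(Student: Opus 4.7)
The plan is to take a near-optimal tree for $f^{\oplus k}$ and run the product-tree construction of \Cref{def:product tree} on it, then invoke \Cref{lem:dpt-to-ds} to deduce that the resulting tree computes $f^{\otimes k}$ with only a constant factor blow-up in error and no change in expected depth. Concretely, let $d \coloneqq \overline{\mathrm{Depth}}^{\mu^k}(f^{\oplus k},\eps/2)$ and pick a (possibly randomized) decision tree $T:(\bn)^k \to \bits$ with expected depth $d$ and error at most $\eps/2$ for $f^{\oplus k}$ under $\mu^k$. It suffices to construct a tree $\tilde T:(\bn)^k \to \bits^k$ of expected depth $d$ that computes $f^{\otimes k}$ to error $\eps$.

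First I would handle the deterministic case. Form $\tilde T$ from $T$ via \Cref{def:product tree}: keep the internal structure of $T$ identical (so the expected depth, which depends only on the queries along each root-to-leaf path, is exactly $d$), and relabel each leaf $\ell$ by the $k$-bit vector whose $i$th coordinate is the sign of the conditional expectation $\Ex[f(\bX^{(i)}) \mid \bX \text{ reaches }\ell]$. By \Cref{lem:dpt-to-ds},
\[
\Prx_{\bX\sim\mu^k}[\tilde T(\bX) = f^{\otimes k}(\bX)] \;\ge\; \Ex_{\bX\sim\mu^k}[T(\bX)\,f^{\oplus k}(\bX)].
\]
Since $T(\bX), f^{\oplus k}(\bX) \in \bits$, the right-hand side equals $1 - 2\Prx[T(\bX)\ne f^{\oplus k}(\bX)] \ge 1-\eps$, so $\tilde T$ errs on $f^{\otimes k}$ with probability at most $\eps$, as required.

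To extend to randomized $T$, view $T$ as a distribution $\mcT$ over deterministic trees $\bT$ and let $\tilde{\mcT}$ be the distribution over their product trees $\tilde{\bT}$. Since $\overline{\mathrm{Depth}}^{\mu^k}(\bT) = \overline{\mathrm{Depth}}^{\mu^k}(\tilde{\bT})$ pointwise, $\tilde{\mcT}$ has expected depth $d$. The accuracy bound for $\tilde{\mcT}$ follows by taking expectations over $\bT\sim\mcT$ in the deterministic inequality above and using linearity of expectation on both sides. Taking $M=1$ and $\delta=\eps/2$ in \Cref{claim:equivalence between direct product and XOR} and combining \Cref{cor:xor-to-dp-for-r} with \Cref{thm:main-formal} then yields \Cref{thm:main-xor} with the advertised parameters.

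I do not expect a serious obstacle in this argument: \Cref{lem:dpt-to-ds} does all the heavy lifting, and the remaining step is a one-line conversion between correlation and error for $\pm 1$-valued random variables. The only mild care needed is the observation that the product-tree construction preserves the \emph{expected} depth exactly (not merely the worst-case depth), which is immediate since the construction leaves the internal structure of $T$ untouched.
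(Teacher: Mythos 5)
Your proof is correct and takes essentially the same approach as the paper: apply the product-tree construction of \Cref{def:product tree} to a near-optimal tree for $f^{\oplus k}$, invoke \Cref{lem:dpt-to-ds}, and convert correlation to error via the identity $\Ex[T f^{\oplus k}] = 1 - 2\Pr[T \ne f^{\oplus k}]$, noting that the construction preserves expected depth. (The closing remark about ``Taking $M=1$ and $\delta = \eps/2$'' in \Cref{claim:equivalence between direct product and XOR} is not needed for this corollary and slightly reverses the logical dependency, since that claim is itself a consequence of \Cref{cor:xor-to-dp-for-r}, but this does not affect the correctness of the main argument.)
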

\begin{proof}
    Let $\bA$ be a randomized query algorithm for $f^{\oplus k}$ with error $\eps/2$ and expected cost $q=\overline{\mathrm{Depth}}^{\mu^{k}}(f^{\oplus k},\eps/2)$. Let $\mathcal{T}$ denote the distribution over decision trees determined by $\bA$. Consider the algorithm $\Tilde{\bA}$ which computes $f^{\otimes k}(X)$ by sampling $\bT\sim\mathcal{T}$ and returning $\Tilde{\bT}(X)$ where $\Tilde{\bT}$ is the decision tree from \Cref{lem:dpt-to-ds}. Then, the success of $\Tilde{\bA}$ is
    \begin{align*}
        \Ex_{\bT\sim \mathcal{T}}\left[\Prx_{\bX\sim \mu^{k}}[\Tilde{\bT}(\bX)=f^{\otimes k}(\bX)]\right]&\ge \Ex_{\bT\sim \mathcal{T}}\left[\Ex_{\bX\sim\mu^{k}}[\bT(\bX)f^{\oplus k}(\bX)]\right]\tag{\Cref{lem:dpt-to-ds}}\\
        &\ge 1-\eps
    \end{align*}
    where the last step uses the fact that advantage is $1-2\cdot\mathrm{error}$. Since the structure of each $\Tilde{T}$ is the same as $T$, the expected cost of $\Tilde{\bA}$ is $q$ which completes the proof.
\end{proof}

\begin{proof}[Proofs of~\Cref{claim:equivalence between direct product and XOR} and~\Cref{thm:main-xor}]
    By \Cref{cor:xor-to-dp-for-r}, $$\overline{\mathrm{Depth}}^{\mu^{k}}(f^{\oplus k},\tfrac{\eps}{2})\ge \overline{\mathrm{Depth}}^{\mu^{k}}(f^{\otimes k},\eps)\ge M\cdot\overline{\mathrm{Depth}}^{\mu}(f,\delta).$$
    \Cref{thm:main-xor} follows immediately by applying \Cref{claim:equivalence between direct product and XOR} to \Cref{thm:main-formal}.
\end{proof}

\begin{remark}[On the necessity of the $1/2$ loss in $\eps$ in \Cref{cor:xor-to-dp-for-r}]
    \label{rem:loss}
    One may wonder whether the $1/2$ loss in $\eps$ parameter in \Cref{cor:xor-to-dp-for-r} is necessary. For example, can one show $\overline{\mathrm{Depth}}^{\mu^{k}}(f^{\oplus k},0.51\eps)\ge \overline{\mathrm{Depth}}^{\mu^{k}}(f^{\otimes k},\eps)$? The issue is that $\overline{\mathrm{Depth}}^{\mu^{k}}(f^{\oplus k},0.5)=0$ for all functions $f:\bn\to\bits$ because the bias of $f^{\oplus k}$ is at least $0.5$. So such a statement cannot hold for all $f$ in all parameter regimes. Concretely, one can show that if $f$ is the parity of $n$ bits and $\mu$ is uniform over $\bn$, then $\overline{\mathrm{Depth}}^{\mu^{k}}(f^{\otimes k},\eps)\ge \Omega(kn)$ for all constant $\eps<1$. Any path in a decision tree for $f^{\otimes k}$ which queries at most $\lambda kn$ bits for some constant $\lambda<1$ has success probability $2^{-\Omega(k)}$. So to achieve any constant accuracy requires $\Omega(kn)$ expected depth. On the other hand, $\overline{\mathrm{Depth}}^{\mu^{k}}(f^{\oplus k},0.5)=1\ll \overline{\mathrm{Depth}}^{\mu^{k}}(f^{\otimes k},\eps)$ which is achieved by the decision tree that outputs a single constant value. Therefore, the $\eps/2$ in \Cref{cor:xor-to-dp-for-r} is necessary for such a statement to hold in full generality.
\end{remark}

\subsection{Proof of~\Cref{lem:dpt-to-ds}}

Each $\ell_i$ for $i\in [k]$ satisfies
\begin{align*}
    \Ex_{\bX\sim\mu^{k}}[\ell_i\cdot f(\bX^{(i)})\mid\bX\text{ reaches }\ell]&=\left|\Ex_{\bX\sim\mu^{k}}[f(\bX^{(i)})\mid\bX\text{ reaches }\ell]\right|\\
    &\ge \Ex_{\bX\sim\mu^{k}}[\ell\cdot f(\bX^{(i)})\mid\bX\text{ reaches }\ell].
\end{align*}
In particular, for all {leaves} $\ell$ of $T$,
\begin{equation}
\label{eq:corr}
    \Ex_{\bX\sim\mu^{k}}[\Tilde{T}(\bX)_i\cdot f(\bX^{(i)})\mid\bX\text{ reaches }\ell]\ge \Ex_{\bX\sim\mu^{k}}[T(\bX)\cdot f(\bX^{(i)})\mid\bX\text{ reaches }\ell].
\end{equation}

Therefore:
    \begin{align*}
    \Prx_{\bX \sim \mu^{k}}&[\tilde{T}(\bX)= f^{\otimes k}(\bX)]\\
        &= \Ex_{\bell\sim\mu^{k}(T)}\bigg[\Prx_{\bX\sim\mu^{k}}[\tilde{T}(\bX)=f^{\otimes k}(\bX)\mid \bX\text{ reaches }\bell]\bigg]\\
        &=\Ex_{\bell\sim\mu^{k}(T)}\bigg[\prod_{i\in [k]}\Prx_{\bX\sim\mu^{k}}[\tilde{T}(\bX)_i=f(\bX^{(i)})\mid \bX\text{ reaches }\bell]\bigg]\tag{\Cref{claim:leaf-product}}\\
        &\ge \Ex_{\bell\sim\mu^{k}(T)}\bigg[\prod_{i\in [k]}\Ex_{\bX\sim\mu^{k}}[\Tilde{T}(\bX)_if(\bX^{(i)})\mid \bX\text{ reaches }\bell]\bigg]\\
        &\ge \Ex_{\bell\sim\mu^{k}(T)}\bigg[\prod_{i\in [k]}\Ex_{\bX\sim\mu^{k}}[T(\bX)f(\bX^{(i)})\mid \bX\text{ reaches }\bell]\bigg]\tag{\Cref{eq:corr}}\\
        &=\Ex_{\bell\sim\mu^{k}(T)}\bigg[\Ex_{\bX\sim\mu^{k}}\bigg[T(\bX)\prod_{i\in [k]}f(\bX^{(i)})\mid \bX\text{ reaches }\bell\bigg]\bigg]\tag{\Cref{claim:leaf-product}}\\
        &=\Ex_{\bell\sim\mu^{k}(T)}\bigg[\Ex_{\bX\sim\mu^{k}}[T(\bX)f^{\otimes k}(\bX)\mid \bX\text{ reaches }\bell]\bigg]\tag{$\prod_{i\in [k]}f(\bX^{(i)})=f^{\oplus k}(\bX)$}\\
        &=\Ex_{\bX \sim \mu^{k}}[T(\bX)f^{\oplus k}(\bX)]
    \end{align*}
    which completes the proof.\hfill\qed




\section{Proof of~\Cref{claim:linear dependence on gamma}}


\begin{claim} [The $\gamma$ factor in \Cref{thm:main-formal} is necessary; \Cref{claim:linear dependence on gamma} restated]
    \label{claim:linear-dependence}
    Let $\mathsf{Par} : \bn\to\bits$ be the parity function and $\mu $ be the uniform distribution over $\bn$. Then for all $\gamma$,
    \[
\overline{\mathrm{Depth}}^{\mu^k}(\mathsf{Par}^{\otimes k}, 1 -\gamma) \leq2 \gamma k \cdot \overline{\mathrm{Depth}}^\mu(\mathsf{Par}, \tfrac1{4}).
    \]
\end{claim}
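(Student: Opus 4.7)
\medskip
\noindent\textbf{Proof proposal.} The plan is to exhibit a simple randomized decision tree achieving the stated bound, and then to lower bound $\overline{\mathrm{Depth}}^{\mu}(\mathsf{Par},\tfrac{1}{4})$ by $n/2$. Since both halves are elementary, the main content of the claim is really the equation $n \le 2\,\overline{\mathrm{Depth}}^{\mu}(\mathsf{Par},\tfrac{1}{4})$ together with a ``sometimes compute, sometimes give up'' algorithm for $\mathsf{Par}^{\otimes k}$.

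First I would describe the upper bound on the left-hand side. Consider the randomized decision tree $\mathcal{T}$ that, with probability $\gamma$, queries all $kn$ bits and outputs $\mathsf{Par}^{\otimes k}(X)$ exactly, and with the remaining probability $1-\gamma$ outputs the constant vector $(+1,\ldots,+1)\in\bits^k$ without querying anything. The expected depth of $\mathcal{T}$ with respect to $\mu^k$ is $\gamma \cdot kn + (1-\gamma)\cdot 0 = \gamma k n$, and its error is at most $1-\gamma$ since conditioned on the probability-$\gamma$ branch, the output is always correct. Therefore
\[ \overline{\mathrm{Depth}}^{\mu^k}(\mathsf{Par}^{\otimes k},1-\gamma) \le \gamma k n. \]

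Second, I would lower bound $\overline{\mathrm{Depth}}^{\mu}(\mathsf{Par},\tfrac{1}{4})$ using the standard observation that, because $\mu$ is uniform and parity depends on every coordinate, any path of a decision tree that fails to query some bit outputs a value that is correct with probability exactly $\tfrac{1}{2}$. Concretely, for a randomized decision tree $\mathcal{S}$ for $\mathsf{Par}$, letting $p \coloneqq \Pr_{\bT\sim \mathcal{S},\,\bx\sim \mu}[\bT \text{ queries all } n \text{ bits of } \bx]$, the error is exactly $\tfrac{1}{2}(1-p)$. Requiring error $\le \tfrac{1}{4}$ forces $p \ge \tfrac{1}{2}$, and the expected depth satisfies
\[ \overline{\mathrm{Depth}}^\mu(\mathcal{S}) \;\ge\; n \cdot p \;\ge\; n/2. \]

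Combining the two bounds gives $\overline{\mathrm{Depth}}^{\mu^k}(\mathsf{Par}^{\otimes k},1-\gamma) \le \gamma k n \le 2\gamma k \cdot \overline{\mathrm{Depth}}^{\mu}(\mathsf{Par},\tfrac{1}{4})$, as claimed. There is no real ``hard step'' here --- the small subtlety is just keeping track of the randomized versus deterministic distinction in both the construction (we use a two-point mixture over deterministic trees) and in the $n/2$ lower bound (where one computes the error by taking expectations over both $\bT$ and $\bx$). Everything else is book-keeping.
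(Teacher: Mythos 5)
Your proposal is correct and follows essentially the same route as the paper: the same ``query all $kn$ bits with probability $\gamma$, else output a constant'' randomized tree for the upper bound, the same observation (via the leaf-of-depth-$<n$ argument) that error $\tfrac14$ forces querying all $n$ bits with probability $\ge \tfrac12$, and the same combination of the two. The only minor stylistic deviation is outputting $(+1,\ldots,+1)$ rather than the paper's constant, which is immaterial, and your error accounting for $\mathsf{Par}$ is if anything cleaner than the paper's wording.
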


We will need the following simple proposition, which states that in any tree that seeks to compute the $n$-variable parity function, leaves of depth strictly less than $n$ contribute $\frac1{2}$ to the error:
\begin{proposition}
    \label{prop:par-query-all}
    For any (potentially randomized) tree $T: \bn \to \bits$ and leaf $\ell$ of $T$ with depth strictly less than $n$,
    \begin{equation*}
        \Prx_{\bx \sim \mathrm{Unif}(\bn)}[T(\bx) \neq \mathsf{Par}(\bx) \mid \bx \text{ reaches } \ell] = \tfrac{1}{2}.
    \end{equation*}
\end{proposition}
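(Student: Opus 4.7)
The plan is to reduce to the deterministic case and then exploit the fact that parity depends on every single coordinate, so an unqueried coordinate leaves the output of $\mathsf{Par}(\bx)$ uniform and independent of the label at $\ell$.

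First I would handle randomized trees by the standard decomposition: a randomized tree $T$ is a distribution over deterministic trees $\bT$, and a ``leaf of $T$'' is a leaf of some deterministic tree in its support. Conditioning on reaching a fixed leaf $\ell$ is a statement about a single deterministic $\bT$ in the support, so it suffices to prove the claim when $T$ is deterministic.

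Now fix a deterministic $T$ and a leaf $\ell$ of depth $d < n$. Let $S \subseteq [n]$ be the set of coordinates queried along the root-to-$\ell$ path, so $|S| = d < n$, and let $b \in \bits$ be the label at $\ell$. Since $|S| < n$, pick any $i^\star \notin S$. Under $\bx \sim \mathrm{Unif}(\bn)$ conditioned on reaching $\ell$, the coordinates $\{\bx_j : j \in S\}$ are fixed to the values prescribed by the path, while $\{\bx_j : j \notin S\}$ remain independent and uniform over $\bits$ (because the path does not read them). In particular, $\bx_{i^\star}$ is uniform in $\bits$ and independent of all other coordinates.

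Writing $\mathsf{Par}(\bx) = \bx_{i^\star} \cdot \prod_{j \neq i^\star} \bx_j$, we see that $\mathsf{Par}(\bx)$ conditioned on reaching $\ell$ equals a uniform $\pm 1$ random variable (namely $\bx_{i^\star}$) times a quantity independent of $\bx_{i^\star}$. Hence $\mathsf{Par}(\bx)$ is uniform on $\bits$ conditioned on reaching $\ell$, and therefore agrees with the fixed label $b$ with probability exactly $1/2$. This gives the claim. The only remotely subtle point is the standard conditional-independence step for the unqueried coordinates, which is routine.
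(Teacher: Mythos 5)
Your proof is correct and follows essentially the same approach as the paper's: both exploit an unqueried coordinate $i^\star$ to show that $\mathsf{Par}(\bx)$ is uniform conditioned on reaching $\ell$. The paper phrases it as a pairing argument (flip bit $i^\star$ to get an equally likely input of opposite parity reaching $\ell$), whereas you phrase it via the conditional independence and uniformity of $\bx_{i^\star}$; these are the same idea.
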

\begin{proof}
    Since $\ell$ is at depth strictly less than $n$, there must be some index $i \in [n]$ not queried on the path to $\ell$. Taking any input $x$ that reaches $\ell$, the input $x'$ with the $i^{\text{th}}$ bit flipped must also reach $\ell$ and have the opposite parity. Both of these inputs are equally likely under the uniform distribution and so the value of $\mathsf{Par}(\bx)$ conditioned on $\bx$ reaching $\bell$ is equally likely to be $+1$ and $-1$. Therefore, $T$ errs half the time it reaches this leaf regardless of how it labels it.
\end{proof}

\begin{proof}[Proof of \Cref{claim:linear-dependence}]
The proof proceeds in two parts. First, we show that $\overline{\text{Depth}}^\mu(\mathsf{Par}, \tfrac{1}{4})= \frac{n}{2}$. Second, we prove that  $\overline{\text{Depth}}^{\mu^{\otimes k}}(\mathsf{Par}^{\otimes k}, 1 -\gamma) \leq \gamma k n$.

\pparagraph{(1) $\overline{\mathrm{Depth}}^\mu(\mathsf{Par}, \tfrac{1}{4})= \tfrac{n}{2}$.} Let $T$ be an arbitrary randomized decision tree let $p_n(T)$ be the probability that $T$ queries all $n$ variables. Then, the expected depth of $T$ is at least $n \cdot p_n(T)$. Meanwhile, by \Cref{prop:par-query-all}, the error of $T$ in computing parity is at least $\frac1{2} \cdot p_n(T)$ w.r.t. the uniform distribution. Therefore, $\overline{\text{Depth}}^\mu(f, \frac1{4}) \geq \frac{n}{2}$.

While this direction is not needed for \Cref{claim:linear-dependence} we show for completeness that $\overline{\text{Depth}}^\mu(f, \frac1{4}) \leq \frac{n}{2}$ by constructing a randomized\footnote{At the cost of increasing expected depth by $1$, the tree can be derandomized. To derandomize it, read a single bit of the input and only query the rest if that bit is $1$, which occurs with probability $\tfrac{1}{2}$.} decision tree $T$ for $f$.
 With probability $\frac1{2}$, $T$ queries all $n$ variables to compute $f$ exactly. Otherwise, it simply outputs $0$. $T$ has expected depth $\frac{n}{2}$, and it errs only when it queries no variables and guesses incorrectly, which happens with probability $\frac1{2}\cdot \frac1{2}= \frac1{4}$. Thus, $\overline{\text{Depth}}^\mu(f, \frac1{4}) \leq \frac{n}{2}$. 



\pparagraph{(2)  $\overline{\mathrm{Depth}}^{\mu^{\otimes k}}(f^{\otimes k}, 1 -\gamma) \leq  \gamma k n$.} We construct a randomized\footnote{Again, this can be derandomized at the cost of adding 
$\leq 2$ to the expected depth, since any biased coin can be simulated with a fair coin using $2$ flips in expectation.} decision tree $T$ for $f^{\otimes k}$. With probability $\gamma$, $T$ queries all $kn$ variables to compute $f^{\otimes k}$ exactly, and with probability $1-\gamma$, it outputs 0. When it queries all variables, it has no error so its average error is at most $1 - \gamma$. Furthermore, its average depth is $\gamma k n$.
\end{proof}

\section*{Acknowledgements}

We thank William Hoza for a helpful conversation and the CCC reviewers for their comments and feedback. 

The authors are supported by NSF awards 1942123, 2211237, 2224246, a Sloan Research Fellowship, and a Google Research Scholar Award. Caleb is also supported by an NDSEG Fellowship, and Carmen by a Stanford Computer Science Distinguished Fellowship and an NSF Graduate Research Fellowship.

\bibliographystyle{alpha}
\bibliography{ref}

\newpage 
\appendix
\section{Figures of stacked and fair trees}
\label{app:figures}

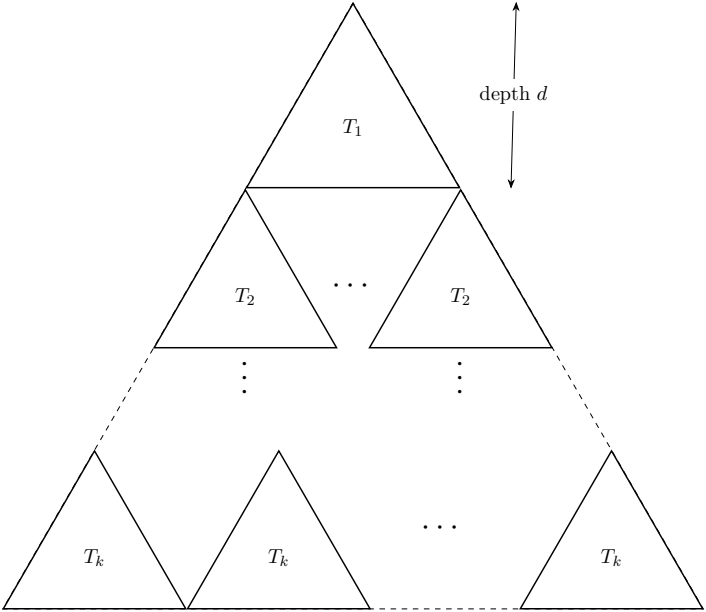
\begin{figure}[h!]
    \centering
    \scalebox{0.7}{
    \begin{tikzpicture}[tips=proper]
        \node[isosceles triangle,
            draw,thick,
            isosceles triangle apex angle=60,
            rotate=90,
            minimum size=3.5cm] (T1) at (0,0){};
        
        \node[isosceles triangle,
            draw,thick,
            isosceles triangle apex angle=60,
            rotate=90,
            minimum size=3cm,
            anchor=east] (T2L) at (T1.left corner){};

        \node[isosceles triangle,
            draw,thick,
            isosceles triangle apex angle=60,
            rotate=90,
            minimum size=3cm,
            anchor=east] (T2R) at (T1.right corner){};
    
        \node[isosceles triangle,
            draw,
            dashed,
            isosceles triangle apex angle=60,
            rotate=90,
            minimum size=11.52cm,
            anchor=east] (T5) at (T1.east){};
        
        \node[isosceles triangle,
            draw,
            thick,
            isosceles triangle apex angle=60,
            rotate=90,
            minimum size=3cm,
            anchor=east] (T3LL) at ([yshift=-0.00cm]T5.99){};
        \node[isosceles triangle,
            draw,
            thick,
            isosceles triangle apex angle=60,
            rotate=90,
            minimum size=3cm,
            anchor=east] (T3L) at ([xshift=3.5cm,yshift=-0.00cm]T5.99){};
        \node[isosceles triangle,
            draw,thick,
            isosceles triangle apex angle=60,
            rotate=90,
            minimum size=3cm,
            anchor=east] (T3R) at ([yshift=-0.00cm]T5.261){};    
        
        \draw[] (0,-3) node [] {\LARGE $\cdots$};
        \draw[] (1.7,-7.6) node [] {\LARGE $\cdots$};
        \draw[] ([yshift=-0.5cm]T2L.west) node [rotate=90] {\LARGE $\cdots$};
        \draw[] ([yshift=-0.5cm]T2R.west) node [rotate=90] {\LARGE $\cdots$};

        \draw[] (T1.center) node [] {$T_1$};
        \draw[] (T2L.center) node [] {$T_2$};
        \draw[] (T2R.center) node [] {$T_2$};
        \draw[] (T3L.center) node [] {$T_k$};
        \draw[] (T3LL.center) node [] {$T_k$};
        \draw[] (T3R.center) node [] {$T_k$};

        \draw[{Stealth[scale=1]}-{Stealth[scale=1]}] ([xshift=3.1cm]T1.east) to node[midway,fill=white!30,scale=1] {depth $d$} ([xshift=3cm]T1.west);
               
    \end{tikzpicture}
    }
  \captionsetup{width=.9\linewidth}

    \caption{Illustration of a stacked decision tree for a function $f^{\otimes k}$. The decision tree consists of $k$ depth-$d$ decision trees, $T_1,\ldots, T_k$, stacked on top of each other. For an input $X \in (\bn)^k$, the output $T(X)$ is computed sequentially, first by computing $T_1(X)$, then $T_2(X)$, and so on. The final output is $T(X)\coloneqq (T_1(X),\ldots, T_k(X))$. }
    \label{fig:stacked}
\end{figure}

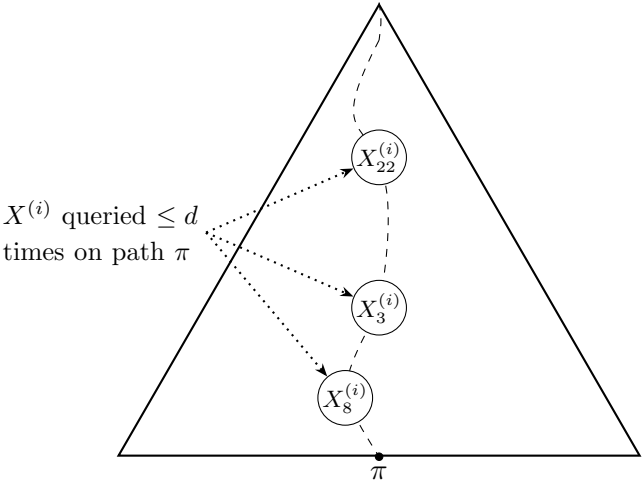
\begin{figure}[h!]
    \centering
    \begin{tikzpicture}[tips=proper]
        \node[isosceles triangle,
            draw,
            thick,
            isosceles triangle apex angle=60,
            rotate=90,
            minimum size=6cm] (T1) at (0,0){};
        
        \draw[black,dashed] (T1.east) .. controls ([xshift=-0.1cm]T1.358) .. ([yshift=-0.5cm]T1.east) node[] (N1) {{}};
        \draw[black,dashed] ([yshift=-0.5cm]T1.east) .. controls ([xshift=0.4cm]T1.40) and (T1.350) .. (T1.center) node[] (N1) {{}};
        \draw[black,dashed] (T1.center) .. controls ([xshift=2.4cm]T1.110) .. (T1.west) node[] (N2) {{}};

        \node[draw,circle,inner sep=0pt,fill=white] (x3) at (-0.45,-1.2) {\footnotesize $X_{8}^{(i)}$};
        \node[draw,circle,inner sep=0pt,fill=white] (x2) at (0,0) {\footnotesize $X_{3}^{(i)}$};
        \node[draw,circle,inner sep=0pt,fill=white] (x1) at (0,2) {\footnotesize $X_{22}^{(i)}$};
        
        \draw[color=black] (T1.west) node [below,fill=white] {$\pi$};
        \node[draw,circle,fill=black,inner sep=1pt] (x) at (T1.west) {};

        \draw[] (-3.7,1) node [fill=white,text width=26mm] {\small $X^{(i)}$ queried $\le d$ times on path $\pi$};
        \coordinate (p) at (-2.3,1);
        \draw[-{Stealth[scale=0.75]},dotted,thick] (p) to (x1);
        \draw[-{Stealth[scale=0.75]},dotted,thick] (p) to (x2);
        \draw[-{Stealth[scale=0.75]},dotted,thick] (p) to (x3);

    \end{tikzpicture}
  \captionsetup{width=.9\linewidth}
    \caption{Illustration of a fair decision tree. For every block $i\in [k]$ and path $\pi$, the input block $X^{(i)}$ is queried at most $d$ times.} 
    \label{fig:fair}
\end{figure}

\section{Proof of~\Cref{thm:hardcore theorem for DTs}}
\label{appendix:hardcore}
Let $\mathcal{H}$ denote the set of measures of density $\delta/2$ with respect to $\mu$ and let $\mathcal{T}$ denote the set of decision trees $T$ whose expected depth with respect to $\mu$ is at most $d$. Suppose for contradiction that there \textit{does not} exist an $H\in\mathcal{H}$ which is $(\gamma,d)$-hardcore. That is, for all $H\in\mathcal{H}$ there is a tree $T$ of expected depth at most $d$ satisfying
\begin{equation}
\label{eq:minimax}
\Ex_{\bx\sim\mu}[f(\bx)T(\bx)H(\bx)]>\gamma\Ex_{\bx\sim\mu}[H(\bx)]=\gamma\delta/2.
\end{equation}
We use the minimax theorem to switch the quantifiers in the above statement. Consider the payoff matrix $M$ whose rows are indexed by distributions from $\mathcal{H}$ and whose columns are indexed by algorithms from $\mathcal{T}$ and whose entries are given by $M_{H,T}\coloneqq \Ex_{\bx\sim\mu}[f(\bx)T(\bx)H(\bx)]$. This is the payoff matrix for the zero-sum game where the row player first chooses a row $H$ and the column player then chooses a column $T$ and the payoffs are determined by $M_{H,T}$. Note that once the first player's strategy is fixed, we can assume without loss of generality that the second player's strategy is deterministic. Therefore, the minimax theorem for zero-sum games yields
\begin{align*}
    \gamma\delta/2&<\min_{\rho\in\mu(\mathcal{H})}\max_{T\in\mathcal{T}}~ (\rho^{\top} M)_T\tag{\Cref{eq:minimax}}\\
    &=\max_{\tau\in\mu(\mathcal{T})}\min_{H\in\mathcal{H}}~(M\tau)_H\tag{minimax theorem}
\end{align*}
where $\mu(\cdot)$ denotes the set of distributions over a given set. Therefore, there is a fixed distribution $\tau$ over the set $\mathcal{T}$ such that for all $H\in\mathcal{H}$
\begin{equation}
\label{eq:hardcore}
\Ex_{\bT\sim\tau}\bigg[\Ex_{\bx\sim\mu}[f(\bx)\bT(\bx)H(\bx)]\bigg]>\gamma\delta/2.    
\end{equation}
This shows that
$$
\Prx_{\bx\sim\mu}\bigg[\Ex_{\bT\sim\tau}[\bT(\bx)]f(\bx)\ge \gamma\bigg]\ge 1-\delta/2.
$$
In particular, if instead $\Prx_{\bx\sim\mu}[\Ex_{\bT\sim\tau}[\bT(\bx)]f(\bx)<\gamma]\ge \delta/2$ then we can contradict \Cref{eq:hardcore} by constructing a $\delta/2$-density $H$ such that $H(x)\coloneqq\Pr_{\bx\sim\mu}[\bx=x]$ for a $\delta/2$-fraction of $x$ satisfying $\Ex_{\bT\sim\tau}[\bT(\bx)]f(\bx)< \gamma$. \Cref{eq:hardcore} shows that $\Ex_{\bT\sim\tau}[\bT(\bx)]$ has good correlation with $f$ for a large fraction of inputs. We obtain a single strategy from the distribution $\tau$ by sampling $\bT_1,\ldots,\bT_r\sim\tau$ for $r$ sufficiently large (chosen later) and defining $T^\star$ as $\bT^{\star}(x)\coloneqq \text{MAJ}(\bT_1(x),\ldots,\bT_r(x))$.
For every $x$ for which $\Ex_{\bT\sim\tau}[\bT(x)]f(x)\ge \gamma$, we have
$$
\Prx_{\bT_1,\ldots,\bT_r\sim\tau}\bigg[\text{MAJ}(\bT_1(x),\ldots,\bT_r(x))\neq f(x)\bigg]\le 2^{-\Omega(\gamma^2 r)}
$$
by a Chernoff bound. Choosing $r=\Theta(\log(1/\delta)/\gamma^2)$ ensures that the failure probability is at most $\delta/2$. The decision tree $\bT^{\star}$ satisfies $\Pr_{\bx\sim\mu}[\bT^\star(\bx)\neq f(\bx)]\le \delta$. The expected depth of $\bT^\star$ is less than
$$
r\cdot d=\Theta(d\log (1/\delta)/\gamma^2)<\overline{\mathrm{Depth}}^\mu(f,\delta)
$$
which is a contradiction.

\section{The lack of error reduction for distributional error}
\label{sec:no-boosting}
In \Cref{sec:challenges}, we showed how error reduction gave a simple proof of a strong direct sum theorem for \emph{randomized} query complexity. The specific statement needed in that proof is the following standard error reduction by repetition theorem.
\begin{fact}[Error reduction for $\overline{\mathrm{R}}$]
    For any function $f:\bits^n \to \bits$ and $\delta > 0$,
    \begin{equation*}
        \overline{\mathrm{R}}(f,\delta) \leq O(\log(\lfrac{1}{\delta}))\cdot \overline{\mathrm{R}}(f,1/4).
    \end{equation*}
\end{fact}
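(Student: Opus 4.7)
The plan is the standard repetition-and-majority-vote argument. Let $\mcA$ be a randomized algorithm witnessing $\overline{\mathrm{R}}(f,1/4)$: for every input $x\in\bn$, $\Pr[\mcA(x)\neq f(x)]\le 1/4$, and the expected number of queries made by $\mcA$ on $x$ is at most $q\coloneqq\overline{\mathrm{R}}(f,1/4)$. Set $r\coloneqq C\log(1/\delta)$ for a sufficiently large absolute constant $C$, and define a new randomized algorithm $\mcA'$ that, on input $x$, runs $r$ independent copies of $\mcA(x)$ and outputs the majority of their answers.

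For correctness, fix any $x$ and let $\bY_1,\ldots,\bY_r$ be the independent Bernoulli indicators of the events ``the $i$th copy of $\mcA$ returns $f(x)$''. Each $\bY_i$ has mean at least $3/4$, so a standard Chernoff bound (a special case of \Cref{fact:chernoff}) gives
\[
\Pr\!\left[\sum_{i=1}^r \bY_i\le r/2\right]\le 2^{-\Omega(r)}\le \delta,
\]
provided $C$ is large enough. Hence $\mcA'$ errs with probability at most $\delta$ on every input.

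For the expected query count, linearity of expectation gives that on every input $x$, the expected total number of queries made by $\mcA'(x)$ is exactly $r$ times the expected number of queries made by $\mcA(x)$, which is at most $rq=O(\log(1/\delta))\cdot \overline{\mathrm{R}}(f,1/4)$. Taking the worst case over $x$ gives the claimed bound on $\overline{\mathrm{R}}(f,\delta)$.

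There is essentially no obstacle here: the argument is entirely routine. The only conceptual point worth flagging is that error is measured \emph{pointwise} in $x$ for randomized query complexity, so repeating on the same input $x$ gives truly independent Bernoulli trials. This is precisely the property that breaks in the distributional setting, where the ``repeated'' trials on a fresh $\bx\sim\mu$ are not coupled to the same input, which is why the analogous error-reduction statement fails there (cf.\ \Cref{fact:no-boosting}).
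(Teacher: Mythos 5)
Your proof is correct and is precisely the standard repetition-plus-majority-vote argument that the paper invokes but does not write out (the fact is stated as standard background, with no proof given in the paper). Your closing remark about why the argument hinges on error being measured pointwise in $x$ — and why this is exactly what breaks in the distributional setting — is also accurate and matches the paper's discussion around \Cref{fact:no-boosting}.
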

Here, we give a short proof that no error reduction holds in the distributional setting, even with substantially weaker parameters.
\begin{claim}
    \label{claim:no-boosting}
    For any $n \in \N$, let $\mu$ be the uniform distribution over $\bits^n$. There is a function $f:\bits^n \to \bits$ satisfying,
    \begin{equation*}
        \overline{\mathrm{Depth}}^{\mu}(f,1/4) = 0 \quad\quad\quad\text{and}\quad\quad\quad \overline{\mathrm{Depth}}^{\mu}(f,1/8) \geq \Omega(n).
    \end{equation*}
\end{claim}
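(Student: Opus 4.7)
The plan is to exhibit a function that is extremely biased (so that the constant predictor succeeds with error $1/4$) but whose ``uncertain'' $1/4$ fraction of inputs encodes a parity computation (so that reducing the error below $1/4$ requires almost all the remaining queries). Concretely, define $f : \bits^n \to \bits$ by
\[
f(x_1, \ldots, x_n) = \begin{cases} +1 & \text{if } x_1 = +1, \\ \mathsf{Par}(x_2, \ldots, x_n) & \text{if } x_1 = -1. \end{cases}
\]
Under the uniform distribution $\mu$, $\Pr_{\bx \sim \mu}[f(\bx) = +1] = \tfrac12 + \tfrac12 \cdot \tfrac12 = \tfrac34$, so the constant $+1$ tree errs with probability exactly $1/4$. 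Hence $\overline{\mathrm{Depth}}^{\mu}(f, 1/4) = 0$.

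For the lower bound, let $\mathcal{T}$ be any randomized decision tree with error at most $1/8$ w.r.t.\ $\mu$. Conditioning on the event $\bx_1 = -1$ (probability $1/2$),
\[
\Prx_{\bT \sim \mathcal{T},\, \bx \sim \mu}[\bT(\bx) \neq \mathsf{Par}(\bx_2,\ldots,\bx_n) \mid \bx_1 = -1] \leq \frac{1/8}{1/2} = \frac{1}{4}.
\]
For each deterministic $T$ in the support of $\mathcal{T}$, let $T^{-1}$ be the tree on $n-1$ variables obtained by hard-wiring $x_1 = -1$ (i.e.\ at every node querying $x_1$, replace it by its $-1$ child). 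Then $T^{-1}$ computes $\mathsf{Par}(x_2,\ldots,x_n)$ with the same error as $T$ on inputs with $x_1 = -1$, and its expected depth w.r.t.\ the uniform distribution $\mu'$ on $\bits^{n-1}$ is at most the expected depth of $T$ conditioned on $\bx_1 = -1$ (since the restriction only removes queries). Averaging over $\bT \sim \mathcal{T}$ yields a randomized tree $\mathcal{T}^{-1}$ computing $\mathsf{Par}$ with error at most $1/4$ whose expected depth bounds the expected depth of $\mathcal{T}$ conditioned on $\bx_1 = -1$ from below.

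The argument is then closed by quoting the parity lower bound established in part~(1) of the proof of \Cref{claim:linear-dependence}: $\overline{\mathrm{Depth}}^{\mu'}(\mathsf{Par}, 1/4) \geq (n-1)/2$ (itself an immediate consequence of \Cref{prop:par-query-all}, since any leaf at depth $< n-1$ contributes error $1/2$). Combining,
\[
\overline{\mathrm{Depth}}^{\mu}(\mathcal{T}) \;\geq\; \tfrac{1}{2} \cdot \overline{\mathrm{Depth}}^{\mu'}(\mathcal{T}^{-1}) \;\geq\; \tfrac{1}{2} \cdot \tfrac{n-1}{2} \;=\; \Omega(n).
\]
The only real subtlety is the bookkeeping step of replacing $T$ with $T^{-1}$ while preserving the desired error/depth inequalities; everything else is either a direct calculation of bias or a black-box invocation of the parity bound already proved earlier in the paper.
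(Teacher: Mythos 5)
Your proof is correct and takes essentially the same approach as the paper: the same biased-parity function, the same bias calculation showing the constant predictor achieves error $1/4$, and the same parity hardness (via \Cref{prop:par-query-all}) to force $\Omega(n)$ expected depth. The only cosmetic difference is that you package the lower bound as a restriction of $T$ to the subcube $x_1 = -1$ followed by a black-box invocation of the parity bound from part~(1) of \Cref{claim:linear-dependence}, whereas the paper reasons directly about the leaves of $T$ at depth less than $n-1$; the two arguments are equivalent and give the same $\tfrac{n-1}{4}$ bound.
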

Since any function on $n$ bits can be computed exactly using $n$, the $f$ in the above claim requires essentially the maximum number of queries to be computed to error $1/8$ despite requiring no queries to be computed to error $1/4$.
\begin{proof}
    We first define $f$: If $x_1 = 0$, then $f(x) = 0$. Otherwise, $f(x)$ is the parity of the remaining $n-1$ bits of $x$.

    Note that,
    \begin{equation*}
        \Pr_{\bx \sim \mu}[f(\bx) = 0] = \frac{1}{2} \cdot \paren*{\Pr[f(\bx) = 0 \mid \bx_1 = 0] + \Pr[f(\bx) = 0 \mid \bx_1 = 1]} = \frac{3}{4}.
    \end{equation*}
    Therefore, the $0$ query algorithm that simply outputs $0$ has an error of only $1/4$ on $f$. 
    
    It only remains to prove that $\overline{\mathrm{Depth}}^{\mu}(f,1/10) \geq \Omega(n)$. Consider any (potentially randomized) $T:\bits^n \to \bits$ and leaf $\ell$ of $T$ at depth strictly less than $n-1$. By \Cref{prop:par-query-all}
    \begin{equation*}
        \Pr_{\bx \sim \mu}[T(\bx) \neq f(\bx) \mid \bx \text{ reaches } \ell, \bx_1 = 1] = 1/2.
    \end{equation*}
    Let $p$ be the probability that $T(\bx)$ queries a leaf of depth at least $n-1$ given that $\bx_1 = 1$. The above allows us to conclude that
    \begin{equation*}
        \Pr_{\bx \sim \mu}[T(\bx) \neq f(\bx)] \geq \frac{1}{2} \cdot \Pr[T(\bx) \neq f(\bx)\mid \bx_1 = 1] \geq \frac{1}{4} \cdot p. 
    \end{equation*}
    Therefore, if $T$ has error at most $1/8$, then $p$ must be at least $1/2$, which shows that $\overline{\mathrm{Depth}}^{\mu}(f,1/8) \geq \frac{n-1}{4}$.
\end{proof}

\end{document}